\documentclass[runningheads]{llncs}

\usepackage{wrapfig}
\usepackage{tikz}
\usetikzlibrary{automata,positioning}
\usepackage{graphicx}
\usepackage{color}
\usepackage{program}
\usepackage{amsmath, multirow}
\usepackage{mathtools}
\usepackage{extpfeil}
\usepackage{enumerate}
\usepackage{proof}
\usepackage[hyphens]{url}
\usepackage{hyperref}
\hypersetup{breaklinks=true}
\usepackage{breakurl}
\usepackage[normalem]{ulem}
\usepackage{xcolor, colortbl}
\usepackage{algorithm}
\usepackage{algorithmic, adjustbox, caption}
\usepackage{comment}
\usepackage{xspace}
\usepackage{listingspatched}
\usepackage{makecell}
\usepackage{pifont}% http://ctan.org/pkg/pifont
\usepackage{tabu}
\usepackage{tabularx}
\usepackage{multirow}
\usepackage{colortbl}
\usepackage{dcolumn}
\usepackage{booktabs}
\usepackage{subcaption}
\usepackage[section]{placeins}
\usepackage{marvosym}
\usepackage{centernot}
% \lstset{%
% 	language=,
% 	basicstyle=\ttfamily\small,
% 	morekeywords={invariant,wrap,unwrap,maintains,ensures,writes,wrapped,requires,assert,reads,assume},
% 	tabsize=2,
% 	boxpos=b,
% 	captionpos=b,
% }

\lstdefinestyle{bpl}{
    language=C,
    basicstyle=\ttfamily,
    numberblanklines=true,
    columns=fixed,
    aboveskip=2pt,
    belowskip=1pt,
    lineskip=0pt,
    %numbers=left,
    %numberstyle=\tiny,
    numberfirstline=true,
    %firstnumber=1,
    xleftmargin=15pt,
    morekeywords={assert,assume,havoc,var,call,procedure,bool},
    breaklines=true,
    escapeinside={(*@}{@*)},
}
\definecolor{light-gray}{gray}{0.80}

\lstdefinestyle{lvir}{
	basicstyle=\ttfamily,
	language=,
  keywordstyle=\bfseries\ttfamily,
  keywords={do,while,if,goto,else,return}
}

\lstdefinestyle{boldinline}{
	basicstyle=\small\ttfamily\bfseries,
	language=,
  keywordstyle=,
}

\lstdefinestyle{inline}{
	basicstyle=\small\ttfamily,
	language=,
	keywordstyle=,
}

\overfullrule=1mm

%\titleformat*{\subsection}{\bfseries}
%\titlespacing\subsection{0pt}{12pt plus 3pt minus 2pt}{6pt plus 2pt minus 2pt}

%
%

\newcommand\ie{{\it i.e.,}\xspace}
\newcommand\eg{{\it e.g.,}\xspace}

\definecolor{darkgreen}{rgb}{0.0, 0.5, 0.0}

\newcommand\GG{\ensuremath{\square}}
\newcommand\FF{\ensuremath{\lozenge}}

\definecolor{codegreen}{rgb}{0,0.6,0}
\definecolor{codegray}{rgb}{0.5,0.5,0.5}
\definecolor{codepurple}{rgb}{0.58,0,0.82}
% \definecolor{backcolour}{rgb}{0.99,0.99,0.99}
\definecolor{backcolour}{rgb}{1.0,1.0,1.0}

\lstdefinestyle{pstyle}{
    backgroundcolor=\color{backcolour},
    commentstyle=\color{codegreen},
    keywordstyle=\color{magenta},
    numberstyle=\tiny\color{codegray},
    stringstyle=\color{codepurple},
    basicstyle=\ttfamily\footnotesize,
    breakatwhitespace=false,
    breaklines=true,
    captionpos=b,
    keepspaces=true,
    numbers=left,
    numbersep=5pt,
    showspaces=false,
    showstringspaces=false,
    showtabs=false,
    tabsize=2,
        mathescape=true,
        escapechar=|
}

\lstset{style=pstyle}

\newcommand\code[1]{\lstinline|#1|}

\newcommand\aut{\mathcal{A}}

\newcommand\cfarightarrow[1]{\xtwoheadrightarrow{#1}}

\newcommand{\hurl}[1]{\href{https://#1}{\path{#1}}}

%%%%%%%%%%%% Tools
\newcommand\definetool[2]{\newcommand{#1}{\texorpdfstring{{\scshape #2}}{#2}\xspace}}
\definetool{\ultimate}     {Ultimate}
\definetool{\ultimateBB}     {UltimateBwB}
\definetool{\ltlautomizer} {Ultimate LTLAutomizer}
\definetool{\bautomizer}   {Ultimate Büchi Automizer}
\definetool{\automizer}    {Ultimate Automizer}
\definetool{\taipan}       {Ultimate Taipan}
\definetool{\kojak}        {Ultimate Kojak}
\definetool{\lassoranker}  {Ultimate LassoRanker}
\definetool{\autlib}	     {Ultimate Automata Library}
\definetool{\reqanalyzer}  {Ultimate ReqAnalyzer}
\definetool{\sautomizer}   {Automizer}
\definetool{\staipan}      {Taipan}
\definetool{\sltlautomizer}{LTLAutomizer}
\definetool{\skojak}       {Kojak}
\definetool{\cpachecker}   {CPAchecker}
\definetool{\twols}        {2LS}
\definetool{\aprove}       {AProVE}
\definetool{\kittel}       {KITTeL}
\definetool{\llvmkittel}   {LLVM2KITTeL}

\definetool{\orion}        {Orion}
\definetool{\blast}        {Blast}
\definetool{\impulse}      {Impulse}
\definetool{\slassoranker} {LassoRanker}
\definetool{\smtinterpol}  {SMTInterpol}
\definetool{\princess}	   {Princess}
\definetool{\mathsat}	     {MathSAT}
\definetool{\zzz}          {Z3}
\definetool{\cvc}	         {CVC4}
\definetool{\boogie}	     {Boogie}
\definetool{\uppaal}	     {UPPAAL}
\definetool{\assert}	     {ASSERT}
\definetool{\btc}          {BTC Embedded Platform}
\definetool{\spear}        {SpeAR}
\definetool{\smtlib}       {SMT-LIBv2}
\definetool{\benchexec}    {BenchExec}
\definetool{\BWApx}        {BWApx}
\definetool{\mcsema}       {McSema}
\definetool{\idapro}       {IDA Pro}

\newcommand\etal{{\it et al.}}

\newenvironment{itemize*}%
  {\begin{itemize}%
    \setlength{\itemsep}{0.0in}%
    \setlength{\topsep}{0.0in}%
    \setlength{\parskip}{0.0in}}%
 {\end{itemize}}
\newenvironment{enumerate*}%
  {\begin{enumerate}%
    \setlength{\itemsep}{0.0in}%
    \setlength{\topsep}{0.0in}%
    \setlength{\parskip}{0.0in}}%
  {\end{enumerate}}

\newcommand{\rewrite}[3]{\ensuremath{#1} &\ensuremath{\vdash_E}& \ensuremath{#2} &\ensuremath{\leadsto #3}}
\newcommand{\inlrewrite}[3]{\ensuremath{#1 \vdash_E #2 \leadsto #3}}
\newcommand{\weaken}[3]{\ensuremath{#1} &\ensuremath{\vdash_S}& \ensuremath{#2} &\ensuremath{\leadsto #3}}
\newcommand{\inlweaken}[3]{\ensuremath{#1} \ensuremath{\vdash_S} \ensuremath{#2} \ensuremath{\leadsto #3}}
\newcommand{\sem}[1]{\ensuremath{[\![ #1 ]\!]}}

\newcommand\cyrus[1]{\textcolor{purple}{CL: #1}}

\newcommand\Tool{{\sc DarkSea}}

\newcommand\ignore[1]{}
\newcommand\todo[1]{\textcolor{red}{[{\bf todo}: #1]}}
\newcommand\added[1]{#1}
%\newcommand\todo[1]{}

%%%%% Eval macros 
\newcommand*\OK{\ding{52}\xspace}
\newcommand*\TOUT{\textbf{T}\xspace}
\newcommand*\OOM{\textbf{M}\xspace}
\newcommand*\NOK{\ding{55}\xspace}
\newcommand*\FAIL{\Radioactivity\xspace}
\newcommand*\UNK{{\textbf{?}\xspace}}

\newcommand{\unsound}[1]{\Lightning #1}

\newcommand{\header}[1]{\parbox{5em}{\centering #1}}
\newcommand{\rheader}[1]{\rotatebox[origin=c]{90}{#1}}
\newcommand*{\hl}{\cellcolor{black!10}}
\newcommand*{\hlg}{\cellcolor{green!20}}

\newcommand*\cirC[1]{\tikz[baseline=(char.base)]{
            \node[shape=circle,fill,inner sep=0pt, minimum size=0.5pt] (char) {\textcolor{white}{#1}};}}

% If you use the hyperref package, please uncomment the following line
% to display URLs in blue roman font according to Springer's eBook style:

\begin{document}
%
%\title{A Bitwise Abstraction for LTL Verification of Binaries}
%% \title{Source-Level Bitwise Branching\\ for Temporal Verification of Lifted Binaries} % via Bitwise Abstractions}
\title{Proving LTL Properties of Bitvector Programs and Decompiled Binaries (Extended)} % via Bitwise Abstractions}

%\titlerunning{Abbreviated paper title}
% If the paper title is too long for the running head, you can set
% an abbreviated paper title here
%
\author{Yuandong Cyrus Liu\inst{1}
Chengbin Pang\inst{1} \and
%Daniel Dietsch\inst{2}\orcidID{0000-0002-8947-5373} \and
Daniel Dietsch\inst{2} \and
Eric Koskinen\inst{1} \and
Ton-Chanh Le\inst{1} \and
Georgios Portokalidis\inst{1} \and
Jun Xu\inst{1}
}

%% \author{(Authorship omitted for double-blind reviewing.)}
%% \institute{}
\institute{Stevens Institute of Technology \and University of Freiburg}

%\author{First Author\inst{1}\orcidID{0000-1111-2222-3333} \and
%Second Author\inst{2,3}\orcidID{1111-2222-3333-4444} \and
%Third Author\inst{3}\orcidID{2222--3333-4444-5555}}
%
\authorrunning{Liu {\it et al.}}
\titlerunning{Bitwise Branching for Bitvector Programs (Extended)}
\maketitle              % typeset the header of the contribution
\setcounter{footnote}{0}
\vspace{-15pt}
\begin{abstract}
There is increasing interest in applying verification tools to programs that have bitvector operations. SMT solvers, which serve as a foundation for these tools, have thus increased support for bitvector reasoning through bit-blasting and linear arithmetic approximations.

In this paper we show that similar linear arithmetic approximation of bitvector operations can be done at the source level through transformations. Specifically, we introduce new paths that over-approximate bitvector operations with linear conditions/constraints, increasing branching but allowing us to better exploit the well-developed integer reasoning and interpolation of verification tools.  We show that, for reachability of bitvector programs, increased branching incurs negligible overhead yet, when combined with integer interpolation optimizations, enables more programs to be verified. We further show this exploitation of integer interpolation in the common case also enables competitive termination verification of bitvector programs and leads to the first effective technique for LTL verification of bitvector programs. Finally, we provide an in-depth case study of decompiled (``lifted'') binary programs, which emulate X86 execution through frequent use of bitvector operations. We present a new tool \Tool{}, the first tool capable of verifying reachability, termination, and LTL of lifted binaries. 
\end{abstract}
\section{Introduction}

There is increasing interest in using today's verification tools in domains where bitvector operations are commonplace. 
Toward this end, there has been a variety of efforts
to enable bitvector reasoning in
Satisfiability Modulo Theory (SMT) solvers, which serve as a foundation for program analysis tools.
One common strategy employed by these SMT solvers is \emph{bit-blasting}, which
translates the input bitvector formula to an equi-satisfiable propositional formula and utilizes
Boolean Satisfiability (SAT) solvers to discharge it. Another strategy is to
approximate bitvector operations with integer linear arithmetic~\cite{MathSAT:Bitvector}. 
CVC4 now employs a new approach called 
\emph{int-blasting}~\cite{CVC4:Int-blasting}, which reasons about bitvector formulas via integer nonlinear arithmetic.

Inspired by these SMT strategies, this paper explores the use of linear approximations of bitvector operations through source-level transformations\added{, toward enabling Termination/LTL verification of bitvector programs}. Our \emph{bitwise branching} introduces new conditional, linear arithmetic paths that over-approximate many but not all bitvector behaviors. These paths cover the common cases and, in the remaining cases, other paths fall back on the exact bitvector behavior. As a result, in the common case, the reasoning burden is shifted to linear arithmetic conditions/constraints, a domain more suitable to 
today's automated \added{Termination/LTL} techniques.
\added{Bitwise branching can be combined with various tools, making it an appealing general strategy. We at first chose to implement bitwise branching within \ultimate~\cite{heizmann_ultimate_nodate} source code (during the C-to-Boogie~\cite{boogiePrograms} translation) so that we could compare against unmodified Ultimate, which is already one of the more effective Termination/LTL verifiers. Furthermore, to our knowledge other tools don’t allow one to flip a switch to enable their own bit-precise analysis (i.e., CBMC’s Bitblasting or \cpachecker's FixedSizeBitVectors theory) or disable that analysis, abstracting with integers. We needed such a switch to evaluate bitwise branching. Other tools that employ non-bitprecise techniques simply report "Unknown" as soon as they encounter bit operations.}
We created rules for expressions and assignment statements, and implemented them as a source translation, \added{it is incorporated into \ultimate}.

We first examine the impact of bitwise branching on reachability and experimentally demonstrate that the translation imposes negligible overhead (from introducing additional paths to verify), yet allows existing tools to verifying more bitvector programs. Specifically, we first prepared {26} new bitvector reachability benchmarks, including examples drawn from Sean Anderson's ``BitHacks'' repository\footnote{\url{https://graphics.stanford.edu/~seander/bithacks.html}}, which use bitvector operations for various purposes. 
Without bitwise branching, \ultimate's default setting (Z3 and SMTInterpol) is only able to verify 2 of the 26 benchmarks.
We show that bitwise branching allows us to verify these benchmarks with comparable performance with existing tools
across a variety of back-end SMT solvers(\mathsat, Z3, CVC4, SMTInterpol).
We also show that bitwise branching is comparable in performance (both time and problems solved) with Z3.

The ability to use integer interpolation in the common case has far-reaching consequences, which we explore in the remainder of the paper. 
In Sec.~\ref{sec:termination} we show that,
for bitwise termination benchmarks, bitwise branching improves \ultimate\ and is competitive with other tools that support termination of bitvector programs
(\eg~\aprove, \kittel, \cpachecker).
\added{SV-COPM has a collection of benchmarks for various verification tasks, however, most SV-COMP benchmarks require little or no bitvector reasoning and they have no bitvector operations in them whatsoever. Others have bitvector operations, but those operations are not relevant to the property and existing tools abstract them away. Since the SV-COMP benchmarks do not include examples targeted to bitvector termination or bitvector LTL,}
we created new benchmarks by extending examples from the SV-COMP termination category~\cite{svcompTerm:online} and submitted our benchmarks to SV-COMP repository, as well as the the \aprove{} bitvector benchmarks \cite{aprove-bench}.

More notably, our work leads to one of the first tools for verifying temporal logic (LTL) properties of bitvector programs. To our knowledge, the only existing tool is \ultimate, and we show that bitwise branching  improves \ultimate's ability to verify LTL from merely 3 examples to a total of 59 new LTL benchmarks (out of a total of 67 benchmarks), adapted from \ultimate's LTL repository \cite{ultimate-ltl} and the BitHacks repository.

\emph{Case study: temporal verification of lifted binaries.}
In Sec.~\ref{sec:casestudy} we explore how bitwise branching can be used as part of a novel strategy for verifying decompiled (``lifted'') binaries.
Lifted binaries have lost their source 
data-types and instead emulate the 
behavior of the architecture with extensive use of bitvector operations. 
%Lifted binaries make heavy use of bitwise operations as they emulate hardware comparisons. 
%We report our experience developing a new tool called \Tool, which encompasses \mcsema, \ultimateBB, and a series of translations that were necessary to make lifted binaries more amenable to verification.
% comparisons of data whose types were lost during compilation. 
%
%We observe that these comparisons often involve operand expressions with certain value ranges,  results with certain value ranges, or combinations of the two.
We developed a new tool called \Tool, built on top of our \ultimate{}-based bitwise branching, as well as \idapro{} and \mcsema.
Although these decompilation tools generate IR/C programs and today's verification tools do parse C programs, we also describe some critical translations that were needed to make the output of \mcsema{} suitable for verification (rather than re-compilation).

%and consists of
%two main components: the aforementioned translation on lifted LLVM IR (generated by \mcsema) to FABE format and an extension to \ultimate for bitwise branching. 
%We also developed and/or integrated several other necessary components, including slicing, simplification, and instrumentation into a pipeline along with \idapro, 
%
We experimentally validated our work and show that \Tool{} is the first tool for verifying temporal properties of lifted binaries. 
\Tool{} is able to prove or disprove LTL properties of 8 lifted binaries.
% , including the aforementioned three GCC bugs 
The most comparable alternative is \ultimate{}, which cannot prove any of them without \Tool's translations, and can only verify 6 of them without bitwise branching.

{\bf Contributions.} In summary, our contributions are: 
\vspace{-5pt}
\begin{itemize*}

\item (Sec.~\ref{sec:bitwise}) Bitwise branching for introducing paths with linear approximations.

\item (Sec.~\ref{sec:reachability})
An evaluation showing that it allows one to prove reachability of more bitvector programs, with negligible overhead.
% We implement our technique in a branch of \ultimate, and show that it improves performance under a variety of SMT solvers (\zzz, \cvc, \smtinterpol, \mathsat) even though those solvers internally also use linear approximations of bitvector operations.

\item (Sec.~\ref{sec:termination}) An evaluation showing competitive performance on termination, and the first effective technique
for LTL of bitvector programs.
% We next demonstrate that bitwise branching leads to 
%(i) a competitive technique (w.r.t. \aprove, \kittel, \cpachecker) for proving termination of bitvector programs
%and (ii) the first effective technique for verifying LTL properties of bitvector programs.

\item (Sec.~\ref{sec:casestudy}) A case study and new tool called \Tool, the first temporal verification technique for 
decompiled (lifted) binaries.

\item New suites of bitvector benchmarks for reachability (\added{23}), termination (\added{31}), LTL (\added{41}) and lifted binaries (8).
% \red{but only 8 in fig}
% These benchmarks can be found in the supplemental materials~\cite{supplement}.

\end{itemize*}

\noindent
We conclude with related work (Sec.~\ref{sec:relwork}).
%\smallskip
%\noindent
%% \added{Note: Our code will be released publicly and our benchmarks will be submitted to SV-COMP. They are currently available in the supplemental materials~\cite{supplement}.}
\added{All code, proofs and benchmarks are available online\footnote{\hurl{github.com/cyruliu/darksea}}.}

\section{Motivating Examples}
\label{sec:motiv}
\begin{center}
\begin{tabular}{l|l|l}
\hline
{\bf (1) Reachability} & {\bf (2) Termination} & {\bf (3) LTL} $\varphi = \GG(\FF(n < 0))$ \\
\hline
\hline
\lstset{numbers=none}
\begin{lstlisting}[language=C,basicstyle=\tt\scriptsize]
int r, s, x;
while (x>0){
  s = x >> 31;
  x--;
  r = x + (s&(1-s));
  if (r<0) error();
}
\end{lstlisting} &
\lstset{numbers=none}
\begin{lstlisting}[language=C,basicstyle=\tt\scriptsize]
a = *; assume(a>0);
while (x>0){
  a--;
  x = x & a;
}
\end{lstlisting} &
%\begin{lstlisting}[language=C,basicstyle=\tt\scriptsize]
%int v, parity = 0;
%if (v < 0) v = -v;
%while (v > 0) {
%  parity++;
%  v = v & (v - 1);
%}
%if (parity > 32)
%  error();
%\end{lstlisting} &
\lstset{numbers=none}
\lstinputlisting[language=C,basicstyle=\tt\scriptsize,escapechar=@]{lst/ltl.c}
%\begin{lstlisting}[language=C]
%while(1){
%  y = 1; x = *;
%  while (x>0) {
%    x--;
%    if (x <= 1)
%      y = 0;
%} }
%\end{lstlisting}
\\
\hline
\code{and\_reach1.c}  & \code{and-01.c} & \code{or_loop3.c}   \\
\hline
\end{tabular}
\end{center}
\noindent
We will refer to the above bitvector programs throughout the paper.
To prove \texttt{error} unreachable in the Example {\bf (1)}, 
a verifier must be able to reason about the bitvector \lstinline|>>| and \lstinline|&| operations. Specifically, it must be able to conclude that expression \code{s&(1-s)} is always positive (so  \code{r} cannot be negative) which also depends on the earlier \code{x>>31} expression.
We will use this example to explain our work in Sec.~\ref{sec:bitwise}, and compare performance of {\ultimate} using  state-of-the-art SMT solvers, with and without bitwise branching.
%In Sec.~\ref{sec:reachability} we will describe our implementation inside \ultimate{} (with \mathsat{}) and show that our technique allows us to prove this example in  0.387s as opposed to 0.740s. \todo{remove results prose}
%Similar speedups are found when using \cvc or \zzz.

\added{We will see that the key benefits of bitwise branching arise when concerned with termination and LTL.}
Example {\bf (2)} involves a simple loop, in which \code{a} is decremented, but the loop condition is on variable \code{x}, whose value is a bitvector expression over \code{a}. Today's tools for Termination of bitvector programs struggle with this example:
\aprove, \cpachecker{} and \ultimate\ report unknown and
\kittel\ and
\twols\ timeout after 900s (Apx.~\ref{apx:details:ltl:term:source}).
%\added{
Critical to verifying termination of this program are (1) proving the invariant $ \code{x} > 0 \wedge \code{a} > 0$ on Line 3 within the body of the loop and (2) synthesizing a rank function. To prove the invariant, tools must show that it holds after a step of the loop's transition relation
$T = x {>} 0 \wedge a' {=} a {-} 1 \wedge x' {=} x \code{\&} a'$, which requires reasoning about the bitwise-\code{\&} operation because if we simply treat the \code{\&} as an uninterpreted function, 
$\mathcal{I} \wedge T \wedge x' {>} 0 \centernot\implies \mathcal{I'}$.
%}

The bitwise branching strategy we describe in this paper helps the verifier infer these invariants (and later synthesize rank functions) by transforming
\begin{wrapfigure}[7]{r}[4pt]{5cm}
\vspace{-28pt}
\begin{lstlisting}[language=C,basicstyle=\tt\scriptsize,escapeinside={(*@}{@*)}]
  a = *; assume(a > 0);
  while (x > 0) {
    (*@\textcolor{blue}{ \{ x > 0 $\wedge$ a > 0 \} }@*)
    a--;
    (*@\colorbox{light-gray}{if (x >= 0 \&\& a >= 0)} @*)
    (*@\colorbox{light-gray}{then \{ x = *; assume(x <= a); \}}@*)
    (*@\colorbox{light-gray}{else \{}@*)x = x & a;(*@\colorbox{light-gray}{\}}@*)
  }
\end{lstlisting}
\end{wrapfigure}
the bitvector assignment to \code{x} into linear constraint \code{x<=a}, but only under the condition that \code{x>=0} and \code{a>=0}. 
That is, bitwise branching translates the loop in Example {\bf (2)} as depicted in the gray box to the right.
This transformation changes the transition relation of the loop body from $T$ (the original program) to $T'$:
\[
T' =  x {>} 0 \wedge a' {=} a {-} 1
          \wedge ((x {\geq} 0 \wedge a' {\geq} 0 \wedge x' {\leq} a') 
        \vee (\neg(x {\geq} 0 \wedge a' {\geq} 0) \wedge x' {=} x \code{\&} a'))\]
Importantly, when $\mathcal{I}$ holds, the else branch with the \code{\&} is infeasible, and thus we can treat the \code{\&} as an uninterpreted function and yet still prove that 
$\mathcal{I} \wedge T' \wedge x' {>} 0 \implies \mathcal{I'}$.
With the proof of $\mathcal{I}$ a tool can then move to the next step and synthesizing a ranking function $\mathcal{R}(x, a)$ that satisfies
$\mathcal{I} \wedge T' \implies \mathcal{R}(x, a) {\geq} 0 \wedge \mathcal{R}(x, a) {>} \mathcal{R}(x', a')$, namely,
$\mathcal{R}(x, a) = a$.

Bitwise branching also enables LTL verification of bitvector programs.
We examine the behavior of programs
such as Example {\bf (3)} above, with LTL property $\GG(\FF(n < 0))$.
The state of the art program verifier for LTL is \ultimate, but \ultimate{} cannot verify this program due to the bitvector operations. (\ultimate's internal overapproximation is too imprecise so it returns Unknown.) 
In Sec.~\ref{sec:termination} we show that with bitwise branching, our implementation
can prove this property of this program in 8.04s.
%1.82s.

\medskip
\noindent
{\bf  Case study: Decompiled binary programs.}
\added{In recent years many tools have been developed for decompiling (or ``lifting'') binaries into a source code format~\cite{Brumley2011,Myreen2012,DBLP:conf/eurosys/AltinayNKRZDGNV20,dinaburg2014mcsema,de_boer_sound_2020}. The resulting code, however, has long lost the original source abstractions and instead emulates the hardware, making frequent use of bitvector operations.  These challenging programs are beyond the capabilities of existing tools for LTL verification, making them an interesting case study.}

Consider the 
(source) program  shown to the right. 
This 
program, which does
not contain any bitvector operations, is taken 
from the \ultimate{}
repository\footnote{\url{http://github.com/ultimate-pa/ultimate/blob/dev/trunk/examples/LTL/simple/PotentialMinimizeSEVPABug.c}}. 
\begin{wrapfigure}[7]{r}[6pt]{4.3cm}
\vspace*{-5ex}
\begin{lstlisting}[language=C,numbers=none]
while(1) {
  y = 1; x = *;
  while (x>0) {
    x--;
    if (x <= 1) 
      y = 0; } } }
\end{lstlisting}
\end{wrapfigure}
Some existing techniques and tools (\eg~\cite{cook_koskinen_making,ultimate-ltl}) can prove that the
LTL property 
% ([] ( AP(x > 0) ==> <>AP(y==0)));
$\GG( \code{x} > 0 \;\Rightarrow\; \FF (\code{y} = 0))$ holds. However, after the program is compiled (with \texttt{gcc}) and then disassembled and lifted (with IDPro and \mcsema), the resulting code has many bitvector operations. 
\added{The resulting lifted code is quite non-trivial (full version in  Apx.~\ref{apx:lifted-potentialminimize}) and required substantial engineering efforts just to parse and analyze with existing verifiers (see Sec.~\ref{sec:casestudy}). Let's first focus on the bitvector complexities; here is a fragment of the lifted IR (in C for readability):}
% The following is an excerpt of the decompiled program (the full version is shown in Apx.~\ref{apx:lifted-potentialminimize}).\added{The code excerpt is essentially lifted LLVM IR generated by \mcsema, for readability, we map some of them to c constructs equally.}
%\begin{example}[Decompiled (``lifted'') version of  \code{PotentialMinimizeSEVPABug.c}]
%\label{ex:lifted}
\begin{lstlisting}[language=C,basicstyle=\tt\scriptsize]
  while(true) {
    tmp_x = load i32, i32* bitcast (%x_type* @x to i32*)
    ...
    if ( ((tmp_x >> 31) == 0) & ((tmp_x == 0) ^ true) ) |\label{ln:xpos}| {
      tmp_40 = add i32 tmp_x, -1
      store i32 tmp_40, i32* bitcast (%x_type* @x to i32*)
      tmp_xp = load i32, i32* bitcast (%x_type* @x to i32*)
      tmp_42 = tmp_xp + -1;  tmp_45 = tmp_42 >> 31;
      tmp_43 = tmp_xp + -2;  tmp_44 = tmp_43 >> 31;
      if (((((((tmp_42 != 0u)&1)) & ((((((tmp_44 == 0u)&1)) ^ ((((((tmp_44 ^ tmp_45) + tmp_45)) == 2u)&1)))&1)))&1))) {|\label{ln:xone}|
         store i32 0, i32* bitcast (%y_type* @y to i32*) 
      } |\ignore{    goto block_401159}|
    } else { break; }
  }
\end{lstlisting}
%\end{example}
%In Sec.~\ref{sec:casestudy} we describe this lifted program in more detail but, 
Roughly, 
Line~\ref{ln:xpos} corresponds to the \code{x>0} comparison, and Line~\ref{ln:xone} corresponds to the \code{x<=1} comparison. These bitvector operations, introduced to emulate the behavior of the binary,
make it challenging for existing verification tools.

We describe a new tool \Tool{} that uses bitwise branching in the context of a decompilation toolchain involving \idapro, \mcsema{} and \ultimate. The lifting performed by tools like \mcsema{} is geared toward \emph{re}compilation rather than verification, thus foiling existing tools. In Sec.~\ref{subsec:trans} we describe translations performed by \Tool{} to tailor lifted binaries for verification.
%
%As the outputs 
%of the vanilla decompilation toolchain aim for \emph{re}compilation, 
%\Tool{} performs a series of translations that tailor the outputs for verification.
In Sec.~\ref{subsec:eval2}, our experimental results show that \Tool{} is the first tool capable of proving reachability, termination and LTL of lifted binaries.

\section{Preliminaries}

Our formalization is based on Boogie programs~\cite{boogiePrograms}, denoted $P$. Our implementations parse input source C programs (or binaries recompiled to C) that may have bitvector operations. These programs are then translated into Boogie programs, in which bitvector operations are represented as uninterpreted functions. Below is an abbreviated expression syntax of $P$:
%\begin{figure}[h]
\def\uninter{\mathit{UninterpFn}}
\def\bitand{\mathit{bwAnd}}
\def\bitor{\mathit{bwOr}}
\def\bitxor{\mathit{bwXor}}
\def\shiftleft{\mathit{bwShL}}
\def\shiftright{\mathit{bwShR}}
\def\tilde{\mathit{bwCompl}}
\def\cfgor{\enspace|\enspace}
\def\cfgdef{\Coloneqq}
\def\id{\mathit{Id}}
\def\lit{\mathit{Lit}}
\def\str{\mathit{String}}
\def\prog{\mathit{Prog}}
\def\decl{\mathit{Decl}}
\def\tdecl{\mathit{TypeDecl}}
\def\cdecl{\mathit{ConstDecl}}
\def\fdecl{\mathit{FunDecl}}
\def\adecl{\mathit{AxiomDecl}}
\def\vdecl{\mathit{VarDecl}}
\def\pdecl{\mathit{ProcDecl}}
\def\idecl{\mathit{ImplDecl}}
\def\type{\mathit{Type}}
\def\typeargs{\mathit{TypeArgs}}
\def\order{\mathit{OrderSpec}}
\def\where{\mathit{WhereSpec}}
\def\spec{\mathit{Spec}}
\def\varlist{\mathit{VarList}} %idstype
\def\varlistwhere{\mathit{VarListWhere}} %idstypewhere
\def\attr{\mathit{Attr}}
\def\attrarg{\mathit{AttrArg}}
\def\fsig{\mathit{FunSig}}
\def\psig{\mathit{ProcSig}}
\def\isig{\mathit{ISig}}
\def\foutp{\mathit{FOutParams}}
\def\poutp{\mathit{OutParams}}
\def\ioutp{\mathit{IOutParams}}
\def\body{\mathit{Body}}
\def\stmt{\mathit{Stmt}}
\def\lhs{\mathit{Lhs}}
\def\clhs{\mathit{CallLhs}}
\def\ifelse{\mathit{Else}}
\def\loopinv{\mathit{LoopInv}}
\def\nexpr{\mathit{NondetExpr}}
\def\expr{\mathit{Expr}}
\def\unop{\mathit{UnOp}}
\def\binop{\mathit{BinOp}}
\def\qop{\mathit{QOp}}
\def\trigattr{\mathit{TrigAttr}}
\def\plus#1{#1(\lstinline[style=boldinline]$,$~#1)^*}
\begingroup\small
\lstMakeShortInline[style=boldinline]`
\[
\begin{array}{rcl}
	e		    & \cfgdef & \lit \cfgor \id \cfgor ... \cfgor \uninter \\
	\binop		    & \cfgdef & `+` \cfgor `-` \cfgor `*` \cfgor `/` \cfgor `%` \cfgor `&&` \cfgor `||` \cfgor `==>` \cfgor `<==>`
                              %% \cfgor `==` \cfgor `!=` \cfgor `<`   \cfgor `<=` \cfgor `>` \cfgor `>=` \cfgor `++`  \cfgor `<:` \\
                              \cfgor ..., \;\;
	\unop		     \cfgdef  `-`  \cfgor `!` \\
 	                    &  &  ... \\			
	\uninter	    & \cfgdef & \bitand \cfgor \bitor \cfgor \bitxor \cfgor
                           \shiftleft \cfgor \shiftright \cfgor \tilde \\			
\end{array}
\]
\lstDeleteShortInline`
\endgroup

%\emph{Bitwise operators as uninterpreted functions in Integer mode}
%\label{fig:boogie_expr}
%\end{figure}
Ultimate is an automaton based program analysis framework~\cite{hutchison_software_2013}, it's internal representation is boogie program with customized syntax\footnote{\hurl{https://github.com/ultimate-pa/ultimate/wiki/Boogie}}, Figure~\ref{fig:boogie_stmt} shows boogie statement syntax implemented in Ultimate, automatic abstraction techniques(such as predicate analysis~\cite{cegar} and interpolation~\cite{interpolants}) are employed on building control-flow automata.
\begin{figure}[h]
\vspace{-20pt}
  \centering
\input{fig/fig_ult_boogie_bitwise_stmt}
\vspace{-20pt}
\caption{Boogie statement syntax in Ultimate framework.}
\label{fig:boogie_stmt}
\vspace{-20pt}
\end{figure}
\begin{definition}[Control-flow automaton]
A (deterministic) \emph{control flow automaton} (CFA)~\cite{Henzinger2002} is a
tuple $\aut = \langle Q, q_0, X, s, \cfarightarrow{\,} \rangle$ where $Q$ is a
finite set of control locations and $q_0$ is the initial control location, $X$
is a finite sets of typed variables, $s$ is the loop/branch-free statement language (as defined
earlier) and $\cfarightarrow{\,}\subseteq Q \times s \times Q$ is a finite set
of labeled edges.
\end{definition}
We assume conditional branching has been transformed to
non-deterministic branching: \code{if * then \{assume($b$);$s_1$\} else \{assume(!$b$);$s_2$\}}.
As discussed later, \ultimate{} (used in our implementation)  has two modes: ``bitvector mode,'' in which these uninterpreted expressions are translated into SMT bitvector sorts and ``integer mode,''  in which they remain uninterpreted.
%Let $e$ be an expression in a Boogie program $P$ and $s$ be a statement.
% and a Boogie program $P$, that has expressions $e$ and statements $s$. 

%
%We assume that \red{working here}
%In both binary lifted code and C source code, we transform source program to Boogie program~\cite{boogiePrograms}, which is an universal representation in program analysis framework \ultimate~\cite{heizmann_ultimate_nodate}, our branching rules introduced in latter section are performed in this transformation. \ultimate has its own version of Boogie syntax, notably, the bitwise operators are not defined in boogie syntax, they are translated into uninterpreted function calls instead as showed in figure ~\ref{fig:boogie_expr}, these bitwise functions stay uninterpreted when the analysis is performed in integer mode, and they are translated into SMT bitvector sort when the analysis is in bitvector mode.   

%We refer bitvector programs to C source programs that widely use bitwise operators, with the branching rules, we are able to overapproximate bitvector programs to integer domain, now we introduce some notations we use for the rest of paper.
For the semantics, we assume a state space $\Sigma : \mathit{Var} \rightarrow \mathit{Val}$, mapping variables to values.
We let $\sem{e} : \Sigma \rightarrow \mathit{Val}$ and
$\sem{s} : \Sigma \rightarrow \mathcal{P}(\Sigma)$ be the semantics of expressions and statements, respectively, and
%which rule to apply for every statement $s$ in $P$.
% and let $T_E\{e\}$ denote the result of applying $T_E$ to $e$ and all of its subexpressions according to $T_E$.
%Let $T_S\{T_E\{P\}\}$ denote the result of applying $T_E$ to all expressions and then $T_S$ to all assignment statements in program $P$. 
$\sem{P}$ denotes traces of $P$.

\section{Bitwise-branching}\label{sec:bitwise}

\newcommand\rr{\code{r}}
\newcommand\aaa{\code{a}}
\newcommand\bb{\code{b}}

We build our \emph{bitwise-branching} technique on the known strategy of transforming bitvector 
operations into integer approximations~\cite{MathSAT:Bitvector,CVC4:Int-blasting} but explore a new direction:
source-level transformations to introduce new conditional paths that approximate many (but not all) behaviors of a bitvector program. 
These new paths through the program have linear input conditions and linear output constraints and frequently 
cover all of the program's behavior (with respect to the goal property), but otherwise fall back on the original 
bitvector behavior when none of the input conditions hold.
We provide two sets of bitwise-branching rules:
% for program transformation: 

\newcommand\ruleLabel[1]{}
\begin{figure}[!t]\centering
\subfloat[Rewriting rules for arithmetic expressions.]{\centering
  \label{fig:rewriting_rules}
  \footnotesize
\begin{tabular}{rllll}
%\multicolumn{4}{c}{Bitwise-branching expression rules}\\
%\hline
%\multicolumn{2}{c}{Expression bitwise-branching rules}\\
%\hline

  \rewrite{e_1=0}{e_1 \texttt{\&} e_2}{0} & \ruleLabel{R-And-0}\\

  \rewrite{(e_1=0 \vee e_1=1) \wedge e_2=1}{e_1 \texttt{\&} e_2}{e_1} & \ruleLabel{R-And-1}\\

  \rewrite{(e_1=0 \vee e_1=1) \wedge (e_2=0 \vee e_2=1)}{e_1 \texttt{\&} e_2}{e_1 \texttt{\&\&} e_2} & \ruleLabel{R-And-LOG}\\
 
  \rewrite{e_1 \geq 0 \wedge e_2 = 1}{e_1 \texttt{\&} e_2}{e_1 \texttt{\%} 2} & \ruleLabel{R-And-LBS}\\
  
  \rewrite{e_2=0}{e_1 \texttt{|} e_2}{e_1} & \ruleLabel{R-Or-0}\\
  
  \rewrite{(e_1=0 \vee e_1=1) \wedge e_2=1}{e_1 \texttt{|} e_2}{1} & \ruleLabel{R-Or-1} \\

  \rewrite{e_2=0}{e_1 \texttt{\^{}} e_2}{e_1} & \ruleLabel{R-Xor-0}\\
 
  \rewrite{e_1=e_2=0 \vee e_1=e_2=1}{e_1 \texttt{\^{}} e_2}{0} & \ruleLabel{R-Xor-Eq}\\
  
  \rewrite{(e_1=1 \wedge e_2=0) \vee (e_1=0 \wedge e_2=1)}{e_1 \texttt{\^{}} e_2}{1} & \ruleLabel{R-Xor-Neq}\\

  \rewrite{e_1 \geq 0 \wedge e_2 = \texttt{CHAR\_BIT * sizeof}(e_1) - 1}{e_1 \texttt{>>} e_2}{0} & \ruleLabel{R-RightShift-Pos}\\
  
  \rewrite{e_1 < 0 \wedge e_2 = \texttt{CHAR\_BIT * sizeof}(e_1) - 1}{e_1 \texttt{>>} e_2}{-1} & \ruleLabel{R-RightShift-Neg}

\end{tabular}
 }
\vspace{5pt}
  \hrule
\vspace{5pt}
  \subfloat[Weakening rules for relational expressions and assignments. $\texttt{op}_{le} \in \{\texttt{<,<=,==,:=}\}$, $\texttt{op}_{ge} \in \{\texttt{>,>=,==,:=}\}$, and $\texttt{op}_{eq} \in \{\texttt{==,:=}\}$]{
\label{fig:weakening_rules}
  \footnotesize
\begin{tabular}{rllll}
%\multicolumn{4}{c}{Bitwise-branching weakening rules}\\
%\hline
  \weaken{e_1 \geq 0 \wedge e_2 \geq 0}{r ~\texttt{op}_{le}~ e_1 \texttt{\&} e_2}{r \texttt{<=} e_1 ~\texttt{\&\&}~ r \texttt{<=} e_2} & \ruleLabel{W-And-Pos}\\

  \weaken{e_1 < 0 \wedge e_2 < 0}{r ~\texttt{op}_{le}~ e_1 \texttt{\&} e_2}{r \texttt{<=} e_1 ~\texttt{\&\&}~ r \texttt{<=} e_2 ~\texttt{\&\&}~ r \texttt{<} 0} & \ruleLabel{W-And-Neg}\\  
  
  \weaken{e_1 \geq 0 \wedge e_2 < 0}{r ~\texttt{op}_{eq}~ e_1 \texttt{\&} e_2}{0 \texttt{<=} r ~\texttt{\&\&}~ r \texttt{<=} e_1} & \ruleLabel{W-And-Mix}\\ 
  
  \weaken{(e_1=0 \vee e_1=1) \wedge (e_2=0 \vee e_2=1)}{(e_1 \texttt{|} e_2) \texttt{==} 0}{e_1 \texttt{==} 0 \texttt{ \&\& } e_2 \texttt{==} 0} & \ruleLabel{R-Or-LOG}\\
  
  \weaken{e_1 \geq 0 \wedge \texttt{is\_const}(e_2)}{r ~\texttt{op}_{ge}~ e_1 \texttt{|} e_2}{r \texttt{>=} e_2} & \ruleLabel{W-Or-Const}\\ 
  
  \weaken{e_1 \geq 0 \wedge e_2 \geq 0}{r ~\texttt{op}_{ge}~ e_1 \texttt{|} e_2}{r \texttt{>=} e_1 \texttt{ \&\& } r \texttt{>=} e_2} & \ruleLabel{W-Or-Pos}\\
  
  \weaken{e_1 < 0 \wedge e_2 < 0}{r ~\texttt{op}_{eq}~ e_1 \texttt{|} e_2}{r \texttt{>=} e_1 \texttt{ \&\& } r \texttt{>=} e_2 \texttt{ \&\& } r \texttt{<} 0} & \ruleLabel{W-Or-Neg}\\  
  
  \weaken{e_1 \geq 0 \wedge e_2 < 0}{r ~\texttt{op}_{eq}~ e_1 \texttt{|} e_2}{e_2 \texttt{<=} r \texttt{ \&\& } r \texttt{<} 0} & \ruleLabel{W-Or-Mix}\\ 
  
  \weaken{e_1 \geq 0 \wedge e_2 \geq 0}{r ~\texttt{op}_{ge}~ e_1 \texttt{\^{}} e_2}{r \texttt{>=} 0} & \ruleLabel{W-XOr-Pos}\\
  
  \weaken{e_1 < 0 \wedge e_2 < 0}{r ~\texttt{op}_{ge}~ e_1 \texttt{\^{}} e_2}{r \texttt{>=} 0} & \ruleLabel{W-XOr-Neg}\\  
  
  \weaken{e_1 \geq 0 \wedge e_2 < 0}{r ~\texttt{op}_{le}~ e_1 \texttt{\^{}} e_2}{r \texttt{<} 0} & \ruleLabel{W-XOr-Mix}\\
  
  \weaken{e_1 \geq 0}{r ~\texttt{op}_{le}~ {\sim}e_1}{r \texttt{<} 0} & \ruleLabel{W-Cpl-Pos}\\
  
  \weaken{e_1 < 0}{r ~\texttt{op}_{ge}~ {\sim}e_1}{r \texttt{>=} 0} & \ruleLabel{W-Cpl-Neg}\\
  
\end{tabular} 
}
\vspace{5pt}
  \hrule
\vspace{5pt}
\caption{Rewriting rules. Commutative closures omitted for brevity.}
\vspace{-20pt}
\end{figure}

%\begin{itemize}
{\bf 1. Rewriting rules} of the form $\inlrewrite{\mathcal{C}}{e_{bv}}{e_{int}}$ in Fig. \ref{fig:rewriting_rules}. These rules are applied to bitwise arithmetic expressions $e_{bv}$ and specify a condition $\mathcal{C}$ for which one can use integer approximate behavior $e_{int}$ of $e_{bv}$.
In other words, rewriting rule $\inlrewrite{\mathcal{C}}{e_{bv}}{e_{int}}$ can be applied only when $\mathcal{C}$ holds and a bitwise arithmetic expression $e$ in the program structurally matches its $e_{bv}$ with a substitution $\delta$. Then, $e$ will be transformed into a conditional approximation: $\mathcal{C}\delta ~?~ e_{int}\delta : e_{\added{bv}}$.
Note that, although modulo-2 is computationally more expensive, it is often more amenable to integer reasoning strategies.
%\red{fix:} \added{Note that there is a rewriting rule we use modulo constant 2, which in general is more expensive than bitwise operation, while our goal is to transform common bitwise operations to integer arithmetic, which opens more efficient verification tasks in integer domain.} 
  For conciseness, we omitted variants that arise from commutative re-ordering of the rules (in both Figs.~\ref{fig:rewriting_rules} and~\ref{fig:weakening_rules}).

For example, consider the bitvector arithmetic expression \code{s&(1-s)} in Example {\bf (1)} of Sec.~\ref{sec:motiv}. If we apply the rewriting rule $\inlrewrite{e_1 \geq 0 \wedge e_2 = 1}{e_1 \& e_2}{e_1 \% 2}$ with the substitution ${\code{s}/e_1, \code{1-s}/e_2}$ then the expression is transformed into \code{s>=0 \&\& (1-s)==1 ? s\%2 : (s\&(1-s))}. Since \code{s} reflects the sign bit of the positive variable \code{x}, it is always 0 and the \code{if} condition is feasible. In general, we can further replace the remaining bitwise operation in the \code{else} expression with other applicable rules. There may still be executions that fall into the final catch-all case where the bitwise operation is performed. However, as we see in the subsequent sections of this paper, these case splits are nonetheless practically significant because often the final \code{else} is infeasible.

{\bf 2. Weakening rules} of the form $\inlweaken{\mathcal{C}}{s_{bv}}{s_{int}}$ are in Fig. \ref{fig:weakening_rules}. These rules are applied to relational condition expressions (\eg~from assumptions) and assignment statements $s_{bv}$, specifying an integer condition $\mathcal{C}$ and over-approximation transition constraint $s_{int}$.
When the rule is applied to a statement (as opposed to a conditional),  replacement $s_{int}$ can be implemented as \code{assume($s_{int}$)}.
  When a weakening rule $\inlweaken{\mathcal{C}}{s_{bv}}{s_{int}}$ is applied to an assignment $s$ with substitution $\delta$, the transformed statement is $\texttt{if } \mathcal{C}\delta \texttt{ assume(} s_{int}\delta \texttt{) else } s_{\added{bv}}$.
In addition, when $s_{bv}$ of a weakening rule can be matched to the condition $\texttt{c}$ in an
$\texttt{assume(c)}$ of the original program via a substitution $\delta$, then the $\texttt{assume(c)}$ statement is transformed to
$\texttt{if } \mathcal{C}\delta \texttt{ then assume(} s_{int}\delta \texttt{) else assume(c)}$.
\begin{wrapfigure}[12]{R}[6pt]{0.5\textwidth}
  \vspace{-3em}
  \begin{tabular}{cc}
    \begin{tikzpicture}[shorten >=1pt,node distance=1.5cm, on grid,auto, semithick] 
      \node (q0) [state, initial, scale=0.6, initial where=above left, initial text = {}]{$q_0$};
      \node  (q1) [state, scale = 0.6, right=5em of q0] {$q_1$};
      \path[->] 
      (q0) %% edge  [loop right] node {true} ()
      edge  node [auto] {\scriptsize{$x:=x\&a$}} (q1)
      (q1)  edge node[auto]  {} ++ (0.6, 0);
    \end{tikzpicture}
    %\hspace{1cm}
    \\\\
    \begin{tikzpicture}[shorten >=1pt,node distance=1.5cm,on grid,auto, semithick] 
      \node (q0) [state, initial, scale=0.6, initial text = {}]{$q_0$};
      \node  (q1a) [state, scale=0.6, below left=of q0] {$q_a$};
      \node  (q1b) [state, scale=0.6, below right=of q0] {$q_b$};
      \node  (q1) [state, scale=0.6, below left=of q1b] {$q_1$};
      
      \path[->]
      (q0) edge  [above left] node {\scriptsize{$\neg(x \geq 0 \wedge a \geq 0)$}} (q1a)
      edge [above right] node {\scriptsize{$x \geq 0 \wedge a \geq 0$}} (q1b)
      (q1a) edge [below left] node {\scriptsize{$x:=x\&a$}} (q1)          
      (q1b) edge [below right] node {\scriptsize{$assume(x\leq a)$}} (q1)
      (q1) edge node[auto] {} ++ (0.6, 0);
      
    \end{tikzpicture}
  \end{tabular}
  \caption{Weakening rules application in CFA (simplified for demonstration).}
  \label{fig:cfa_stmt}
  %\vspace{-20pt}
\end{wrapfigure}
Following up Example {\bf (2)} of Sec.~\ref{sec:motiv}, consider the set of statements of given program as alphabet set in CFA, Fig.~\ref{fig:cfa_stmt} shows this rule application for $x = x \& a$ in automaton level, the top is the original automaton, the bottom is the transformed automaton after bitwise branching rule applied.

%\end{itemize}
%For all the branching rules, we introduce following lemma that assist our transformation soundness proof.
\begin{lemma}[Rule correctness]
\label{lemma:rules}
For every rule $\inlrewrite{\mathcal{C}}{e}{e'}$,
$\forall \sigma.\ \mathcal{C}(\sigma) \Rightarrow \sem{e}\sigma = \sem{e'}\sigma$.
For every $\inlweaken{\mathcal{C}}{s}{s'}$,
$\forall\sigma.\ \mathcal{C}(\sigma)\Rightarrow  \sem{s}\sigma \subseteq \sem{s'}\sigma$.
\end{lemma}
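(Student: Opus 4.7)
The plan is to prove the two conjuncts of the lemma separately by exhaustive case analysis on which rule is being instantiated, exploiting the fact that every rule is a local, schematic statement about the semantics of a single bitvector operator under a guarded precondition, so no induction on program structure is required.

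For the rewriting part, I would fix an arbitrary state $\sigma$ with $\mathcal{C}(\sigma)$ and, for each rule $\inlrewrite{\mathcal{C}}{e_{bv}}{e_{int}}$ in Fig.~\ref{fig:rewriting_rules}, verify the pointwise equality $\sem{e_{bv}}\sigma = \sem{e_{int}}\sigma$ by appeal to standard two's complement bitvector identities. The low-width cases (R-And-0, R-And-1, R-And-LOG, R-Or-0, R-Or-1, R-Xor-0, R-Xor-Eq, R-Xor-Neq) collapse to a finite truth-table check on the one or two relevant bits, using that all higher bits of both operands are forced to $0$ by the precondition. R-And-LBS uses the elementary identity $x \mathbin{\&} 1 = x \bmod 2$ whenever $x \geq 0$ (so that the sign bit does not interfere). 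R-RightShift-Pos and R-RightShift-Neg use the definition of arithmetic right shift as sign-bit replication: shifting by $w-1$ positions, where $w$ is the bit-width, leaves a word whose every bit is the original sign bit, giving $0$ in the non-negative case and the all-ones pattern (i.e., $-1$ in two's complement) in the negative case.

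For the weakening part, for each rule $\inlweaken{\mathcal{C}}{s_{bv}}{s_{int}}$ in Fig.~\ref{fig:weakening_rules} I would fix $\sigma$ with $\mathcal{C}(\sigma)$, take an arbitrary $\sigma' \in \sem{s_{bv}}\sigma$, and check that $\sigma'$ satisfies the (typically weaker) relational constraint $s_{int}$. The positive-operand rules W-And-Pos and W-Or-Pos follow from the monotonicity facts $0 \le a \mathbin{\&} b \le \min(a,b)$ and $\max(a,b) \le a \mathbin{|} b$ on non-negative bitvectors; W-XOr-Pos from $a \mathbin{\hat{\,}} b \ge 0$ when both operands have sign bit $0$. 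The negative-operand rules W-And-Neg, W-Or-Neg and W-XOr-Neg invoke the dual observations about the sign bit: two operands with sign bit $1$ force the sign bit of the result of $\mathbin{\&}$ and $\mathbin{|}$ to be $1$ (so the result is $< 0$), while for $\mathbin{\hat{\,}}$ the two sign bits cancel. The mixed-sign rules W-And-Mix, W-Or-Mix and W-XOr-Mix combine these sign-bit observations with the magnitude bound coming from the non-negative operand. The complement rules W-Cpl-Pos and W-Cpl-Neg follow immediately from ${\sim}e = -e - 1$.

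The main obstacle is not any single case but ensuring uniform coverage: several rules implicitly depend on the fixed bit-width $w$ and on the exact semantics of arithmetic (as opposed to logical) right shift, and a handful of rules (the mixed-sign weakenings in particular) require careful bookkeeping of which bit positions are pinned by the precondition versus which are free. To keep the presentation tight I would introduce notation $b_i(x)$ for the $i$-th bit of a $w$-bit two's complement value of $x$, restate each conclusion as a conjunction of per-bit constraints, and discharge each such constraint by a bounded truth-table check; the sign-bit-dependent rules are done first so that the two's complement reasoning template is reusable for the remaining cases. Commutative variants are covered by symmetry, as already noted in the figure captions.
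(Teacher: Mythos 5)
Your proposal is correct, and its overall skeleton --- no induction, just an exhaustive per-rule case analysis establishing pointwise equality for the rewriting rules and set inclusion for the weakening rules --- matches what the lemma requires. Where you diverge from the paper is in how each case is discharged: the paper does not give hand proofs at all, but instead encodes every rule as a fixed-width bitvector validity query (32-bit, or 1-bit for the purely logical rules) and has Z3 certify that the negation is unsatisfiable (Apx.~\ref{apx:rules:proof}). Your route is a manual derivation from two's complement identities --- truth tables on the pinned low bits, sign-bit replication for arithmetic right shift, the monotonicity facts $0 \le a \mathbin{\&} b \le \min(a,b)$ and $a \mathbin{|} b \ge \max(a,b)$ on same-sign operands, and ${\sim}e = -e-1$. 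The trade-off is real: your argument is parametric in the bit-width $w$ and therefore strictly more general than the paper's 32-bit-only certificates, and it exposes \emph{why} each rule holds; the paper's mechanized check is less error-prone on exactly the fiddly mixed-sign and sign-bit cases you flag as the main bookkeeping burden, but proves nothing beyond the widths actually encoded. Both arguments share the same unstated premise --- that $\sem{\cdot}$ interprets program values as fixed-width two's complement words so that the bitvector and integer readings of each side agree --- which is worth making explicit whichever route you take.
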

%\begin{proof}
%% Proofs for each rule were done with Z3, details see the Appendix of the extended version~\cite{extended}%% Apx.~\ref{apx:rules:proof}
%% .
We encode each rule in Z3 script, proof details are in Appendix~\ref{apx:rules:proof}.
%\end{proof}
The rules in Fig.~\ref{fig:rewriting_rules} and Fig.~\ref{fig:weakening_rules} were developed empirically, from the 
reachability/termination/LTL benchmarks in the next sections and, especially, based on patterns found in decompiled binaries (Sec.~\ref{sec:casestudy}). We then generalized these rules to expand coverage.
%\added{As mentioned before, bitwise branching rules can help abstract away bitwise operations and transform bitvector program into an over-approximated program, which can be reasoned in integer arithmetic. Particularly, for verification tasks of termination and temporal properties, instead of synthesizing ranking function over bitvector directly, our branching rules which are motivated by our empirical study and generalized to have a broad coverage, can help verification tools prove these properties more effectively in integer domain.}

\emph{Translation algorithm.}
We implemented bitwise branching via a translation algorithm, in a fork of \ultimate.
We denote our version as \ultimateBB, and will be releasing it publicly shortly.
Our translation acts on the AST of the program,
with one method
\lstinline|$T_E$ : exp -> exp| to translate expressions 
and another method
\lstinline|$T_S$ : stmt -> stmt| to translate assignment statements, each according to the set of available rules, %% as show below:
%% For lack of space, algorithms $T_E$ and $T_S$ are given in Apx.~\ref{apx:alg}.
\added{algorithms of $T_E$ and $T_S$ see Apx.~\ref{apx:alg}}.
%% \begin{center}
%%   \input{alg}
%% \end{center}

%visiting expressions and assignment statements.
In brief, when we reach a node with a bitwise operator, 
we  recursively translate the operands,
match the current operator against our collection of rules, and
apply all matching rules to construct nested if-then-else expressions/statements.
We found that, when multiple rules matched, the order did not matter much.

Let $T_E\{e\} : e$ denote the result of applying substitutions to $e$, and similar for $T_S\{s\} : s$. We lift this to a translation on a Boogie program $P$ with $T_E\{P\}:P$ and $T_S\{P\}:P$, referring to all expressions and statements in $P$, respectively.
%% It is straight-forward to show that these rules are correct and that the translation is sound. (Details are in Apx.~\ref{apx:sound}.)
%Our rules are given in Figures~\ref{fig:rewriting_rules} and~\ref{fig:weakening_rules}, \added{and each can be thought of as a substitution $\delta$ on both the condition $\mathcal{C}$ and expression $e$, obtaining $e'$.}
%Let $D_E : e \rightarrow \added{\delta\textsf{ list}} $ be a transformation decision algorithm defining, for an expression of $e$, which rules (\ie~a list of substitutions) to apply.
%Similarly, let $D_S : s \rightarrow \delta\textsf{ list}$ decide 
%for a statement $s$ which substitutions to apply.
%Further, let $T_E\{e\} : e$ denote the result of applying substitutions to $e$ as decided by $D_E$, and similar for $T_S\{s\} : s$.
%Finally, we lift this to translations on programs $P$ with $T_E\{P\}:P$ and $T_S\{P\}:P$, referring to all expressions and statements in $P$, respectively. \red{working}
% We define translations for expressions $T_E$ and for assignment statements $T_S$, over a Boogie program $P$, with semantics $\sem{P}$. We then show the following:

\begin{theorem}[Soundness]\label{thm:soundness}
For every $P,T_E,T_S$,
$\sem{ P }\subseteq\sem{ T_S\{T_E\{P\}\} }$.
\end{theorem}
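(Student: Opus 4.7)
The plan is to peel the theorem into three layers: correctness of expression translation, correctness of statement translation, and lifting to whole-program traces. The crucial observation is that the translations only ever \emph{add} if-then-else structure whose \emph{else} branch retains the unmodified bitvector fragment, so the original behaviour is always recoverable as a special case, and Lemma~\ref{lemma:rules} guarantees that each newly introduced \emph{then} branch is either equivalent (for $T_E$) or a superset (for $T_S$) of the original.

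First I would establish an \emph{expression lemma}: for every expression $e$ and every state $\sigma$, $\sem{T_E\{e\}}\sigma = \sem{e}\sigma$. The proof is by structural induction on $e$. Leaves (literals, identifiers, uninterpreted function applications not matched by any rule) are immediate. For an inner node on which one or more rewriting rules fire with substitution $\delta$, $T_E$ produces (a nesting of) expressions of the form $\mathcal{C}\delta \mathrel{?} e_{int}\delta : e_{bv}$, where $e_{bv}$ is the recursively translated operand-refreshed original. By Lemma~\ref{lemma:rules} the two branches agree whenever $\mathcal{C}\delta$ holds, so the whole conditional evaluates to $\sem{e_{bv}}\sigma$ and by the inductive hypothesis this equals $\sem{e}\sigma$. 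Nesting of several matching rules is handled by the same argument applied at each layer; the order in which rules are composed does not affect the invariant.

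Next I would prove a \emph{statement inclusion}: for every statement $s$ and state $\sigma$, $\sem{s}\sigma \subseteq \sem{T_S\{s\}}\sigma$. For assignments or assumes on which a weakening rule fires with substitution $\delta$, $T_S\{s\}$ has the form \lstinline|if $\mathcal{C}\delta$ then assume($s_{int}\delta$) else $s_{bv}$|. Split on whether $\mathcal{C}\delta$ holds at $\sigma$: if it does, Lemma~\ref{lemma:rules} gives $\sem{s_{bv}}\sigma \subseteq \sem{s_{int}\delta}\sigma$, which is exactly the \emph{then}-branch semantics; if it does not, the \emph{else}-branch is taken and we recover $\sem{s_{bv}}\sigma$ verbatim. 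In either case the original poststate set is contained. Because $T_E$ is already semantics-preserving at the level of expressions, applying $T_S$ \emph{after} $T_E$ does not lose any points, so $\sem{T_S\{T_E\{s\}\}}\sigma \supseteq \sem{T_E\{s\}}\sigma = \sem{s}\sigma$.

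Finally I would lift the pointwise inclusion to traces. A trace of $P$ is a finite or infinite sequence $\sigma_0, s_0, \sigma_1, s_1, \ldots$ with $\sigma_{i+1} \in \sem{s_i}\sigma_i$, where control flow (branches, loops, \texttt{goto}s) is already expressed via $\texttt{assume}$ statements and non-deterministic branching as stipulated in the Preliminaries. Since the control-flow skeleton of $P$ is preserved by $T_E$ and $T_S$ (the translations only rewrite expressions and expand individual statements into an \lstinline|if-then-else|), every edge of the original CFA has a corresponding edge in $T_S\{T_E\{P\}\}$ whose transition relation is weaker; an induction on the length of the prefix then lets us reproduce each step in the translated program, so $\sem{P} \subseteq \sem{T_S\{T_E\{P\}\}}$.

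The main obstacle I expect is purely bookkeeping rather than mathematical: making precise the claim that ``the control-flow skeleton is preserved'' when a weakening rule fires inside, say, a loop guard, and ensuring that nested applications of commutatively-reordered rules compose without accidentally shadowing an \emph{else}-branch. Both can be addressed by giving $T_E$ and $T_S$ as syntax-directed recursions (as in Apx.~\ref{apx:alg}) and proving the pointwise equality/inclusion by a simultaneous induction over the two functions, after which the trace-level statement is routine.
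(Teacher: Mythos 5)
Your proposal is correct and follows essentially the same route as the paper's proof: structural induction on expressions to show that $T_E$ preserves semantics exactly (the two branches of each introduced conditional agree under $\mathcal{C}\delta$ by Lemma~\ref{lemma:rules}, and the else-branch is the original), a direct application of Lemma~\ref{lemma:rules} to get pointwise inclusion for $T_S$ on statements and assumed conditions, and a final induction on trace prefixes using the fact that the control-flow skeleton is unchanged. Your write-up is somewhat more explicit than the paper's sketch (notably on the case split for $\mathcal{C}\delta$ and on the trace-lifting step), but the decomposition and the key lemma are the same.
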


\begin{proof}
Induction on traces, showing equality on expression translation $T_E$ via induction on expressions/statements and then
inclusion on statement translations $T_S$.
First show that 
%if $s$ is an assignment step $lhs := e$, that 
$T_E$ preserves traces equivalence. 
Structural induction on $e$, with base cases being constants, variables, etc. 
In the inductive case, for a bitvector operation $e_1 \otimes e_2$, assume $e_1,e_2$ has been (potentially) transformed to $e_1',e_2'$ (resp.) and that Lemma~\ref{lemma:rules} holds for each $i\in\{1,2\}$: $\forall\sigma. \sem{e_i}\sigma=\sem{e_i'}\sigma$. 
Since $\otimes$ is deterministic, $\sem{e_1'\otimes e_2'}\sigma = \sem{e_1\otimes e_2}\sigma$.
Finally, applying the transformation to $\otimes$, we show that  $\sem{T_E\{e_1'\otimes e_2'\}} = \sem{e_1'\otimes e_2'}$ again by Lemma~\ref{lemma:rules}.
Next, for each statement $s$ or relational condition $c$ step, we prove $T_S$ preserves trace inclusion:
that $\sem{s} \subseteq \sem{T_S\{s\}}$ or
that $\sem{c} \subseteq \sem{T_S\{c\}}$.
We do not recursively weaken conditional boolean expressions, which would require alternating strengthening/weakening. 
Thus, inclusion holds directly from Lemma~\ref{lemma:rules}.
%% Details in Apx.~\ref{apx:sound}. Each rule is proved with Z3 in Apx.~\ref{apx:rules:proof}.
\end{proof}

\newcommand\cyrulesLTL{LTLBitBench}
\newcommand\cyrulesTerm{TermBitBench}

% \begin{table}[!ht]
%   \caption{\label{tab:reach-smt-diff} 
% Performance of Ultimate 
% on bitvector programs
% using
% various SMT solvers, with and without
% bitwise branching (BwB)}
%  \begin{adjustbox}{width=\textwidth}
%    \centering
%   \input{tab/tab_eval_reach_smt2.tex}
%   \end{adjustbox}
%  \end{table}
\newcommand{\bwb}{BwB\xspace}

%\begin{wraptable}[18]{r}[0pt]{0.5\textwidth}
%  % \vspace*{-6ex}
%  \caption{\ultimate on SV-COMP categories Bitvector and ControlFlow.}\label{tab:reach-smt-svcomp} 
%  \centering
%  \input{tab/tab_eval_reach_smt_svcomp.tex}
%\end{wraptable} 
\section{Reachability of Bitvector Programs}
\label{sec:reachability}

We now evaluate the effectiveness of bitwise branching (BwB), as implemented in our {\ultimateBB}, toward reachability verification over our new suite of 28 bitvector programs, including those adapted from existing code snippets like the ``BitHacks'' programs%% ~\cite{bithacks}
, which use bitwise operations for various tasks. 
%\emph{Benchmarking environment.}

We ran our experiments with \benchexec~\cite{benchexec} on a machine with an  AMD Ryzen 3970X 32 Core CPU with 3.7GHz and 256GB RAM running
Linux 5.4.65.
We limited CPU time to 5 minutes, memory to 8GB, and restricted each run to two cores. 
%We ran 14 benchmarks in parallel.  
We built \ultimate 0.2.1 from source\footnote{\hurl{github.com/ultimate-pa/ultimate}, \texttt{b4afca67}, dev} and used it as baseline. 

\begin{table}[t]
\caption{Performance of \ultimate on bitvector programs,
\eg~drawn from Sean Andersen's ``Bit Hacks'' repository,
using various SMT solvers, with and without bitwise branching (\bwb).}\label{tab:reach-smt-diff} 
\begin{adjustbox}{width=\textwidth}
  \centering
  \renewcommand{\header}[1]{\parbox{4em}{\centering #1}}
\begin{tabular}{lcr@{\hspace{1em}}cr@{\hspace{1em}}cr@{\hspace{1em}}cr@{\hspace{1em}}cr@{\hspace{1em}}cr@{\hspace{1em}}cr@{\hspace{1em}}cr@{\hspace{1em}}cr@{\hspace{1em}}cr@{\hspace{1em}}cr@{\hspace{1em}}}
    \toprule
    & \multicolumn{12}{c}{Integer} & \multicolumn{8}{c}{Bitvector} \\
    \cmidrule(r){2-13}\cmidrule(l){14-21}
     & \multicolumn{2}{c}{\header{\bwb \\ \cvc}}
     & \multicolumn{2}{c}{\header{\bwb \\ MS Itp}}
     & \multicolumn{2}{c}{\header{\bwb \\ MS}}
     & \multicolumn{2}{c}{\header{\bwb \\ SItp+\zzz}}
     & \multicolumn{2}{c}{\header{SItp+\zzz}}
     & \multicolumn{2}{c}{\header{\bwb \\ \zzz}}
     & \multicolumn{2}{c}{\header{\cvc}}
     & \multicolumn{2}{c}{\header{MS Itp}}
     & \multicolumn{2}{c}{\header{MS}}
     & \multicolumn{2}{c}{\header{\zzz}} \\
     \cmidrule(r){2-3}
     \cmidrule(r){4-5}
     \cmidrule(r){6-7}
     \cmidrule(r){8-9}
     \cmidrule(r){10-11}     
     \cmidrule(r){12-13}
     \cmidrule(r){14-15}
     \cmidrule(r){16-17}
     \cmidrule(r){18-19}
     \cmidrule(r){20-21}

    {\bf Simple}                                                                                                                                                                                        \\
    logic\_cmpl.c         & \OK  & 6.10s   & \OK  & 5.28s   & \OK  & 5.43s   & \OK  & 5.68s   & \UNK & 4.90s   & \OK  & 5.91s   & \OK   & 5.90s   & \OK   & 5.65s   & \OK   & 6.86s   & \OK   & 5.67s   \\
    logic\_cmpl\_f.c      & \NOK & 6.97s   & \NOK & 6.64s   & \NOK & 6.74s   & \NOK & 5.09s   & \UNK & 6.69s   & \NOK & 6.76s   & \NOK  & 5.05s   & \NOK  & 4.74s   & \NOK  & 4.92s   & \NOK  & 4.96s   \\
    and\_loop.c           & \OK  & 7.27s   & \OK  & 5.39s   & \OK  & 6.14s   & \OK  & 5.42s   & \OK  & 5.42s   & \OK  & 5.19s   & \OK   & 5.60s   & \OK   & 6.49s   & \OK   & 5.21s   & \OK   & 5.46s   \\
    and\_loop\_f.c        & \NOK & 6.72s   & \NOK & 6.57s   & \NOK & 5.52s   & \NOK & 5.47s   & \NOK & 6.90s   & \NOK & 7.55s   & \NOK  & 5.44s   & \NOK  & 6.55s   & \NOK  & 6.92s   & \NOK  & 6.99s   \\
    logic\_or.c           & \OK  & 6.11s   & \OK  & 5.35s   & \OK  & 7.93s   & \OK  & 7.19s   & \UNK & 5.36s   & \OK  & 6.11s   & \OK   & 7.02s   & \OK   & 6.02s   & \OK   & 7.49s   & \OK   & 6.89s   \\
    logic\_or\_f.c        & \NOK & 5.35s   & \NOK & 4.91s   & \NOK & 5.52s   & \NOK & 4.58s   & \UNK & 4.85s   & \NOK & 4.93s   & \NOK  & 5.47s   & \NOK  & 5.68s   & \NOK  & 4.66s   & \NOK  & 5.22s   \\
    logic\_and.c          & \OK  & 10.91s  & \OK  & 7.27s   & \OK  & 11.66s  & \OK  & 7.30s   & \UNK & 5.64s   & \OK  & 8.83s   & \OK   & 5.89s   & \OK   & 6.88s   & \OK   & 7.04s   & \OK   & 6.80s   \\
    logic\_and\_f.c       & \NOK & 4.94s   & \NOK & 4.93s   & \NOK & 5.01s   & \NOK & 6.42s   & \UNK & 6.36s   & \NOK & 4.58s   & \NOK  & 4.98s   & \NOK  & 5.04s   & \NOK  & 4.95s   & \NOK  & 4.88s   \\
    logic\_xor.c          & \OK  & 5.99s   & \OK  & 5.54s   & \OK  & 5.94s   & \OK  & 5.06s   & \UNK & 6.64s   & \OK  & 5.69s   & \OK   & 5.77s   & \OK   & 5.23s   & \OK   & 5.30s   & \OK   & 5.25s   \\
    logic\_xor\_f.c       & \NOK & 4.90s   & \NOK & 4.93s   & \NOK & 5.19s   & \NOK & 4.64s   & \UNK & 5.16s   & \NOK & 5.16s   & \NOK  & 5.29s   & \NOK  & 5.02s   & \NOK  & 6.26s   & \NOK  & 4.87s   \\
    and\_reach1.c         & \OK  & 8.17s   & \OK  & 5.08s   & \OK  & 7.59s   & \OK  & 5.80s   & \UNK & 5.08s   & \OK  & 10.77s  & \OK   & 7.10s   & \OK   & 5.06s   & \OK   & 5.94s   & \OK   & 5.61s   \\
    and\_reach2.c         & \OK  & 6.53s   & \OK  & 5.21s   & \OK  & 6.40s   & \OK  & 5.28s   & \UNK & 6.50s   & \OK  & 8.14s   & \OK   & 5.94s   & \OK   & 5.31s   & \OK   & 6.10s   & \OK   & 7.51s   \\
    %    nonlinear\_and.c      & \UNK & 5.29s   & \UNK & 4.77s   & \UNK & 5.39s   & \UNK & 4.96s   & \UNK & 5.22s   & \UNK & 4.99s   & \NOK  & 6.86s   & \NOK  & 4.98s   & \NOK  & 5.28s   & \NOK  & 5.45s   \\
    \midrule
    {\bf BitHacks}                                                                                                                                                                                      \\
    %% parity.c              & \OK  & 6.52s   & \OK  & 6.19s   & \OK  & 6.86s   & \OK  & 6.29s   & \UNK & 5.39s   & \OK  & 6.58s   & \NOK  & 7.46s   & \NOK  & 5.40s   & \NOK  & 7.49s   & \NOK  & 7.19s   \\
    parity\_f.c           & \NOK & 6.24s   & \NOK & 5.72s   & \NOK & 5.64s   & \NOK & 5.67s   & \UNK & 5.33s   & \NOK & 6.14s   & \NOK  & 5.66s   & \NOK  & 5.32s   & \NOK  & 5.95s   & \NOK  & 5.73s   \\
    cnt-bits-set.c        & \OK  & 8.16s   & \OK  & 7.59s   & \OK  & 8.34s   & \OK  & 7.84s   & \UNK & 7.18s   & \OK  & 8.62s   & \OK   & 7.26s   & \OK   & 7.50s   & \OK   & 7.62s   & \OK   & 8.01s   \\
    cnt-bits-set\_f.c     & \NOK & 5.99s   & \NOK & 5.26s   & \NOK & 5.88s   & \NOK & 5.98s   & \UNK & 5.90s   & \NOK & 5.93s   & \NOK  & 6.56s   & \NOK  & 5.78s   & \NOK  & 5.67s   & \NOK  & 5.75s   \\
    display-bit.c         & \OK  & 5.92s   & \OK  & 7.46s   & \OK  & 5.68s   & \OK  & 7.28s   & \UNK & 6.11s   & \OK  & 6.11s   & \OK   & 7.25s   & \OK   & 5.46s   & \OK   & 6.29s   & \OK   & 5.81s   \\
    display-bit\_f.c      & \NOK & 34.16s  & \NOK & 51.33s  & \NOK & 28.61s  & \NOK & 35.44s  & \UNK & 7.60s   & \NOK & 26.64s  & \NOK  & 29.78s  & \NOK  & 44.89s  & \NOK  & 27.13s  & \NOK  & 32.37s  \\
    display-bit1.c        & \OK  & 7.08s   & \OK  & 5.83s   & \OK  & 30.98s  & \OK  & 5.81s   & \UNK & 5.43s   & \OK  & 6.47s   & \OK   & 7.69s   & \OK   & 5.42s   & \OK   & 22.28s  & \OK   & 5.65s   \\
    display-bit1\_f.c     & \NOK & 25.34s  & \NOK & 43.19s  & \NOK & 24.94s  & \NOK & 24.63s  & \UNK & 6.43s   & \NOK & 19.91s  & \NOK  & 25.27s  & \NOK  & 35.84s  & \NOK  & 20.87s  & \NOK  & 24.73s  \\
    reverse-bits1.c       & \OK  & 7.69s   & \OK  & 6.36s   & \OK  & 7.11s   & \OK  & 6.35s   & \UNK & 5.01s   & \OK  & 7.34s   & \OK   & 6.36s   & \OK   & 5.77s   & \OK   & 6.21s   & \OK   & 6.36s   \\
    reverse-bits1\_f.c    & \NOK & 7.67s   & \NOK & 7.05s   & \NOK & 7.12s   & \NOK & 7.22s   & \UNK & 6.67s   & \NOK & 7.46s   & \NOK  & 6.80s   & \NOK  & 6.57s   & \NOK  & 6.87s   & \NOK  & 6.81s   \\
    cz-bits-trailing.c    & \OK  & 6.32s   & \OK  & 5.16s   & \OK  & 5.89s   & \OK  & 6.08s   & \UNK & 6.12s   & \OK  & 5.83s   & \OK   & 5.73s   & \OK   & 6.10s   & \OK   & 5.78s   & \OK   & 6.24s   \\
    cz-bits-trailing\_f.c & \NOK & 7.11s   & \NOK & 6.72s   & \NOK & 6.68s   & \NOK & 7.22s   & \UNK & 6.54s   & \NOK & 6.77s   & \NOK  & 6.93s   & \NOK  & 7.12s   & \NOK  & 6.42s   & \NOK  & 6.79s   \\
    cnt-bits-BK1.c        & \OK  & 8.99s   & \OK  & 5.48s   & \OK  & 38.14s  & \OK  & 5.44s   & \UNK & 5.40s   & \OK  & 34.03s  & \OK   & 6.27s   & \TOUT & 300.81s & \OK   & 5.01s   & \OK   & 5.80s   \\
    cnt-bits-BK1\_f.c     & \NOK & 5.30s   & \NOK & 4.96s   & \NOK & 5.51s   & \NOK & 5.70s   & \UNK & 5.12s   & \NOK & 5.29s   & \NOK  & 5.36s   & \NOK  & 5.54s   & \NOK  & 4.99s   & \NOK  & 5.02s   \\
    cnt-bits-BK.c         & \NOK & 5.43s   & \NOK & 4.79s   & \NOK & 4.97s   & \NOK & 4.74s   & \UNK & 5.10s   & \NOK & 5.02s   & \NOK  & 5.62s   & \NOK  & 5.10s   & \NOK  & 4.81s   & \NOK  & 5.00s   \\
    cnt-bits-BK\_f.c      & \NOK & 7.70s   & \NOK & 6.92s   & \NOK & 7.41s   & \NOK & 7.23s   & \UNK & 6.60s   & \NOK & 7.06s   & \NOK  & 7.45s   & \NOK  & 7.02s   & \NOK  & 6.57s   & \NOK  & 7.50s   \\
    parity1.c             & \OK  & 6.55s   & \OK  & 19.52s  & \OK  & 6.76s   & \OK  & 6.11s   & \UNK & 24.22s  & \OK  & 6.65s   & \TOUT & 300.97s & \OK   & 80.68s  & \TOUT & 300.96s & \TOUT & 300.96s \\
    \midrule
    $\sum$Time             &      & 242.14s &      & 266.64s &      & 285.59s &      & 222.99s &      & 189.62s &      & 251.51s &       & 517.86s &       & 608.01s &       & 522.57s &       & 515.84s \\
    % with nonlinear:
    %$\sum$                &      & 247.43s &      & 271.41s &      & 290.98s &      & 227.95s &      & 194.84s &      & 256.50s &       & 524.72s &       & 612.99s &       & 527.85s &       & 521.28s \\
    \bottomrule
\end{tabular}

\end{adjustbox}
\vspace{-15pt}
\end{table}

The results are summarized in Table~\ref{tab:reach-smt-diff}. 
%and benchmarks from the SV-COMP categories of Bitvector and ControlFlow (Table~\ref{tab:reach-smt-svcomp}).
%
%The tables summarize our results.
%, with more details available in Appendix~\ref{apx:details} \chanh{not for Table 2}. In those tables, 
Labels are {\OK} for satisfied properties, {\NOK} for violated properties with counterexamples, and {\UNK} for the unknown results where the tools could not decide. 
We also report timeouts (\TOUT), out-of-memory (\OOM), crashes (\FAIL), and highlight false positive (\unsound{\NOK}) and false negative (\unsound{\OK}) results in gray, if any.
{\ultimate} has two modes: \emph{integer} and \emph{bitvector}, each specialized to the corresponding kind of programs. In \ultimate's integer mode, overflow/underflow is accounted for with \code{assume} statements.
In its bitvector mode, {\ultimate} can utilize a variety of back-end SMT solvers with internal bitvector reasoning strategies, such as CVC4, Z3 and \mathsat{} (MS). 
By contrast, \ultimateBB{} does not use bitvector mode but instead
%On the other hand, our bitwise branching strategy 
transforms bitvector programs (through bitwise branching) and verifies them in {\ultimate}'s integer mode using the same set of back-end SMT solvers. 
Table~\ref{tab:reach-smt-diff} shows that the performance of the integer verification with bitwise branching is comparable to the bitvector verification, despite the fact that the bitwise branching transformation may introduce many new paths.
% to approximate behavior of those bitvector programs. 

Because {\ultimate}'s verification algorithms heavily utilize interpolation for optimizations, we also ran the experiment with interpolation enabled when possible, using \mathsat's interpolation (MS Itp, in both modes) and \smtinterpol (SItp, only in the integer mode because \smtinterpol does not support bitvectors). 
Notably, without bitwise branching, {\ultimate} can only verify 2 of 28 programs using the default setting (SItp+Z3) in its integer mode \added{while {\ultimateBB} can verify all 28 programs in the same settings}. 
Moreover, while interpolation is less effective in the bitvector mode (see MS Itp vs.\ MS), when combined with bitwise branching in the integer mode, it improves over those solvers (about 1.2x speedup, the total time for all benchmarks at the bottom row of Table~\ref{tab:reach-smt-diff}, e.g. see total time, BwB MS Itp vs BwB SItp+Z3) and has the best results (BwB SItp+Z3 column).

\section{Termination and LTL of Bitvector Programs}
\label{sec:termination}
%\cyrus{$newbench/source/ltl/$, $newbench/source/term$}

We now evaluate bitwise branching on the main target: liveness
properties of bitvector
%We next evaluated the effectiveness  of bitwise-branching, as implemented in our
programs. There are few comparable tools that support bitvector  
\begin{wrapfigure}[7]{r}[6pt]{6.0cm}
%% \vspace{-10pt}
\footnotesize
\begin{tabular}{|l|l|l|l|}
\hline
{\bf Tool} & {\bf BitVec}. & {\bf Term.} & {\bf LTL}  \\
\hline
\ultimate & Limited & Yes & Yes  \\
\aprove%% \footnote{\hurl{github.com/aprove-developers/aprove-releases/releases}, \texttt{master\_2019\_09\_03}}
\footnotemark
~\cite{DBLP:journals/jar/GieslABEFFHOPSS17} & Yes & Yes & No \\
\kittel%% \footnote{\hurl{github.com/s-falke/kittel-koat}, \texttt{c00d21f}, master}
\footnotemark
~\cite{Falke2012} & Yes & Yes & No \\
\cpachecker%% \footnote{\hurl{github.com/sosy-lab/cpachecker},\texttt{c2f1d8cce6}, master}
\footnotemark
& Limited & Yes & No \\
\twols%% \footnote{\hurl{github.com/diffblue/2ls}, \texttt{d35ccf73}, master}
\footnotemark
~\cite{DBLP:conf/kbse/ChenDKSW15} & Yes & Yes & No \\
\hline
\ultimateBB & Yes & Yes & Yes  \\
\hline
\end{tabular}
\end{wrapfigure}
%\ultimateBB, to enable an existing verification tool to prove more expressive properties of bitvector programs: termination and LTL. 
reasoning and these properties; the most comparable (and mature) tools are listed to the right, along with their limitations.
\footnotetext[6]{\hurl{github.com/aprove-developers/aprove-releases/releases}, \texttt{master\_2019\_09\_03}}
\footnotetext[7]{\hurl{github.com/s-falke/kittel-koat}, \texttt{c00d21f}, master}
\footnotetext[8]{\hurl{github.com/sosy-lab/cpachecker},\texttt{c2f1d8cce6}, master}
\footnotetext[9]{\hurl{github.com/diffblue/2ls}, \texttt{d35ccf73}, master}
	
\medskip\noindent
{\bf Termination.} We compare bitwise branching with the termination provers in the table to the right.
%that support bitvector arithmetic, i.e., \cpachecker, \kittel, \aprove, and \twols. 
%We used the latest release of \aprove.
%We build \cpachecker 2.0.1 from source and also built \twols from source.
We applied these tools to
two benchmarks suites:
% to evaluate our {\Tool} against 
{\bf (i)} We first used 18 bitvector terminating programs selected from \aprove's bitvector 
\begin{wrapfigure}[18]{R}[6pt]{0.55\textwidth}
\vspace{-20pt}
%\begin{center}
\begin{tabular}{lc@{\hspace{1em}}ccccccc@{\hspace{1em}}cccccc}
      \toprule
                     & \multicolumn{6}{c}{(ii) \cyrulesTerm} &                       & \multicolumn{6}{c}{(i) AproveBench }                                                                                                                                                                                          \\
      \cmidrule(r){2-7}\cmidrule(l){9-14}
                     & \rheader{\aprove}           & \rheader{\cpachecker} & \rheader{\kittel}          & \rheader{\twols} & \rheader{\ultimate} & \rheader{\ultimateBB} &  & \rheader{\aprove} & \rheader{\cpachecker} & \rheader{\kittel} & \rheader{\twols} & \rheader{\ultimate} & \rheader{\ultimateBB} \\
      \cmidrule(r){2-7}\cmidrule(l){9-14}
      \OK            & 5                       & 1                     & 7                          & 8                & 2                   & 18          &  & 1                 & 3                     & 3                 & 14               & 2                   & 2               \\
      \unsound{\OK}  & \hl 1                       & -                     & -                          & -                & -                   & -           &  & -                 & -                     & -                 & -                & -                   & -               \\
      \NOK           & 6                       & 10                & -                          & 8            & -                   & 13              &  & -                 & -                 & -                 & -                & -               & -           \\
      \unsound{\NOK} & \hl 2                       & \hl 7                 & -                          & \hl 3            & -                   & -               &  & -                 & \hl 10                & -                 & -                & \hl 2               & \hl 6           \\
      \UNK           & 14                          & 13                    & -                          & -                & 29                  & -               &  & 10                & 3                     & -                 & 1                & 14                  & 8               \\
      \TOUT          & 3                           & -                     & 19                         & 12               & -                   & -               &  & 7                 & -                     & 10                & 2                & -                   & 1               \\
      \OOM           & -                           & -                     & -                          & -                & -                   & -               &  & -                 & -                     & -                 & 1                & -                   & 1               \\
      \FAIL          & -                           & -                     & 5                          & -                & -                   & -               &  & -                 & 2                     & 5                 & -                & -                   & -               \\
      \bottomrule
\end{tabular}

% cyrules
% 13 false 
% 18 true
% 31 total 

%\caption{\label{tab:source_term_overview} Termination.}
%\end{center}
%% \vspace{-20pt}
\end{wrapfigure}
benchmarks~\cite{DBLP:conf/sefm/HenselGFS16}. 
Notably, those benchmarks were designed with general bitvector arithmetic in mind so that there is only a small portion of bitvector programs in it (i.e. 18/118 or 15\%). 
{\bf (ii)} We therefore built a second set of 31 termination benchmarks, including 18 terminating programs (\OK) and 13 non-terminating programs (\NOK), called \cyrulesTerm{}%% ~\cite{supplement}
with bitvector operations including bitwise \code{$\mid$}, \code{\&}, \code{\^}, \code{<<}, \code{>>}, \code{\~}.
%OR, AND, XOR, left shift, right shift, and complement.

\emph{Results.} To the right is a table  summarizing our results (details see Apx.~\ref{apx:details:ltl:term:source}). 
%In our tables here, rows are labeled with the number
%of verified properties (\OK), provided counter examples (\NOK),  timeouts (\TOUT),  crashes (\FAIL), and the number of instances where the tool could not decide (\UNK). We also report false positive (\unsound{\NOK}) and false negative (\unsound{\OK}) results, if any, and highlight them in gray. 
For the \aprove{} benchmarks, our tool can correctly prove the termination or non-termination of 2 programs, which is less than the number of programs that can be proved by {\cpachecker} (3), {\kittel} (3), and {\twols} (14). However, for {\cyrulesTerm}, while {\ultimateBB} can prove {\em all} 31 programs, {\cpachecker}, {\kittel}, and {\twols} can only prove at most 16 programs. Moreover, while our tool was built on top of \ultimate, it outperforms {\ultimate} in proving termination and non-termination of bitwise programs.
This is because \ultimate's algorithms for synthesizing termination~\cite{hutchison_termination_2014} and non-termination proofs~\cite{DBLP:conf/tacas/LeikeH18} are not applicable to SMT formulas containing bitvectors. As a consequence, \ultimate relies on integer-based encodings of source programs together with overapproximations of bitwise operations. These results confirm that bitwise branching provides an effective means for termination of bitvector programs.
\added{Note that there are 6 false results in AproveBench for termination, they are spurious counterexamples that arise due to Ultimate’s overapproximation for unsigned integers, they do not involve branches created by our bitwise branching strategy.}

\medskip\noindent
{\bf Linear temporal logic.} We compared our tool against {\ultimate}, which is the state-of-the-art LTL prover
and the only mature LTL verifier that supports bitvector programs. 
To our knowledge, there are no available bitwise benchmarks with LTL properties so we create new benchmarks for this purpose:
(iii) New hand-crafted benchmarks called \cyrulesLTL{}%% ~\cite{supplement}
of 42 C programs with LTL properties, in which bitwise operations are heavily used in assignments, loop conditions, and branching conditions. There are 
22 programs in which the provided LTL properties are satisfied (\OK) and 20 programs in which the LTL properties are violated
(\NOK).
(iv) Benchmarks adapted from the ``BitHacks'' programs, consisting of 26 programs with LTL properties (18 satisfied and 8 violated).

\begin{wrapfigure}[12]{r}[0pt]{0.3\textwidth}
\vspace{-20pt}
\small
\renewcommand{\header}[1]{\parbox{5em}{\centering #1}}
\newcolumntype{Y}{>{\centering\arraybackslash}X}
\begin{tabularx}{0.3\textwidth}{l *{4}{Y}}
      \toprule
            & \multicolumn{2}{c}{(iv)Bithacks} & \multicolumn{2}{c}{\header{(iii)LTLBit                                           \\Bench}}                                               \\
      \cmidrule(r){2-3}\cmidrule{4-5}
            & \rheader{\ultimate}          & \rheader{w. \bwb}                 & \rheader{\ultimate} & \rheader{w. \bwb} \\
      \cmidrule(r){2-2}\cmidrule(r){3-3}\cmidrule(r){4-4}\cmidrule{5-5}
      \OK   & 3                            & 10                                & -                   & 21                \\
      \NOK  & -                            & 7                                 & -                   & 20                \\
      \UNK  & 21                           & 5                                 & 42                  & -                 \\
      \TOUT & 1                            & 1                                 & -                   & 1                 \\
      \OOM  & 1                            & 3                                 & -                   & -                 \\
      \bottomrule
\end{tabularx}

\end{wrapfigure}

%\ultimate's configuration for LTL software model checking uses the configuration introduced in~\cite{DBLP:conf/cav/DietschHLP15}, with two differences.
%We disabled small block encoding, because our rules can introduce a large number of disjunctions which prevented verification with small block encoding in three instances.  
%Because small block encoding also prevents the creation of \texttt{goto} edges in \ultimate's control-flow graph, and because the Büchi program product construction cannot deal with these edges, we had to enable their explicit removal. 
%% For reachability, we used the configuration provided by \ultimate with their 0.2.1 release, which also participated in \svcomp 2021. 
%
%\begin{center}
%\input{tab/tab_eval_overview_ltl.tex}\\
%\red{should we show detail?}
%\end{center}

%In our LTL \ultimate{} configuration, wedisabled small block encoding, because our rules can introduce a large number of disjunctions which prevented verification with small block encoding in three instances. \chanh{I don't understand this}
The table to the right summarizes the result of applying {\ultimate} and {\ultimateBB} on these two bitvector
benchmarks (see Apx~\ref{apx:details:ltl:term:source} for details).
%Besides the successful verification results for satisfied (\OK) and violated (\NOK) LTL properties, we also report unknown (\UNK) results, as well as timeout (\TOUT), out-of-memory (\OOM), and crashes (\FAIL). 
{\ultimateBB} outperforms {\ultimate}:
%, thanks to our newly added abstraction of bitwise operations. 
{\ultimateBB} can successfully verify 41 of 42 programs in \cyrulesLTL{} and 18 of 26 BitHacks programs while {\ultimate} can only handle a few of them.
Note that we have more out-of-memory results in BitHacks Benchmarks, perhaps due to memory consumption reasoning about the introduced paths.
%, however in most of the case it helps the verification tool prove a wider rang of bitvector programs effectively.}
In conclusion, bitwise branching appears to be the first effective technique for verifying LTL properties of bitvector programs.

\section{Case Study: LTL of Decompiled Binaries}
\label{sec:casestudy}

Decompiled binary executables are rife with bitvector operations, making them an interesting domain for a case study.
Many tools~\cite{dinaburg2014mcsema,Ghidra30:online,idapro,macaw,reopt,jakstab,Snowman82:online}
have been developed for decompilation. Similar to compilation, the decompilation process consists of multiple phases, beginning with disassembly.
Some techniques have emerged
for verifying low-level aspects of decompiled binaries
such as architectural semantics~\cite{Roessle2019,dasgupta_complete_2019,sail-popl20}, decompilation into logic~\cite{Myreen2007,Myreen2008,Myreen2012,de_boer_sound_2020}, and translation validation~\cite{dasgupta_scalable_2020} (discussed in Sec.~\ref{sec:relwork}).
%Some verification works have focused on this early stage, providing low-level architecture-level semantics (\eg~Dasgupta \etal~\cite{dasgupta_complete_2019} and Sail~\cite{sail,sail-popl20}), and reasoning with respect to these semantics. 
%Examples include decompilation into logic~\cite{Myreen2007,Myreen2008,Myreen2012} and the recent work of Roessle \etal~\cite{Roessle2019}.
%These works employ low-level semantics and/or a theorem-proving strategy for manual or semi-automated reasoning (\eg using Isabelle/HOL).

Further along the decompilation process, other tools aim to represent a binary at a higher level of abstraction through a process called \emph{lifting}. 
A lifted binary can be represented in IR or source code, but includes only some of the source-level abstractions of the original program. 
Instead, a lifted ``program'' emulates the machine itself, with data structures that mimic the hardware (\eg~registers, flags, stack, heap, etc.) and control
%\jun{is ``control'' a term used in the PL community?} 
that mimics the behavior of the binary. 
%
%The process of lifting itself has become one target for verification. 
%In a recent paper, Dasgupta \etal~\cite{dasgupta_scalable_2020} describe 
%such method as a form of translation validation. % on x86-64 instructions.

While some of the above mentioned works involve manual or semi-automated proofs of safety properties, we have not yet seen many automated techniques for verifying reachability, termination and temporal properties of those lifted binaries. 
%Despite these works aimed at binary verification, at present techniques have not yet emerged for verifying termination and 
%\added{
To a large extent today's automated verification techniques have relied on source abstractions (\eg~invariants and rank functions over loop variables, structured control flow, procedure boundaries, etc.).

%\medskip\noindent
%{\bf Bitvector operations in lifted binaries.}
\subsection{Bitvector operations in lifted binaries}
%A principal challenge in verifying lifted binaries is that the target is an emulated machine,
% Therefore the decompiled ``program'' involves  complications such as irrelevant code for emulating the environment, data structures for emulating the architecture, and 
%which makes heavy use of bitvector operations, 
Lifted binaries frequently use bitvector operations \eg~to reflect signed/unsigned comparison of variables whose type was lost in compilation.
%}
%\chanh{What are the new challenges in the verification of the lifted code for temporal/termination properties? How difficult is it to expand the above works to support temporal/termination?} \jun{Echoing to Chanh's questions and based
%on the following description of our LVIR: can we say the lack of verification of temporal properties for binary code is because of no proper IR?}
As we show in Sec.~\ref{subsec:eval2},
lifted programs are beyond the capabilities of termination 
verification tools such as \ultimate, \cpachecker, \aprove{} or \kittel.

Returning to Example {\bf (3)},
%expressions and statements in lifted binaries make extensive use of bitwise  operations such as shift-right, AND, OR, XOR, etc.
%This can be seen, for example, in callout \cirC{4} in Fig.~\ref{fig:rawcfg}, 
%\red{refer to the decompiled example 2 from overview.}
%which involves the expression
%\code{
% (((((tmp\_\_34 != 0u)&1)) & ((((((tmp\_\_35 == 0u)&1)) ^ ((((llvm\_add\_u32((tmp\_\_35 ^ tmp\_\_36), tmp\_\_36)) == 2u)&1)))&1)))&1)}. 
while the source code for
the inner loop of \lstinline|PotentialMinimizeSEVPABug.c| is straight-forward (decrementing \lstinline|x| and assigning 0 to \lstinline|y| if \lstinline|x <= 1|), the corresponding
expressions in the lifted binaries involve multiple
bitvector operations:
\[\begin{array}{l}
\code{(((tmp\_42 != 0u)\&1) \& } \\
\code{((((tmp\_44 == 0u)\&1 ^ (((((tmp\_44 ^ tmp\_45) + tmp\_45) == 2u)\&1)))\&1)))\&1}
\end{array}\]
%\code{(((((((tmp_42 != 0u)&1)) & ((((((tmp_44 == 0u)&1)) ^ ((((((tmp_44 ^ tmp_45) + tmp_45)) == 2u)&1)))&1)))&1)))}.
This expression simulates branch comparisons that the machine would perform on values whose type was discarded during compilation.  The source code variable \code{x} is a signed integer, but compilation has stripped its type.
During decompilation, to approximate,  lifting procedures consider these \code{tmp} variables (and all integer variables) to be unsigned. 
Meanwhile, in the binary, the condition \code{x<=0} is compiled to be a \emph{signed} comparison.
Therefore,  lifting recreates a signed comparison using the unsigned \code{tmp} variables.
Lifted binaries are good candidates for bitwise branching. For the above example,  we can use three rules: 
R-RightShift-Pos, R-And-1, R-And-Log (rule labels see Apx.~\ref{apx:rules}).

\subsection{\Tool: A toolchain for temporal verification of lifted binaries}
\label{subsec:trans}

Bitvector operations are not the only issue:
lifted binaries have several other wrinkles
that preclude them from being verified with today's tools.
We briefly discuss these issues and how we address them in a 
new toolchain called \Tool{},
%We built a new toolchain called \Tool, 
the first tool capable of verifying 
reachability, termination and LTL properties of lifted
binaries.
%(For lack of space, we focus on LTL, which encompassesreachability and termination.)
%Details on termination experiments can be found in Apx.~\ref{apx:fabeeval}.
\Tool{} is comprised of several components:
%, depicted in the following diagram:
\begin{center}
%\begin{figure}[h]
\includegraphics[width=0.97\textwidth]{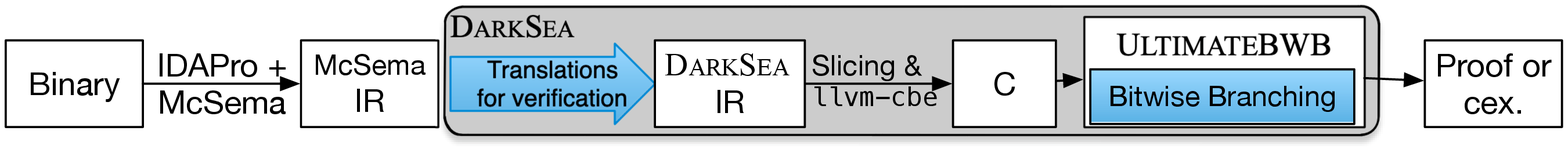}
%\caption{The work-flow of our de-compilation.}
%\label{fig:dectool}
%\end{figure}
\end{center}
\Tool\ takes as input a lifted binary (obtained from \idapro{} and \mcsema{}) in LLVM IR format, which then can be converted to C via \texttt{llvm-cbe}.

Lifting tools like \mcsema{}~\cite{DBLP:conf/eurosys/AltinayNKRZDGNV20,dinaburg2014mcsema} are often designed with the goal of \emph{re-compilation} rather than verification.
Consequently, the \mcsema{} IR, even if converted to C, cannot be analyzed by
existing tools (see Sec.~\ref{subsec:eval2}) which either crash, timeout, memout, or fail during parsing.
%In Apx.~\ref{apx:fabeeval} we show that \aprove, \kittel, \cpachecker, and \ultimate\ all 
%
We therefore perform a series of translations discussed below to re-target the lifted binaries into a format more amenable to verification, which we then input to \ultimateBB.
% followed by slicing, and 
%The resulting program is input to \ultimateBB.
%
The translations below  work with LLVM-8.0 and  consist of around 500 lines of C++ and 200 lines of \code{bash}.  We also identified and fixed several defects in \mcsema~\cite{McSema-bug1:online,McSema-bug2:online,McSema-bug3:online}.
% that prevented correct de-compiling of our test-cases. 

%
% posing challenges
%beyond those posed by the bitvector operations discussed in Sec.~\ref{sec:casestudy}.
%%
%We now discuss \Tool's translations that adapt \mcsema's lifted LLVM IR into a form that is more amenable to verification. These passes

%(In an ideal world, we propose that binary lifting tools such as \mcsema~\cite{dinaburg2014mcsema} be implemented with two lifting modes: one that targets re-compilation and one that targets verification.)

% use a strategy broadly known as binary ``lifting''
%Furthermore, since it is impossible in general to re-create the original source, lifting techniques instead generate a program that simulates execution of binary code in the target machine architecture. These tools thus generate data structures that represent components such as registers (\eg~RBP, RSP, IP), arithmetic flags, the stack (via an array), the heap, etc.
%In our experience, we found four impediments to applying today's verification tools on lifted binaries. In the following, we summarize the impediment and describe translations performed by \Tool\ to address the issue.
\begin{enumerate*}
\item \emph{Run-time environment.} For  \emph{re}-compilation, lifting yields code that switches context between the run-time environments and the simulated code, akin to how a loader moves environment variables onto the stack. 
%Furthermore, the initial state of the registers is also loaded from the runtime. \Tool\ {\bf Translation:} \added{We argue that?} lifted code should abstract over non-user code and treat user code interaction its environment as nondeterminsitic behavior or input.
A first pass of \Tool{} analyzes lifted output to discover the original program's \code{main}, decouples the surrounding
context-switch code, and removes it.
% from the code that simulates the  original program.
%, starting from \code{main} and 
%We first locate the original 
%{\tt main} function in the simulated code and then follow the 
%control flow to statically extract and trim code that can reach {\tt main}. 
%Further, the context-switch code also includes program-dependent functions 
%that are registered to be executed before the {\tt main} or after 
%the {\tt exit}. To avoid missing such functions in verification, 
%we allocate calls to them at the begin or the end of the {\tt main} 
%function, following the order these functions are 
%registered to run in the original binary. 

\item \emph{Passing emulation state through procedures.} \mcsema\ generates lifted programs in which function arguments pass emulation state that is used for re-compilation.
% When the lifted code is recompiled, code simulating the callee can access the contexts through these \code{struct.State} arguments.
%, which is typically more efficient than directly accessing the global data structure. However, interprocedural reasoning is known to make verification more challenging and requires more sophisticated algorithms such as procedure summaries~\cite{summaries,dillig2011precise} and nested interpolants~\cite{nestedInterpolants}, especially when context sensitivity is required.
%Based on our experience, we argue that  lifted code should not introduce procedure call arguments (beyond arguments that come from the program itself) to pass access to emulated machine state. Instead, 
We found this to make it difficult for verifiers to track state.
We thus
%\Tool\ therefore performs a translation pass that
eliminate these arguments from every function call,
creating a single global pointer to the emulation state struct
and replacing all uses of the first argument in the function body with a use of our new pointer.
%For verification, emulation state should be stored in \emph{global} state and, when needed, lifted procedure bodies should access the global emulation state directly rather than through arguments. 
%
%the same global data structure. As such, we eliminate these arguments from every function call. We then create a pointer 

\item \emph{Nested structures.} 
Lifted binaries simulate hardware features (\eg~registers, arithmetic flags, FPU status flags) and, for cache efficiency,
represent them as nested structures, \eg
\code{state->general\_registers.register13.union.uint64cell}.
%The use of nesting in these structures provides efficiency: constructs that are commonly used together (\eg\ general purpose registers) can be artificially grouped to the same cache line, reducing cache evictions. Reasoning about these nested data-structures is difficult because verification tools cannot make any assumptions about where these data-structures come from, how they are used and, most importantly, how they may be aliased. 
%However, these nested structures pose aliasing challenges for verification tools.
%In our experience we found that the representation of the machine architecture in the lifted program can be safely done with flat data structures. % and not involve nested structures.  %Although some verification tools do flatten some nested structures, this requires inference and approximation to know when it is safe to  reorganize such data structures. 
%\Tool\ thus performs a translation 
\Tool{} flattens these nested data structures, creating individual variables for all the innermost and separable fields, and then translates accesses to these nested structures.
%, with use-define reasoning to identify all the accesses to a flattened field. 
%For the aforementioned {\tt state->}{\tt general\_registers.}{\tt register13.}{\tt union.}{\tt uint64cell}, we allocate a new global variable {\tt register13} with the same type of {\tt uint64cell} and re-locate all the original accesses to {\tt register13}.

\item
\emph{Property-directed slicing.}
Not all the instructions are relevant to the properties we 
aim to verify, so we further slice the program to keep 
only property-dependent code, using
DG~\cite{chalupa_mchalupadg_online}
%LLVM slicing utility,
%Given slicing criteria (\eg a use
%of a global variable), DG identifies both control-flow 
%dependent code and data-flow dependent code, using 
in termination-sensitive mode.
For LTL properties, we use the atomic propositions' variables to seed our slicing criteria.
% to ensure the soundness of  slicing (\ie all code that can effect execution at 
%the slicing criteria are identified). 

\end{enumerate*}

\noindent
More detail about these translations (and an example) can be found in Apx.~\ref{apx:fabe}.

%
%\noindent
%\emph{Summary.}
%\added{Further detail and examples on of the above can be found in Apx.~\ref{apx:lifting-challenges}.  We also evaluated whether these translations indeed make lifted binaries more amenable to verification (details in 
%%. For lack of space, this evaluation can be found in 
%Apx.~\ref{apx:fabeeval}) and those results show that our  translations enable \ultimate\ and \Tool\ to verify 18 lifted binaries which previously could not be verified by any existing tool.}

\subsection{Experiments}
\label{subsec:eval2}

We evaluated whether our translations (Sec.~\ref{subsec:trans}) and bitwise branching (Sec.~\ref{sec:bitwise}) enabled tools to verify termination and LTL properties of decompiled binaries. 
% The \code{example*} programs are taken from real bugs in GCC optimizations \cite{gcc-bug1:online,gcc-bug2:online,gcc-bug3:online} (see Apx.~\ref{apx:gccbug}). 

\medskip
\noindent
\begin{wraptable}[16]{r}[8pt]{7cm}
  \vspace{-20pt}
  %\begin{table}[!ht]
  \caption{\label{tab:lift-term-overview} Termination of Lifted Binaries, with and without \Tool{} translations.}
  \vspace{-10pt}
  \centering
  \begin{tabular}{lccccccc@{\hspace{1em}}cccccc}
      \toprule
            & \multicolumn{6}{c}{\bf Raw \mcsema} &                       & \multicolumn{6}{c}{\bf \Tool{} transl.}                                                                                                                                                                                                      \\
      \cmidrule(r){2-7}\cmidrule(l){9-14}
            & \rheader{\aprove}          & \rheader{\cpachecker} & \rheader{\kittel}            & \rheader{\twols} & \rheader{\ultimate} & \rheader{{\ultimateBB}} &  & \rheader{\aprove} & \rheader{\cpachecker} & \rheader{\kittel} & \rheader{\twols} & \rheader{\ultimate} & \rheader{{\ultimateBB}} \\
            \cmidrule(r){2-7}\cmidrule(l){9-14}
      \OK   & -                          & -                     & -                            & -                & -                   & -                     &  & -                 & -                     & -                 & -                & \hlg18                  & \hlg18                    \\
      \FAIL & -                          & 18                    & -                           & -                & 3                   & -                     &  & -                 & -                     & -                & -                & -                   & -                     \\
      \OOM  & -                          & -                     & -                            & -                & -                   & 3                     &  & -                 & -                     & -                 & -                & -                   & -                     \\
      \TOUT & -                          & -                     & 18                            & -                & 15                  & 15                    &  & -                 & 18                    & 18                 & -                & -                   & -                     \\
      \UNK  & 18                         & -                     & -                            & 18               & -                   & -                     &  & 18                & -                     & -                 & 18               & -                   & -                     \\
      \bottomrule
\end{tabular}

  %\end{table}
\end{wraptable}
{\bf Termination of lifted binaries.} 
As discussed in Sec.~\ref{sec:termination}, there are several termination provers that support bitvector programs.
We thus applied those termination provers to today's lifting results on both
the raw output of \mcsema\ and
then on the output of our translation.
We used 
%(\ie without employing our $\IRde$ and translations in Section~\ref{sec:simplify}) on
a standard termination benchmark (\ie 18 small, but challenging programs in literature selected from the SV-COMP \code{termination-crafted} benchmark). 
%The results are shown in Table~\ref{tab:lift-term-overview} to the right.
As discussed in Sec.~\ref{subsec:trans}, lifted code is more complicated than its corresponding source (\eg $>$10k vs 533 LOC in total).
Although today's termination provers can verify the source of these
programs,  they struggle to analyze 
the corresponding code lifted from the programs' binaries, as seen in the {\bf Raw \mcsema} columns 
in Table \ref{tab:lift-term-overview}%%  to 
%% the right
. 
(Details are given in Tables~\ref{tab:lift-term-detail-mcsema} and~\ref{tab:lift-term-detail-simplify} in Apx.~\ref{apx:fabe}.) 
% and contains low-level expressions to simulate the machine architecture (see Apx.~\ref{apx:lifting-challenges}). Most existing verification tools, which are geared toward high-level source code, cannot handle lifted binaries.  
We devoted genuine effort to overcome small hurdles but, fundamentally, without the \Tool\ translations, tools struggled for the following reasons:
%% \vspace{5em}
\begin{itemize*}
\item {\aprove}: Errors in conversion from LLVM IR to internal representation.
\item {\kittel}: Parsing (from C to {\kittel}'s format via LLVM bitcode with {\llvmkittel}) succeeded, but then {\kittel} silently hung until timeout.
\item {\cpachecker}: Crashes on all benchmarks, while parsing system headers.
% (failed to process system headers).
\item {\ultimate}: Crashes on 3 benchmarks, due to inconsistent type exceptions.
\end{itemize*}
\noindent
Table \ref{tab:lift-term-overview} also shows the verification results of those termination provers when applied to \Tool{}'s translated output (second set of columns). 

In sum, the results show that our translations benefit both {\cpachecker} and {\ultimate} (which already have sophisticated parsers), reducing crashes in analyzing  lifted code.
As highlighted in green,
\Tool\ translations enabled \ultimate\ to prove termination 
on all of the
%, allow. In particular, {\ultimate} can prove the termination of all 
18 lifted programs, as compared to \ultimate\ timing out on 15 of the programs without \Tool's translations.
% for which {\ultimate} times out when applied to the raw \mcsema{} output.

\medskip
\noindent
{\bf LTL of lifted binaries.} We finally evaluate the effectiveness of \Tool{} in proving LTL properties of 8 lifted binaries.
%Each benchmark has two corresponding lifted programs: the raw output of \mcsema\ and its simplified version by our FABE translation. 
In Table~\ref{tab:eval:ltl_lift} we report the LTL property and  expected verification result of each benchmark, as well as the verification time and result of {\ultimate} and {\Tool} on them (details in Apx.~\ref{apx:details:ltl:lift}).
%Some examples are termination properties (\ie~$\FF p$), some are reachability properties (\ie~$\GG p$) and some are more expressive LTL properties such as $\GG(p\Rightarrow \FF q)$.
%
%For each benchmark, we lifted both correct and incorrect binary, for which the LTL property $\square(error = 0)$ (indicating that the error is unreachable) holds and does not hold, respectively.
% https://gcc.gnu.org/bugzilla/show\_bug.cgi?id=43438
% https://gcc.gnu.org/bugzilla/show\_bug.cgi?id=42952
% https://gcc.gnu.org/bugzilla/show\_bug.cgi?id=42721
%\red{fix and maybe use this:}
%Details for LTL lifted binary benchmarks, using vanilla \mcsema{} versus \Tool's translation (Apx~\ref{apx:fabeeval}) and vanilla \ultimate{} versus \Tool{}'s bitwise-branching (Section ~\ref{sec:bitwise}).
% Gray cells are unsound, 
Green cells use slightly different settings (enabled SBE, there are various setting strategies in \ultimate framework~\cite{heizmann_ultimate_nodate}).
\Tool's translations eliminate unsoundness results that come from applying \ultimate\ directly to \mcsema\ IR (detailed table in Apx.~\ref{apx:details:ltl:lift}).
\begin{table}[!h]
    \vspace{-20pt}
    \centering
    \caption{\ultimate\ vs. \Tool\ on lifted programs with LTL properties.}\label{tab:eval:ltl_lift}
    \small
\begin{tabular}{llcc@{\hspace{1em}}rcrc}
    \toprule
    \cmidrule(r){5-8}
                            &                                              &      &  & \multicolumn{2}{c}{\ultimate}  & \multicolumn{2}{c}{\Tool}                         \\

    Benchmark               & Property                                     & Exp. &  & Time                          & Result            & Time       & Result\\
    \cmidrule(r){1-1}\cmidrule(lr){2-2}\cmidrule(l){3-3} \cmidrule(r){5-6}\cmidrule(l){7-8} 
%    example1\_fea.s.c       & Reachability                                 & \NOK &  & \hl 13.06s                    & \hl\OK                    & 6.25s      & \NOK     \\
%    example1\_sea.s.c       & Reachability                                 & \OK  &  & 11.22s                        & \OK                        & 8.51s      & \OK      \\
%    example2\_fea.s.c       & Reachability                                 & \NOK &  & \hl 12.26s                    & \hl \OK                   & 12.66s     & \NOK     \\
%    example2\_sea.s.c       & Reachability                                 & \OK  &  & 14.24s                        & \OK                       & 8.94s      & \OK      \\
%    example3\_fea.s.c       & Reachability                                 & \NOK &  & \hl 10.47s                    & \hl \OK                  & 8.34s      & \NOK     \\
%    example3\_sea.s.c       & Reachability                                 & \OK  &  & 11.11s                        & \OK                      & 6.67s      & \OK      \\
%    nondet\_gccO0.s.c       & Reachability                                 & \NOK &  & 4.57s                         & \FAIL                     & 10.59s     & \NOK     \\
%\hline
%    simple3\_gccO0.s.c      & Termination                                  & \OK  &  & 4.97s                         & \FAIL                     & 15.87s     & \OK      \\
%    simple3\_gccO0.s.f.c    & Termination                                  & \NOK &  & 4.96s                         & \FAIL                     & \hlg20.85s & \hlg\NOK \\
%\hline
    01-exsec2.s.c           & $\lozenge (\square x=1)$                     & \OK  &  & 4.45s                         & \FAIL                     & 11.23s     & \OK      \\
    01-exsec2.s.f.c.c       & $\lozenge (\square x \neq 1)$                & \NOK &  & 6.31s                         & \FAIL                    & 10.36s     & \NOK     \\
    SEVPA\_gccO0.s.c        & $\square  ( x > 0 \Rightarrow \lozenge y=0)$ & \OK  &  & 6.31s                         & \FAIL                     & \hlg22.92s & \hlg\OK  \\
    SEVPA\_gccO0.s.f.c      & $\square  ( x > 0 \Rightarrow \lozenge y=2)$ & \NOK &  & 5.16s                         & \UNK                     & \hlg14.92s & \hlg\NOK \\
    acqrel.simplify.s.c     & $\square (x=0 \Rightarrow \lozenge y=0)$     & \OK  &  & 5.17s                         & \FAIL                     & 9.00s      & \OK      \\
    acqrel.simplify.s.f.c.c & $\square (x=0 \Rightarrow \lozenge y=1)$     & \NOK &  & 6.06s                         & \FAIL                     & 17.60s     & \NOK     \\
    exsec2.simplify.s.c     & $\square \lozenge x=1$                       & \OK  &  & 4.92s                         & \FAIL                     & 5.60s      & \OK      \\
    exsec2.simplify.s.f.c.c & $\square \lozenge x\neq 1$                   & \NOK &  & 4.55s                         & \FAIL                     & 6.28s      & \NOK     \\
    \bottomrule
\end{tabular}

    \vspace{-20pt}
\end{table}

In summary, we have shown that \Tool{} can verify reachability, termination and LTL properties of lifted binaries. To our knowledge, \Tool{} is the first to do so.

\section{Related Work}
\label{sec:relwork}

\emph{Bitvector reasoning.}
%Many works have investigated reasoning techniques for bitvector expressions and operations.
%
%\added{However, to the best of our knowledge, existing works have not explored bitwise abstractions that were able to facilitate the linear inequality reasoning necessary for automated LTL verification.}
%
Many works support bitvector reasoning in SMT solvers (\eg~\cite{wintersteiger_efficiently_2013}).
Kroening \etal~\cite{hermanns_approximating_2006} 
perform predicate image over-approximation.
%describe a method of predicate image over-approximation, the algorithm can be used to compute images of predicates using bitvector logic.
%, but with the sacrifice some precision. 
%In this paper we focus on de-compiled program, which heavily use bit-wise operators, requires more precision. 
%
Niemetz \etal~\cite{niemetz2019bitwidthindependent} propose a translation from bitvector formulas with parametric bit-width to formulas in a logic supported by SMT solvers, making SMT-based procedures available for variant-size bitvector formulas.

He and Rakamari\'{c}~\cite{chang_counterexample-guided_2017} 
build on spurious counterexamples from overapproximations of bitvector operations.
%examine spurious counterexamples introduced by over-approximation of bit-wise operations in SMT solving.
%\red{present a approach to diagnose spurious counter examples that introduced by over-approximation of bit-wise operations in SMT solving, the algorithm performs type analysis to transform both a counter-example and program. However, in our work with binary program, the type information are lost during the compilation, performing type analysis is challenging. }
%
Mattsen \etal~\cite{mattsen_non-convex_2015} use a BDD-based abstract domain for indirect jump reasoning.
Bryant \etal~\cite{grumberg_deciding_2007} 
iterative construct an abstraction of a bit vector formula.
%describe an SAT-based decision procedure that iteratively constructs an abstraction of a bit vector formula. One potential direction for future work could be to explore how iterative algorithms  could be used to synthesize additional rewrite rules in circumstances when ours maybe insufficient.

Other works have targeted reasoning about \emph{termination}  of bitvector programs.
Cook \etal~\cite{CookBV2010} use Presburger arithmetic for
representing rank functions.
% for representing bitvector programs and rank functions.
Chen \etal~\cite{Chen2018} employ lexicographic rank function synthesis for bit precision and rely on the bit-precision of an underlying SMT solver.
% in an interprocedural termination analysis, through a mixture of under- and over-approximation. They rely on the bit-precision of an underlying SMT solver.
% \red{compare?}
%
%\red{Chanh, update and trim this part:} 
Falke \etal~\cite{Falke2012} propose an approach, implemented in \kittel, which 
derives linear approximations of bitvector operations using some rules similar to our bitwise-branching rules for expressions. However, Falke \etal~create a large disjunction of cases which puts a large burden on the solver. By contrast, our bitwise-branching creates multiple verification paths, but solver queries for most of them can be avoided through integer interpolation.
%shifting the burden to more efficient automata reasoning instead.
% \chanh{which leads to more efficient case analysis?}. 
As we show in Sec.~\ref{sec:termination}, our {\ultimateBB} was able to solve 33/49 benchmarks, where as \kittel\ solved only 10.
%, timed out on 19, and failed on 5 of them.
Moreover, \kittel\ does not support LTL properties and crashes on lifted binaries.
%
%
%Chen \etal~\cite{chen_bit-precise_2018} use lexographic rank functions for bit-precision in interprocedural termination analysis.

% These aren't about bitvectors
%A couple works have considered termination of JavaBytecode 
%Payet \etal~\cite{payet_non-termination_2014} posted an ArXiv document describing \red{todo}
%Java bytecode~\cite{payet_non-termination_2014,hutchison_termination_2008}. \red{todo}

%~\cite{de_nicola_proving_2016,chechik_synthesizing_2016,chen_synthesising_2015,chen_bit-precise_2018,vitek_unrestricted_2015,
%hutchison_proving_2014,
%joshi_termination_2012,
%cook_ranking_nodate}
%as well as

\emph{Tools for disassembly and decompilation.} 
%Disassembly is the process of recovering instructions and structural information (\eg\ functions and control flow) from the binary, providing the basis for all types of binary analysis. For verification, an essential property of disassembly is correctness and many recent works strive for this property. 
Jakstab~\cite{kinder2010precise} focuses on accurate control flow reconstruction in the \emph{disassembly} process.
% using so-called bounded address tracking to resolve targets of control flow transfers. 
%Brumley \etal~\cite{Brumley2011} described BAP, a tool and framework for static disassembly of stripped binaries. 
BAP~\cite{Brumley2011} performs static disassembly of stripped binaries. 
Angr~\cite{shoshitaishvili2016sok} includes symbolic execution and value-set analysis used especially for control flow reconstruction. IDA Pro~\cite{idapro} (used in \Tool) demonstrated high accuracy and uses value-set-analysis. 
%
%Raw disassembly output is highly architecture-specific so other works further de-compile to higher level representations such as LLVM IR or C code. 
%
Hex-Rays Decompiler~\cite{HexRaysD5:online}, Ghidra~\cite{Ghidra30:online}, and Snowman~\cite{Snowman82:online} 
further de-compile disassembled output to higher level representations such as LLVM IR or C code. 
%endeavor to reconstruct the abstractions that were present in the original source code. 

\emph{Verifying binaries.} 
Some works focus on the low-level aspects of the binary and aim at precise de-compilation. Roessle \etal~\cite{Roessle2019} de-compile x86-64 into a big step semantics.
% The authors present a formal semantics of x86-64, prove equivalence, and describe a method for automatically de-compiling x86-64 into the big-step semantics. 
Earlier, others performed ``decompilation-into-logic'' (DiL)~\cite{Myreen2007,Myreen2008,Myreen2012}, translating assembly code into logic.
%, perhaps using theorem provers such as HOL4 or Isabelle. 
While DiL provides a rich environment for precise reasoning about fine-grained instruction-level details, it incurs high complexity for reasoning about more coarse-grained properties such as reachability, termination, and temporal logic.
In more recent work, 
Verbeek \etal~\cite{de_boer_sound_2020} use the semantics of Roessle \etal~\cite{Roessle2019} and describe techniques 
%and their tool FoxDec
to decompile into re-compilable code.
%, exploiting semantics for more soundness guarantees.

%In this paper we focus on coarse-grained properties such as reachability and doing so automatically by employing automatic abstraction techniques. To this end, we choose McSema for de-compilation because, like Hex-Rays Decompiler and Ghidra, McSema aims at coarse-grained goals such as re-compilation. 
%It therefore prevents verification tools from handling the low-level complexities by abstracting over some details. On the other hand, McSema simulates the effects of instructions without aggressive inferences. This avoids introducing errors to de-compilation.  

Others focus on verifying the decompilation/lifting process itself.
% This direction requires a form of refinement reasoning that the lifted code simulates the binary, akin to translation validation and certified compilation.  These works do not target verifying the behavior of the binary itself but are nonetheless important orthogonal research steps since our LTL verification relies on sound lifting.
%
Dasgupta \etal~\cite{dasgupta_scalable_2020} describe a translation validation on x86-64 instructions that employs their semantics for x86-64 (Dasgupta \etal~\cite{dasgupta_complete_2019}). 
%They observe that instruction-level validation can then be used to enable program-level translation validation and have applied their work to improve McSema (which we rely on in our implementation).
%
Metere \etal~\cite{metere_sound_2017} use HOL4 to verify a translation from ARMv8 to BAP.
Hendrix \etal~\cite{hendrix_towards_nodate} discuss their ongoing work on verifying the translation performed by their lifting tool \code{reopt}.
% on a block-by-block basis with some block- and function-level annotations.
%
Numerous other works (\eg~Sail~\cite{sail-popl20}) provide formal semantics of ISAs.

\section{Conclusion}
We have shown that a source-level translation to approximate bitvector operations leads to tools that are competitive to the state-of-the-art in reachability and termination of bitvector programs.
We show that bitwise branching incurs negligible overhead, yet enables more programs to be verified. Notably, we showed that this approach leads to the first effective technique for verifying LTL of bitvector programs and, to our knowledge, the first technique for verifying reachability, termination and LTL of lifted binary programs.  

\emph{Acknowledgments.}
%% Omitted for blind reviewing.
This work is supported by ONR Grant \#N00014-17-1-2787.

{\footnotesize
\bibliographystyle{splncs04}
\bibliography{biblio}
}

\vfill
\pagebreak
\appendix
\section{Full Lifted Code for \lstinline@PotentialMinimizeSEVPABug@ (Sec.~\ref{sec:motiv})}
\label{apx:lifted-potentialminimize}

%% \begin{figure}[!h]
%% \begin{minipage} {\textwidth}
	%% \centering

\begin{lstlisting}[language=C, basicstyle=\scriptsize]
//@ ltl invariant positive: ([] ( AP(x > 0) ==> <>AP(y==0)));

/* Provide Declarations */
#include <stdarg.h>
#include <setjmp.h>
#include <limits.h>
#include <stdint.h>
#include <math.h>

/* Global Declarations */

/* Types Declarations */
struct l_struct_x_type;
struct l_struct_y_type;
struct l_struct_union_OC_anon;
struct l_struct_struct_OC_ArchState;
struct l_struct_struct_OC_uint64v8_t;
struct l_struct_union_OC_vec512_t;
struct l_struct_union_OC_VectorReg;
struct l_struct_struct_OC_ArithFlags;
struct l_struct_union_OC_SegmentSelector;
struct l_struct_struct_OC_Segments;
struct l_struct_struct_OC_Reg;
struct l_struct_struct_OC_AddressSpace;
struct l_struct_struct_OC_GPR;
struct l_struct_struct_OC_anon_OC_3;
struct l_struct_struct_OC_X87Stack;
struct l_struct_struct_OC_uint64v1_t;
struct l_struct_union_OC_vec64_t;
struct l_struct_struct_OC_anon_OC_4;
struct l_struct_struct_OC_MMX;
struct l_struct_struct_OC_FPUStatusFlags;
struct l_struct_union_OC_FPUAbridgedTagWord;
struct l_struct_union_OC_FPUControlStatus;
struct l_struct_struct_OC_float80_t;
struct l_struct_union_OC_anon_OC_11;
struct l_struct_struct_OC_FPUStackElem;
struct l_struct_struct_OC_uint128v1_t;
struct l_struct_union_OC_vec128_t;
struct l_struct_struct_OC_FpuFXSAVE;

/* Types Definitions */
struct l_array_4_ureplace_u8int {
   int array[4];
};
struct l_struct_x_type {
  struct l_array_4_ureplace_u8int field0;
} __attribute__ ((packed));
struct l_array_8_ureplace_u8int {
   int array[8];
};
struct l_struct_y_type {
  struct l_array_8_ureplace_u8int field0;
} __attribute__ ((packed));
struct l_struct_union_OC_anon {
   int field0;
};
struct l_struct_struct_OC_ArchState {
   int field0;
   int field1;
  struct l_struct_union_OC_anon field2;
};
struct l_array_8_ureplace_u64int {
   int array[8];
};
struct l_struct_struct_OC_uint64v8_t {
  struct l_array_8_ureplace_u64int field0;
};
struct l_struct_union_OC_vec512_t {
  struct l_struct_struct_OC_uint64v8_t field0;
};
struct l_struct_union_OC_VectorReg {
  struct l_struct_union_OC_vec512_t field0;
};
struct l_array_32_struct_AC_l_struct_union_OC_VectorReg {
  struct l_struct_union_OC_VectorReg array[32];
};
struct l_struct_struct_OC_ArithFlags {
   int field0;
   int field1;
   int field2;
   int field3;
   int field4;
   int field5;
   int field6;
   int field7;
   int field8;
   int field9;
   int field10;
   int field11;
   int field12;
   int field13;
   int field14;
   int field15;
};
struct l_struct_union_OC_SegmentSelector {
   short field0;
};
struct l_struct_struct_OC_Segments {
   short field0;
  struct l_struct_union_OC_SegmentSelector field1;
   short field2;
  struct l_struct_union_OC_SegmentSelector field3;
   short field4;
  struct l_struct_union_OC_SegmentSelector field5;
   short field6;
  struct l_struct_union_OC_SegmentSelector field7;
   short field8;
  struct l_struct_union_OC_SegmentSelector field9;
   short field10;
  struct l_struct_union_OC_SegmentSelector field11;
};
struct l_struct_struct_OC_Reg {
  struct l_struct_union_OC_anon field0;
};
struct l_struct_struct_OC_AddressSpace {
   int field0;
  struct l_struct_struct_OC_Reg field1;
   int field2;
  struct l_struct_struct_OC_Reg field3;
   int field4;
  struct l_struct_struct_OC_Reg field5;
   int field6;
  struct l_struct_struct_OC_Reg field7;
   int field8;
  struct l_struct_struct_OC_Reg field9;
   int field10;
  struct l_struct_struct_OC_Reg field11;
};
struct l_struct_struct_OC_GPR {
   int field0;
  struct l_struct_struct_OC_Reg field1;
   int field2;
  struct l_struct_struct_OC_Reg field3;
   int field4;
  struct l_struct_struct_OC_Reg field5;
   int field6;
  struct l_struct_struct_OC_Reg field7;
   int field8;
  struct l_struct_struct_OC_Reg field9;
   int field10;
  struct l_struct_struct_OC_Reg field11;
   int field12;
  struct l_struct_struct_OC_Reg field13;
   int field14;
  struct l_struct_struct_OC_Reg field15;
   int field16;
  struct l_struct_struct_OC_Reg field17;
   int field18;
  struct l_struct_struct_OC_Reg field19;
   int field20;
  struct l_struct_struct_OC_Reg field21;
   int field22;
  struct l_struct_struct_OC_Reg field23;
   int field24;
  struct l_struct_struct_OC_Reg field25;
   int field26;
  struct l_struct_struct_OC_Reg field27;
   int field28;
  struct l_struct_struct_OC_Reg field29;
   int field30;
  struct l_struct_struct_OC_Reg field31;
   int field32;
  struct l_struct_struct_OC_Reg field33;
};
struct l_struct_struct_OC_anon_OC_3 {
   int field0;
   int field1;
};
struct l_array_8_struct_AC_l_struct_struct_OC_anon_OC_3 {
  struct l_struct_struct_OC_anon_OC_3 array[8];
};
struct l_struct_struct_OC_X87Stack {
  struct l_array_8_struct_AC_l_struct_struct_OC_anon_OC_3 field0;
};
struct l_array_1_ureplace_u64int {
   int array[1];
};
struct l_struct_struct_OC_uint64v1_t {
  struct l_array_1_ureplace_u64int field0;
};
struct l_struct_union_OC_vec64_t {
  struct l_struct_struct_OC_uint64v1_t field0;
};
struct l_struct_struct_OC_anon_OC_4 {
   int field0;
  struct l_struct_union_OC_vec64_t field1;
};
struct l_array_8_struct_AC_l_struct_struct_OC_anon_OC_4 {
  struct l_struct_struct_OC_anon_OC_4 array[8];
};
struct l_struct_struct_OC_MMX {
  struct l_array_8_struct_AC_l_struct_struct_OC_anon_OC_4 field0;
};
struct l_struct_struct_OC_FPUStatusFlags {
   int field0;
   int field1;
   int field2;
   int field3;
   int field4;
   int field5;
   int field6;
   int field7;
   int field8;
   int field9;
   int field10;
   int field11;
   int field12;
   int field13;
   int field14;
   int field15;
   int field16;
   int field17;
   int field18;
   int field19;
  struct l_array_4_ureplace_u8int field20;
};
struct l_struct_union_OC_FPUAbridgedTagWord {
   int field0;
};
struct l_struct_union_OC_FPUControlStatus {
   int field0;
};
struct l_array_10_ureplace_u8int {
   int array[10];
};
struct l_struct_struct_OC_float80_t {
  struct l_array_10_ureplace_u8int field0;
};
struct l_struct_union_OC_anon_OC_11 {
  struct l_struct_struct_OC_float80_t field0;
};
struct l_array_6_ureplace_u8int {
   int array[6];
};
struct l_struct_struct_OC_FPUStackElem {
  struct l_struct_union_OC_anon_OC_11 field0;
  struct l_array_6_ureplace_u8int field1;
};
struct l_array_8_struct_AC_l_struct_struct_OC_FPUStackElem {
  struct l_struct_struct_OC_FPUStackElem array[8];
};
struct l_array_96_ureplace_u8int {
   int array[96];
};
struct l_struct_struct_OC_SegmentShadow {
  struct l_struct_union_OC_anon field0;
   int field1;
   int field2;
};
struct l_struct_struct_OC_SegmentCaches {
  struct l_struct_struct_OC_SegmentShadow field0;
  struct l_struct_struct_OC_SegmentShadow field1;
  struct l_struct_struct_OC_SegmentShadow field2;
  struct l_struct_struct_OC_SegmentShadow field3;
  struct l_struct_struct_OC_SegmentShadow field4;
  struct l_struct_struct_OC_SegmentShadow field5;
};
struct l_struct_struct_OC_State {
  struct l_struct_struct_OC_ArchState field0;
  struct l_array_32_struct_AC_l_struct_union_OC_VectorReg field1;
  struct l_struct_struct_OC_ArithFlags field2;
  struct l_struct_union_OC_anon field3;
  struct l_struct_struct_OC_Segments field4;
  struct l_struct_struct_OC_AddressSpace field5;
  struct l_struct_struct_OC_GPR field6;
  struct l_struct_struct_OC_X87Stack field7;
  struct l_struct_struct_OC_MMX field8;
  struct l_struct_struct_OC_FPUStatusFlags field9;
  struct l_struct_union_OC_anon field10;
  struct l_struct_struct_OC_SegmentCaches field12;
};

/* External Global Variable Declarations */
extern struct l_struct_struct_OC_State* globalState;

/* Function Declarations */
sub_401106_foo(struct l_struct_struct_OC_State* tmp__14,  int tmp__15, void* tmp__16);
void* sub_401106___VERIFIER_nondet_int(struct l_struct_struct_OC_State* tmp__39,  int tmp__40, void* tmp__41);
extern void __VERIFIER_nondet_unsigned() __attribute__ ((__));

extern void __VERIFIER_assume() __attribute__ ((__noreturn__));

/* Global Variable Definitions and Initialization */
int x;
int y;
int STATE_REG_RAX ;


/* LLVM Intrinsic Builtin Function Bodies */
static   int llvm_add_u32( int a,  int b) {
   int r = a + b;
  return r;
}
static   int llvm_lshr_u32( int a,  int b) {
   int r = a >> b;
  return r;
}
static   int llvm_and_u8( int a,  int b) {
   int r = a & b;
  return r;
}
static   int llvm_xor_u8( int a,  int b) {
   int r = a ^ b;
  return r;
}


/* Function Bodies */

 void* main(struct l_struct_struct_OC_State* tmp__1,  int tmp__2, void* tmp__3) {
  struct l_struct_struct_OC_State* tmp__4;
   int* tmp__5;
   int* tmp__6;
   int* tmp__7;
   int* tmp__8;
   int* tmp__9;
   int* tmp__10;
  void* tmp__11;
   int tmp__12;
   int tmp__13;
   int tmp__14;
   int tmp__14__PHI_TEMPORARY;
   int tmp__15;
   int tmp__16;
   int tmp__17;
   int tmp__18;
   int _2e_lcssa3;
   int _2e_lcssa3__PHI_TEMPORARY;
  int _2e_lcssa2;
  int _2e_lcssa2__PHI_TEMPORARY;
   int _2e_lcssa1;
   int _2e_lcssa1__PHI_TEMPORARY;
   int tmp__19;

  tmp__4 = globalState;
  tmp__5 = (&tmp__4->field2.field1);
  tmp__6 = (&tmp__4->field2.field3);
  tmp__7 = (&tmp__4->field2.field7);
  tmp__8 = (&tmp__4->field2.field9);
  tmp__9 = (&tmp__4->field2.field13);
  tmp__10 = (&tmp__4->field2.field5);
  goto block_401119;

  do {     /* Syntactic loop 'block_401119' to make GCC happy */
block_401119:
     STATE_REG_RAX = __VERIFIER_nondet_int();
 
    tmp__11 =  /*tail*/ sub_401106___VERIFIER_nondet_int(/*UNDEF*/((struct l_struct_struct_OC_State*)/*NULL*/0), /*UNDEF*/(0UL), tmp__3);
    tmp__12 = STATE_REG_RAX;
    x = tmp__12;
    y = 1;
    tmp__13 = tmp__12 >> 31;
  

  /* if (((((((tmp__13 == 0u)&1)) & (((~((((tmp__12 == 0u)&1))))&1)))&1))) { */
    if (((((tmp__13 == 0u)&1)) && (tmp__12 != 0u))&1) {

    tmp__14__PHI_TEMPORARY = tmp__12;   /* for PHI node */
    goto block_401135;
  } else {
    _2e_lcssa3__PHI_TEMPORARY = tmp__12;   /* for PHI node */
    _2e_lcssa2__PHI_TEMPORARY = (((tmp__12 == 0u)&1));   /* for PHI node */
    _2e_lcssa1__PHI_TEMPORARY = tmp__13;   /* for PHI node */
    goto block_401163;
  }
 
  do {     /* Syntactic loop 'block_401135' to make GCC happy */
block_401135:
  tmp__14 = tmp__14__PHI_TEMPORARY;
  tmp__15 = tmp__14-1;
  tmp__16 = tmp__14-2;
  tmp__17 = tmp__16>>31;
  tmp__18 = tmp__15>>31;
 
  /* if (((((((tmp__16 != 0u)&1)) & ((((((tmp__17 == 0u)&1)) ^ ((((llvm_add_u32((tmp__17 ^ tmp__18), tmp__18)) == 2u)&1)))&1)))&1))) { */
  if ( ((tmp__16 != 0u)&1) &&(tmp__17 == 0u)) {
    goto block_401159_2e_backedge;
  } else {
    goto block_40114f;
  }

block_40114f:
  y = 0;
  goto block_401159_2e_backedge;

block_401159_2e_backedge:
  /* if (((((((tmp__18 == 0u)&1)) & (((~((((tmp__15 == 0u)&1))))&1)))&1))) { */

    if (((((((tmp__18 == 0u)&1)) && ( tmp__15 != 0u) )&1))) {
    tmp__14__PHI_TEMPORARY = tmp__15;   /* for PHI node */
    goto block_401135;
  } else {
    goto block_401159_2e_block_401163_crit_edge;
  }

  } while (1); /* end of syntactic loop 'block_401135' */
block_401159_2e_block_401163_crit_edge:
   x = tmp__15;
  _2e_lcssa3__PHI_TEMPORARY = tmp__15;   /* for PHI node */
  _2e_lcssa2__PHI_TEMPORARY = (((tmp__15 == 0u)&1));   /* for PHI node */
  _2e_lcssa1__PHI_TEMPORARY = tmp__18;   /* for PHI node */
  goto block_401163;

block_401163:
  _2e_lcssa3 = _2e_lcssa3__PHI_TEMPORARY;
  _2e_lcssa2 = ((_2e_lcssa2__PHI_TEMPORARY)&1);
  _2e_lcssa1 = _2e_lcssa1__PHI_TEMPORARY;
  tmp__19 =  /*tail*/ llvm_OC_ctpop_OC_i32((_2e_lcssa3 & 255));
  *tmp__5 = 0;
  *tmp__6 =  ( ((((int)tmp__19))& 1))^ 1 ;
  *tmp__7 = ((( int)(int)_2e_lcssa2));
  *tmp__8 = ((( int)_2e_lcssa1));
  *tmp__9 = 0;
  *tmp__10 = 0;
  goto block_401119;

  } while (1); /* end of syntactic loop 'block_401119' */
}

\end{lstlisting}
%% \end{minipage}
%% \end{figure}

\vfill
\pagebreak
\renewcommand\ruleLabel[1]{[{\sc #1}]}
\section{Bitwise Branching Rules with Labels}
\label{apx:rules}
The rules from Section~\ref{sec:bitwise} are reproduced here, now with labels.
\begin{center}
  \begin{table}
    
  \end{table}
  
  \begin{table}
\begin{adjustbox}{width=\textwidth}
    
\end{adjustbox}
  \end{table}
\end{center}

\section{Bitwise Branching Algorithm}
\label{apx:alg}
%  |$\mid$| RuleExp of (exp -> exp -> exp) * (exp -> exp -> exp)
%  |$\mid$| RuleStmt of (exp -> exp -> exp) * (exp -> exp -> stmt list)
\begin{tabular}{|l|l|}
\hline
\begin{minipage}[t]{2.3in}
\begin{lstlisting}[language=caml, basicstyle=\tt\scriptsize]
type rule_exp = (exp -> exp -> exp) * (exp -> exp -> exp) 
let rec |$T_E$| (e:exp) : exp =
  match e with
  |$\mid$| BinOp(|$\otimes$|,e1,e2) ->
    let e1' = |$T_E$| e1 in
    let e2' = |$T_E$| e2 in
    let rules = Rules.find_exp(|$\otimes$|) in
    fold_left (fun acc (cond,repl) ->
      ITE(cond e1' e2',repl e1' e2',acc)
    )  (BinOp(|$\otimes$|,e1, e2)) rules
  |$\mid$| _ -> e
\end{lstlisting}
\end{minipage}
&
\begin{minipage}[t]{2.3in}
\begin{lstlisting}[language=caml, basicstyle=\tt\scriptsize]
type rule_stmt = (exp -> exp -> exp) * (lhs -> exp -> exp -> stmt)
let |$T_s$| (s:stmt) : stmt =
  match s with
  |$\mid$| Assign(lhs,BinOp(|$\otimes$|,e1,e2)) ->
    let e1' = |$T_E$| e1 in
    let e2' = |$T_E$| e2 in
    let rules = Rules.find_stmt(|$\otimes$|)in
    fold_left (fun acc (cond,repl) ->
      IfElseStmt(cond e1' e2',repl lhs e1' e2', acc)
    ) (Assign(l, BinOp(|$\otimes$|, e1, e2) rules
   |$\mid$| _ -> s
\end{lstlisting}
\end{minipage}\\
\hline
\end{tabular}
%    let rec |$T_s$| (b:bexp) : bexp =
%  match b with
%  |$\mid$| Cond(|$\oplus$|,e1,e2) ->
%      let rules = Rules.find_cond(|$\oplus$|) in
%      |\red{what do we do? conjunction of all rules?}|
%  |$\mid$| _ -> b
%(*we only transform binary expressions appear on rhs, no recursion needed for statement*)  

\vfill
\pagebreak
\section{Proofs of Bitwise Branching Rules}
\label{apx:rules:proof}

%% \begin{figure}[!h]
%% \begin{minipage} {\textwidth}
	%% \centering

\begin{lstlisting}[language=Python, basicstyle=\scriptsize]
from z3 import *

def prove(r, f):
    s = Solver()
    s.add(Not(f))
    if s.check() == unsat:
        print ("proved rule: " + r)
    else:
        print ("failed to prove rule: " + r)
        print(s.model())

def vec2bool(v):
    if v!=0: return True
    else: return False

def bool2vec(b):
    if (b==True): return BitVecVal(1,1)
    else: return BitVecVal(0,1)

# prove RS-POS rule
rule_1 = "RS-POS: check((n>>31)==0) <==> n>=0"
n_1 = BitVec('n_1', 32)
A_1 = ((n_1>>31)==0)
B_1 = (n_1 >= 0)
constraints_1=And (Implies (A_1, B_1), Implies (B_1, A_1))
prove(rule_1, constraints_1)

# prove RS-NEG rule
# 32 bit 2's complement
rule_2 = "RS-NEG: check((n>>31)==-1) <==> n<0"
n_2 = BitVec('n_2', 32)
A_2 = ((n_2>>31)==-1)
B_2 = (n_2 < 0)

constraints_2=And(Implies(A_2, B_2), Implies(B_2, A_2))
prove(rule_2, constraints_2)

# prove AND-1 rule
rule-3="AND-1: check(n&1) <==> n"
n_3 = BitVec('n_3', 32)
A_3 = ((n_3&1) == n_3)
B_3 = Or(n_3==0, n_3==1)
constraints_3= And (Implies(A_3, B_3), Implies (B_3, A_3))
prove(rule_3, constraints_3)

# prove AND-0 rule
rule_4 = "AND-0: check(n&0) <==> 0"
n_4 = BitVec('n_4', 32)
A_4 = ((n_4 & 0) == 0)
prove(rule_4, A_4)
# print ("AND-0 rule: check(n&0)==0," + str(s4.check()))

# prove XOR-0 rule
rule_6="XOR-0: check(n^0) <==> n"
n_6 = BitVec('n_6', 32)
A_6 = ((n_6^0) == n_6)
prove(rule_6, A_6)

# prove XOR-EQ rule
rule_6_1 = "XOR-EQ: check(n=1 or 0, e = 1 or 0, n==e ) ==> 0"
n_6_1, e_6_1 = BitVecs('n_6_1 e_6_1', 32)
A_6_1 = ((n_6_1^e_6_1) == 0)
B_6_1 = And(Or(n_6_1 ==1, n_6_1 == 0), Or(e_6_1 ==1, e_6_1 == 0), n_6_1 == e_6_1)
prove(rule_6_1, Implies(B_6_1, A_6_1))

# prove XOR-NEQ rule
rule_6_2 = "XOR-NEQ: check(n=1 or 0, e = 1 or 0, n!=e ) ==> 1"
n_6_2, e_6_2 = BitVecs('n_6_2 e_6_2', 32)
A_6_2 = ((n_6_2^e_6_2) == 1)
B_6_2 = And(Or(n_6_2 == 1, n_6_2 == 0), Or(e_6_2 == 1, e_6_2 == 0), n_6_2 != e_6_2)
prove(rule_6_2, Implies(B_6_2, A_6_2))

rule_7="NOT-SWITCH: check(~(n==b) <==> n!=b)"
n_7, b = BitVecs('n_7 b', 32)
A_7 = (Not(n_7==b))
B_7 = (n_7!=b)
constraints_7 = And(Implies(A_7, B_7))
prove(rule_7, constraints_7)

# prove OR-1 rule
rule_8="OR-1: n$|$1 <==> 1 "
n_8 = BitVec('n_8', 32)
A_8 = ((n_8$|$1)==1)
B_8 = Or (n_8 ==1, n_8==0)
prove(rule_8, And (Implies(B_8, A_8), Implies(A_8, B_8)))

# prove OR-0 rule
rule_9="OR-0: n$|$0 <==> n "
n_9 = BitVec('n_9', 32)
A_9 = ((n_9$|$0)==n_9)
prove(rule_9, A_9)


# prove BOOL-TRUE rule
rule_10="BOOL-TRUE: n!=0 ==> True "
n_10 = BitVec('n_10', 32)
A_10 = Implies(n_10!=0,True)
prove(rule_10, A_10)
  
# General rules &, operators are mutual exclusive negative
rule_and_xneg = "r=a&b exclusive negative: r = a&b, a>0, b<0 ==> r<=a, r>0"
and_a3, and_b3, and_r3= BitVecs('and_a3 and_b3 and_r3', 32)
constr_and_xneg = Implies(And(and_a3 >=0 , and_b3<0, and_r3==(and_a3&and_b3)), And(and_r3 <=and_a3, and_r3>=0))
prove(rule_and_xneg, constr_and_xneg)

# General rules &, both operators are non-negative
rule_and4 = "r<=a&b less than: r <= a&b, a>=0, b>=0 ==> r<=a, r<=b"
and_a4, and_b4, and_r4= BitVecs('and_a4 and_b4 and_r4', 32)
constr_and4 = Implies(And(and_a4 >= 0, and_b4 >= 0, and_r4 <= (and_a4 & and_b4)), And(and_r4 < and_a4, and_r4 < and_b4))
prove(rule_and4, constr_and4)

# General rules &, both operators are negative
rule_and_neg5 = "r=a&b both negative: r < a&b, a<0, b<0 ==> r<=a, r<=b, r<0"
and_a5, and_b5, and_r5= BitVecs('and_a5 and_b5 and_r5', 32)
constr_and_neg5 = Implies(And(and_a5 < 0, and_b5<0, and_r5<(and_a5&and_b5)), And(and_r5 < and_a5, and_r5 < and_b5, and_r5<0))
prove(rule_and_neg5, constr_and_neg5)

# And logic rules &, both operators are one bit size
rule_and_neg6 = "a&b both one bit: a&b -> a&&b"
and_a6, and_b6 = BitVecs('and_a6 and_b6', 1)
constr_and_neg6 = Implies((and_a6&and_b6)==1, And(and_a6 ==1, and_b6 ==1))
prove(rule_and_neg6, constr_and_neg6)

# General rules or, both operators are non-negative
rule_or = "r=a$|$b genral: r >= a$|$b, a>=0, b>=0 ==> r>=a, r>=b"
or_a1, or_b1, or_r1= BitVecs('or_a1 or_b1 or_r1', 32)
constr_or = Implies(And(or_a1 >= 0, or_b1>=0, or_r1>=(or_a1$|$or_b1)), And(or_r1 >= or_a1, or_r1 >= or_b1))
prove(rule_or, constr_or)

# General rules or, both operators are negative(two's complement) 
rule_or_neg = "r=a$|$b genral both negative: r = a$|$b, a<0, b<0 ==> r>=a, r>=b, r<0"
or_a2, or_b2, or_r2= BitVecs('or_a2 or_b2 or_r2', 32)
constr_or_neg = Implies(And(or_a2 < 0, or_b2 < 0, or_r2==(or_a2$|$or_b2)), And(or_r2 >= or_a2, or_r2 >= or_b2, or_r2<0))
prove(rule_or_neg, constr_or_neg)

# General rules or, both operators are mutual exclusicve negative(two's complement) 
rule_or_xneg = "r=a$|$b genral mutual exclusive negative: r = a$|$b, a<0, b>=0 ==> r>=a, r<0"
or_a3, or_b3, or_r3= BitVecs('or_a3 or_b3 or_r3', 32)
constr_or_xneg = Implies(And(or_a3 >= 0, or_b3 < 0, or_r3==(or_a3$|$or_b3)), And(or_r3 >= or_b3, or_r3<0))
prove(rule_or_xneg, constr_or_xneg)

# Or logic rule, both operators are one bit
rule_or_log = "a$|$b == 0 to logic: (a$|$b) == 0 ==> a==0 && b==0 ==> r>=a, r>=b"
or_a4, or_b4 = BitVecs('or_a4 or_b4', 1)
constr_or_log = Implies((or_a4$|$or_b4 == 0), And(or_a4==0,or_b4 == 0))
prove(rule_or_log, constr_or_log)

# General rules xor, both operators are non-negative
rule_xor = "r>=a^b genral, both non-negative: r >= a^b, a>=0, b>=0 ==> r>=0"
xor_a, xor_b, xor_r= BitVecs('xor_a xor_b xor_r', 32)
constr_xor = Implies(And(xor_a >= 0, xor_b>=0, xor_r>=(xor_a^xor_b)), And(xor_r >= 0))
prove(rule_xor, constr_xor)

# General rules xor, both operators are negative(two's complement) 
rule_xor_neg = "r>=a^b genral both negative: r >= a^b, a<0, b<0 ==> r<0"
xor_a1, xor_b1, xor_r1= BitVecs('xor_a1 xor_b1 xor_r1', 32)
constr_xor_neg = Implies(And(xor_a1 < 0, xor_b1 < 0, xor_r1>=(xor_a1^xor_b1)), And(xor_r1>=0))
prove(rule_xor_neg, constr_xor_neg)

# General rules xor, operators are mutual exlusive non-negative(two's complement) 
rule_xor_xneg = "r<=a^b genral mutual exlusive negative: r <= a^b, a<0, b>=0 ==> r<0"
xor_a2, xor_b2, xor_r2= BitVecs('xor_a2 xor_b2 xor_r2', 32)
constr_xor_xneg = Implies(And(xor_a2 < 0, xor_b2 >= 0, xor_r2<=(xor_a2^xor_b2)), And(xor_r2<=0))
prove(rule_xor_xneg, constr_xor_xneg)

# xor to logic rule, operators are one bit 
rule_xor_log = "a^b one  bit ==> (a==0 && b == 1 || a==1 && b==0)"
xor_a3, xor_b3 = BitVecs('xor_a3 xor_b3', 1)
constr_xor_log = Implies(1==(xor_a3^xor_b3), Or(And(xor_a3==0, xor_b3==1), And(xor_a3==1,xor_b3 ==0)))
prove(rule_xor_log, constr_xor_log)

# General rules complement
rule_com = "CPL-POS & CPL-NEG, r= ~a genral"
com_a, com_r= BitVecs('com_a com_r', 32)
constr_com1 = Implies(And(com_a >= 0, com_r==(~com_a)), com_r <0)
constr_com2 = Implies(And(com_a < 0, com_r==(~com_a)), com_r>=0)
constr_com = And(constr_com1, constr_com2)
prove(rule_com, constr_com)
 
\end{lstlisting}
%% \end{minipage}
%% \end{figure}

\vfill
\pagebreak
\section{Detailed Results for Termination and LTL}
\label{apx:details:ltl:term:source}
Table \ref{tab:ltlbithacks} shows the details running results of 26 bithacks benchmarks on LTL verification task, Table \ref{tab:ltlcyrules} and Table \ref{tab:termcyrules} show the details of LTL and Termination running results on our hand-crafted benchmarks, table \ref{tab:termaprove} shows the details of termination running results on 18 bitiwse benchmarks from \aprove, they are corresponding to section \ref{sec:termination}.

\begin{center}
    \begin{table}
        \centering
        \caption{\label{tab:ltlbithacks} Details for LTL Bithack benchmarks.}
        \resizebox{\textwidth}{!}{\begin{tabular}{lllcrcr}
    \toprule
                                            &                                &          & \multicolumn{2}{c}{\ultimate} & \multicolumn{2}{c}{\Tool}                    \\
    Benchmark                               & Property                       & Expected & Time                          & Result                    & Time    & Result \\
    \cmidrule(r){1-1}\cmidrule(lr){2-3}\cmidrule(lr){4-5}\cmidrule(l){6-7}
    counting-bits-BK1\_false.c              & $\square(\lozenge y>=0)      $ & \NOK     & 8.80s                         & \UNK                      & 10.75s  & \NOK   \\
    consecutive-zero-bits-trailing\_false.c & $\lozenge y=1                $ & \NOK     & 5.91s                         & \UNK                      & 7.24s   & \UNK   \\
    counting-bits-BK\_false.c               & $\square(\lozenge y<=1)      $ & \NOK     & 6.93s                         & \UNK                      & 8.15s   & \NOK   \\
    display-bit1\_false.c                   & $\lozenge y>1                $ & \NOK     & 24.55s                        & \UNK                      & 59.18s  & \NOK   \\
    parity\_false.c                         & $\lozenge y>=1               $ & \NOK     & 23.81s                        & \UNK                      & 9.09s   & \NOK   \\
    display-bit\_false.c                    & $\lozenge y<0                $ & \NOK     & 27.20s                        & \UNK                      & 64.63s  & \NOK   \\
    counting-bits-set\_false.c              & $\lozenge y>=1               $ & \NOK     & 8.88s                         & \UNK                      & 7.77s   & \NOK   \\
    reverse-bits1\_false.c                  & $\lozenge n<0                $ & \NOK     & 8.11s                         & \UNK                      & 10.21s  & \NOK   \\
    logbase2.c                              & $\lozenge y>=1               $ & \OK      & 7.60s                         & \UNK                      & 7.49s   & \UNK   \\
    base64\_ltl.c                           & $\square(\lozenge start = 1) $ & \OK      & 124.37s                       & \OOM                      & 602.61s & \OOM   \\
    modulus-division.c                      & $\lozenge y>1                $ & \OK      & 6.67s                         & \UNK                      & 17.25s  & \UNK   \\
    consecutive-zero-bits-trailing.c        & $\lozenge y>=1               $ & \OK      & 6.07s                         & \UNK                      & 7.51s   & \OK    \\
    interleave-bits.c                       & $\lozenge y>=1               $ & \OK      & 11.40s                        & \UNK                      & 300.66s & \UNK   \\
    logbase2-N-bit1.c                       & $\lozenge y>=1               $ & \OK      & 12.39s                        & \UNK                      & 547.43s & \OOM   \\
    reverse-N-bit.c                         & $\lozenge n>=1               $ & \OK      & 6.91s                         & \OK                       & 13.09s  & \OK    \\
    counting-bits-set.c                     & $\lozenge y>=1               $ & \OK      & 9.23s                         & \UNK                      & 8.52s   & \OK    \\
    consecutive-zero-bits.c                 & $\lozenge y>=1               $ & \OK      & 5.29s                         & \OK                       & 5.85s   & \OK    \\
    counting-bits-BK.c                      & $\square(\lozenge y<=1)      $ & \OK      & 6.63s                         & \UNK                      & 8.46s   & \OK    \\
    dropbf\_ltl.c                           & $\square(A!=1 \vee RELEASE=0)$ & \OK      & 8.42s                         & \OK                       & 13.47s  & \OK    \\
    reverse-bits.c                          & $\lozenge y>=1               $ & \OK      & 7.21s                         & \UNK                      & 10.81s  & \UNK   \\
    counting-bits-lookup.c                  & $\lozenge y>=1               $ & \OK      & 900.41s                       & \TOUT                     & 900.41s & \TOUT  \\
    display-bit.c                           & $\lozenge y>=32              $ & \OK      & 24.99s                        & \UNK                      & 705.44s & \OOM   \\
    counting-bits-BK1.c                     & $\square(\lozenge y>=0)      $ & \OK      & 7.51s                         & \UNK                      & 8.59s   & \OK    \\
    display-bit1.c                          & $\lozenge y>=1               $ & \OK      & 20.46s                        & \UNK                      & 7.29s   & \OK    \\
    reverse-bits1.c                         & $\lozenge n<0                $ & \OK      & 6.33s                         & \UNK                      & 12.55s  & \OK    \\
    parity.c                                & $\lozenge y>=1               $ & \OK      & 19.67s                        & \UNK                      & 6.93s   & \OK    \\
    \bottomrule
\end{tabular}}
    \end{table}
    
    \begin{table}
        \centering
        \caption{\label{tab:ltlcyrules} Details for \cyrulesLTL.}
        \begin{tabular}{llcrcrc}
    \toprule
                         &                           &          & \multicolumn{2}{c}{\ultimate} & \multicolumn{2}{c}{\Tool}                    \\
    Benchmark            & Property                  & Expected & Time                          & Result                    & Time    & Result \\
    \cmidrule(r){1-1}\cmidrule(lr){2-3}\cmidrule(lr){4-5}\cmidrule(l){6-7}
    and\_guard2\_false.c & $\lozenge z \geq 100    $ & \NOK     & 7.54s                         & \UNK                      & 7.65s   & \NOK   \\
    xor\_stem1\_false.c  & $\lozenge n<0           $ & \NOK     & 5.78s                         & \UNK                      & 6.41s   & \NOK   \\
    and\_guard\_false.c  & $\lozenge y>0           $ & \NOK     & 7.11s                         & \UNK                      & 5.52s   & \NOK   \\
    and\_stem\_false.c   & $\lozenge n<0           $ & \NOK     & 6.34s                         & \UNK                      & 5.86s   & \NOK   \\
    xor\_stem\_false.c   & $\lozenge n<0           $ & \NOK     & 5.67s                         & \UNK                      & 6.33s   & \NOK   \\
    xor\_guard\_false.c  & $\lozenge n>0           $ & \NOK     & 6.40s                         & \UNK                      & 6.24s   & \NOK   \\
    or\_loop1\_false.c   & $\square(\lozenge n<0)  $ & \NOK     & 6.98s                         & \UNK                      & 6.76s   & \NOK   \\
    and\_loop\_false.c   & $\lozenge y \geq 1      $ & \NOK     & 5.32s                         & \UNK                      & 9.28s   & \NOK   \\
    and\_stem1\_false.c  & $\lozenge n<0           $ & \NOK     & 7.63s                         & \UNK                      & 5.75s   & \NOK   \\
    xor\_loop\_false.c   & $\lozenge n<0           $ & \NOK     & 6.18s                         & \UNK                      & 8.01s   & \NOK   \\
    or\_guard\_false.c   & $\lozenge n<0           $ & \NOK     & 5.82s                         & \UNK                      & 5.96s   & \NOK   \\
    and\_guard1\_false.c & $\lozenge n>0           $ & \NOK     & 7.82s                         & \UNK                      & 6.04s   & \NOK   \\
    com\_loop\_false.c   & $\lozenge y<0           $ & \NOK     & 5.70s                         & \UNK                      & 6.81s   & \NOK   \\
    or\_stem\_false.c    & $\lozenge n<0           $ & \NOK     & 7.11s                         & \UNK                      & 5.81s   & \NOK   \\
    and\_guard4\_false.c & $\lozenge n>0           $ & \NOK     & 8.62s                         & \UNK                      & 5.55s   & \NOK   \\
    and\_stem2\_false.c  & $\lozenge n>0           $ & \NOK     & 7.69s                         & \UNK                      & 6.35s   & \NOK   \\
    or\_loop2\_false.c   & $\lozenge n>0           $ & \NOK     & 6.47s                         & \UNK                      & 8.89s   & \NOK   \\
    and\_loop1\_false.c  & $\lozenge z<0           $ & \NOK     & 9.22s                         & \UNK                      & 7.60s   & \NOK   \\
    com\_stem\_false.c   & $\square(\lozenge y<0)  $ & \NOK     & 8.89s                         & \UNK                      & 9.52s   & \NOK   \\
    or\_loop\_false.c    & $\square(\lozenge n<0)  $ & \NOK     & 7.60s                         & \UNK                      & 7.82s   & \NOK   \\
    and\_stem2.c         & $\lozenge n>0           $ & \OK      & 5.81s                         & \UNK                      & 5.36s   & \OK    \\
    xor\_stem1.c         & $\lozenge n<0           $ & \OK      & 7.05s                         & \UNK                      & 5.76s   & \OK    \\
    xor\_guard.c         & $\lozenge n<0           $ & \OK      & 6.41s                         & \UNK                      & 5.67s   & \OK    \\
    and\_guard4.c        & $\lozenge n>0           $ & \OK      & 10.25s                        & \UNK                      & 6.00s   & \OK    \\
    or\_stem.c           & $\lozenge n<0           $ & \OK      & 5.38s                         & \UNK                      & 7.38s   & \OK    \\
    com\_stem.c          & $\square(\lozenge y = 1)$ & \OK      & 6.50s                         & \UNK                      & 5.59s   & \OK    \\
    xor\_stem.c          & $\lozenge n<0           $ & \OK      & 5.40s                         & \UNK                      & 5.41s   & \OK    \\
    xor\_loop.c          & $\lozenge n<0           $ & \OK      & 6.14s                         & \UNK                      & 7.97s   & \OK    \\
    and\_stem1.c         & $\lozenge n<0           $ & \OK      & 7.05s                         & \UNK                      & 5.32s   & \OK    \\
    and\_guard.c         & $\lozenge n<0           $ & \OK      & 6.43s                         & \UNK                      & 901.21s & \TOUT  \\
    and\_loop1.c         & $\lozenge z<0           $ & \OK      & 7.63s                         & \UNK                      & 5.46s   & \OK    \\
    com\_loop.c          & $\lozenge y<0           $ & \OK      & 6.44s                         & \UNK                      & 5.39s   & \OK    \\
    or\_loop.c           & $\square(\lozenge n<0)  $ & \OK      & 7.55s                         & \UNK                      & 10.05s  & \OK    \\
    and\_guard2.c        & $\lozenge z \geq 100    $ & \OK      & 9.58s                         & \UNK                      & 17.04s  & \OK    \\
    or\_loop2.c          & $\lozenge n<0           $ & \OK      & 6.17s                         & \UNK                      & 6.33s   & \OK    \\
    and\_stem.c          & $\lozenge n<0           $ & \OK      & 5.49s                         & \UNK                      & 5.57s   & \OK    \\
    or\_0s\_int.c        & $\lozenge p =1          $ & \OK      & 67.11s                        & \UNK                      & 9.38s   & \OK    \\
    and\_loop.c          & $\lozenge y \geq 1      $ & \OK      & 6.15s                         & \UNK                      & 5.92s   & \OK    \\
    and\_guard1.c        & $\lozenge n>0           $ & \OK      & 7.81s                         & \UNK                      & 8.04s   & \OK    \\
    or\_loop1.c          & $\square(\lozenge n<0)  $ & \OK      & 8.44s                         & \UNK                      & 6.58s   & \OK    \\
    or\_loop3.c          & $\square(\lozenge n<0)  $ & \OK      & 7.47s                         & \UNK                      & 6.11s   & \OK    \\
    or\_guard.c          & $\lozenge n<0           $ & \OK      & 5.92s                         & \UNK                      & 4.52s   & \OK    \\
    \bottomrule
\end{tabular}
    \end{table}

    \begin{table}
        \centering
        \caption{\label{tab:termcyrules} Details for \cyrulesTerm.}
        \resizebox{\textwidth}{!}{\begin{tabular}{lcrcrcrcrcrcrc}
    \toprule
                   &          & \multicolumn{2}{c}{\aprove} & \multicolumn{2}{c}{\cpachecker} & \multicolumn{2}{c}{\kittel} & \multicolumn{2}{c}{\twols} & \multicolumn{2}{c}{\ultimate} & \multicolumn{2}{c}{\Tool}                                                        \\
    Benchmark      & Expected & Time                        & Result                          & Time                        & Result                     & Time                          & Result                    & Time    & Result & Time   & Result & Time   & Result \\
    \cmidrule(r){1-1}\cmidrule(lr){2-2}\cmidrule(r){3-4}\cmidrule(lr){5-6}\cmidrule(lr){7-8}\cmidrule(lr){9-10}\cmidrule(lr){11-12}\cmidrule(l){13-14}
    xor-01-false.c & \NOK     & 18.41s                      & \NOK                            & 5.17s                       & \NOK                       & 900.24s                       & \TOUT                     & 900.53s & \TOUT  & 5.54s  & \UNK   & 6.39s  & \NOK   \\
    and-04-false.c & \NOK     & 185.04s                     & \NOK                            & 3.68s                       & \NOK                       & 900.24s                       & \TOUT                     & 0.15s   & \NOK   & 5.33s  & \UNK   & 7.78s  & \NOK   \\
    not-04-false.c & \NOK     & 4.14s                       & \UNK                            & 3.78s                       & \NOK                       & 0.08s                         & \FAIL                     & 0.14s   & \NOK   & 6.84s  & \UNK   & 6.78s  & \NOK   \\
    not-05-false.c & \NOK     & 3.03s                       & \UNK                            & 4.64s                       & \NOK                       & 0.10s                         & \FAIL                     & 0.13s   & \NOK   & 5.50s  & \UNK   & 8.66s  & \NOK   \\
    and-05-false.c & \NOK     & 10.11s                      & \UNK                            & 3.79s                       & \NOK                       & 900.32s                       & \TOUT                     & 0.16s   & \NOK   & 5.60s  & \UNK   & 8.17s  & \NOK   \\
    and-01-false.c & \NOK     & 4.83s                       & \UNK                            & 5.21s                       & \UNK                       & 900.24s                       & \TOUT                     & 900.46s & \TOUT  & 6.27s  & \UNK   & 6.35s  & \NOK   \\
    or-02-false.c  & \NOK     & 13.66s                      & \NOK                            & 3.93s                       & \NOK                       & 900.28s                       & \TOUT                     & 900.39s & \TOUT  & 7.40s  & \UNK   & 8.28s  & \NOK   \\
    not-03-false.c & \NOK     & 2.95s                       & \UNK                            & 3.86s                       & \NOK                       & 0.10s                         & \FAIL                     & 0.17s   & \NOK   & 5.34s  & \UNK   & 7.25s  & \NOK   \\
    and-03-false.c & \NOK     & 4.65s                       & \NOK                            & 3.67s                       & \NOK                       & 900.24s                       & \TOUT                     & 0.14s   & \NOK   & 5.29s  & \UNK   & 5.63s  & \NOK   \\
    not-02-false.c & \NOK     & 1.90s                       & \OK                             & 3.84s                       & \NOK                       & 0.05s                         & \FAIL                     & 0.15s   & \NOK   & 5.26s  & \UNK   & 4.86s  & \NOK   \\
    or-05-false.c  & \NOK     & 6.29s                       & \NOK                            & 4.18s                       & \UNK                       & 900.35s                       & \TOUT                     & 0.16s   & \NOK   & 6.07s  & \UNK   & 9.34s  & \NOK   \\
    or-01-false.c  & \NOK     & 5.76s                       & \NOK                            & 3.73s                       & \NOK                       & 900.31s                       & \TOUT                     & 900.69s & \TOUT  & 7.64s  & \UNK   & 10.22s & \NOK   \\
    and-02-false.c & \NOK     & 6.30s                       & \UNK                            & 5.78s                       & \UNK                       & 900.24s                       & \TOUT                     & 900.39s & \TOUT  & 6.72s  & \UNK   & 5.89s  & \NOK   \\
    not-01.c       & \OK      & 902.06s                     & \TOUT                           & 4.69s                       & \UNK                       & 0.06s                         & \OK                       & 0.39s   & \OK    & 6.54s  & \UNK   & 5.52s  & \OK    \\
    and-02.c       & \OK      & 901.23s                     & \TOUT                           & 8.00s                       & \UNK                       & 900.25s                       & \TOUT                     & 900.38s & \TOUT  & 6.05s  & \UNK   & 11.63s & \OK    \\
    or-01.c        & \OK      & 23.72s                      & \OK                             & 6.50s                       & \UNK                       & 900.29s                       & \TOUT                     & 900.41s & \TOUT  & 7.85s  & \UNK   & 13.10s & \OK    \\
    xor-01.c       & \OK      & 4.14s                       & \OK                             & 4.77s                       & \OK                        & 0.05s                         & \OK                       & 0.31s   & \OK    & 7.16s  & \UNK   & 8.07s  & \OK    \\
    and-03.c       & \OK      & 12.83s                      & \UNK                            & 3.80s                       & \NOK                       & 900.25s                       & \TOUT                     & 900.44s & \TOUT  & 6.36s  & \UNK   & 6.98s  & \OK    \\
    not-02.c       & \OK      & 3.11s                       & \UNK                            & 4.03s                       & \NOK                       & 0.31s                         & \FAIL                     & 0.18s   & \NOK   & 5.09s  & \UNK   & 9.23s  & \OK    \\
    and-01.c       & \OK      & 5.74s                       & \UNK                            & 6.05s                       & \UNK                       & 900.25s                       & \TOUT                     & 900.46s & \TOUT  & 6.04s  & \UNK   & 8.14s  & \OK    \\
    or-02.c        & \OK      & 10.16s                      & \OK                             & 4.14s                       & \UNK                       & 900.29s                       & \TOUT                     & 900.46s & \TOUT  & 7.42s  & \UNK   & 6.56s  & \OK    \\
    or-03.c        & \OK      & 5.27s                       & \OK                             & 4.18s                       & \UNK                       & 0.04s                         & \OK                       & 0.24s   & \OK    & 5.83s  & \OK    & 6.94s  & \OK    \\
    not-03.c       & \OK      & 3.95s                       & \UNK                            & 5.32s                       & \NOK                       & 0.06s                         & \OK                       & 0.27s   & \OK    & 5.55s  & \UNK   & 6.30s  & \OK    \\
    and-04.c       & \OK      & 901.56s                     & \TOUT                           & 3.61s                       & \NOK                       & 900.17s                       & \TOUT                     & 0.21s   & \OK    & 5.50s  & \UNK   & 8.56s  & \OK    \\
    or-06.c        & \OK      & 7.48s                       & \OK                             & 4.22s                       & \UNK                       & 0.04s                         & \OK                       & 0.31s   & \OK    & 7.11s  & \OK    & 6.22s  & \OK    \\
    and-05.c       & \OK      & 4.40s                       & \UNK                            & 3.61s                       & \NOK                       & 900.37s                       & \TOUT                     & 0.19s   & \OK    & 7.04s  & \UNK   & 11.40s & \OK    \\
    not-04.c       & \OK      & 2.39s                       & \UNK                            & 4.70s                       & \UNK                       & 0.10s                         & \OK                       & 900.42s & \TOUT  & 11.49s & \UNK   & 12.85s & \OK    \\
    or-04.c        & \OK      & 17.17s                      & \NOK                            & 4.18s                       & \NOK                       & 900.30s                       & \TOUT                     & 0.15s   & \NOK   & 8.13s  & \UNK   & 5.90s  & \OK    \\
    or-05.c        & \OK      & 5.12s                       & \NOK                            & 6.44s                       & \UNK                       & 900.37s                       & \TOUT                     & 0.16s   & \NOK   & 7.22s  & \UNK   & 8.02s  & \OK    \\
    and-06.c       & \OK      & 10.12s                      & \UNK                            & 5.78s                       & \UNK                       & 900.25s                       & \TOUT                     & 900.56s & \TOUT  & 7.24s  & \UNK   & 9.42s  & \OK    \\
    not-05.c       & \OK      & 3.21s                       & \UNK                            & 4.30s                       & \NOK                       & 0.06s                         & \OK                       & 0.24s   & \OK    & 5.55s  & \UNK   & 5.88s  & \OK    \\
    \bottomrule
\end{tabular}
}
    \end{table}

    \begin{table}
        \centering
        \caption{\label{tab:termaprove} Details for \aprove{} termination benchmarks.}
        \resizebox{\textwidth}{!}{\begin{tabular}{lcrcrcrcrcrcrc}
    \toprule
                                                &          & \multicolumn{2}{c}{\aprove} & \multicolumn{2}{c}{\cpachecker} & \multicolumn{2}{c}{\kittel} & \multicolumn{2}{c}{\twols} & \multicolumn{2}{c}{\ultimate} & \multicolumn{2}{c}{\Tool}                                                         \\
    Benchmark                                   & Expected & Time                        & Result                          & Time                        & Result                     & Time                          & Result                          & Time    & Result & Time   & Result & Time    & Result \\
    \cmidrule(r){1-1}\cmidrule(lr){2-2}\cmidrule(r){3-4}\cmidrule(lr){5-6}\cmidrule(lr){7-8}\cmidrule(lr){9-10}\cmidrule(lr){11-12}\cmidrule(l){13-14}
    signed/wdk-signed-overflow/eeprom2.c        & \OK      & 2.03s                       & \UNK                            & 5.68s                       & \NOK                       & 0.08s                         & \FAIL                           & 0.17s   & \OK    & 22.91s & \UNK   & 130.32s & \OOM   \\
    signed/wdk-signed-overflow/common.c         & \OK      & 2.12s                       & \UNK                            & 5.49s                       & \OK                        & 0.04s                         & \OK                             & 0.25s   & \OK    & 24.38s & \UNK   & 900.44s & \TOUT  \\
    unsigned/juggernaut-paper/a.c               & \OK      & 2.37s                       & \UNK                            & 3.76s                       & \NOK                       & 900.25s                       & \TOUT                           & 0.15s   & \OK    & 5.50s  & \UNK   & 9.72s   & \NOK   \\
    unsigned/pointer/p03.c                      & \OK      & 2.05s                       & \UNK                            & 4.24s                       & \UNK                       & 0.04s                         & \FAIL                           & 0.13s   & \UNK   & 7.16s  & \UNK   & 5.60s   & \UNK   \\
    unsigned/wdk-no-signed-overflow/gsm6102.c   & \OK      & 900.26s                     & \TOUT                           & 3.84s                       & \NOK                       & 900.25s                       & \TOUT                           & 900.38s & \TOUT  & 6.92s  & \OK    & 5.96s   & \OK    \\
    unsigned/wdk-no-signed-overflow/hw\_ccmp.c  & \OK      & 4.30s                       & \UNK                            & 3.59s                       & \FAIL                      & 0.04s                         & \OK                             & 0.13s   & \OK    & 22.03s & \UNK   & 14.77s  & \UNK   \\
    unsigned/wdk-no-signed-overflow/comm.c      & \OK      & 3.88s                       & \OK                             & 5.13s                       & \OK                        & 0.06s                         & \FAIL                           & 0.20s   & \OK    & 19.42s & \UNK   & 22.91s  & \UNK   \\
    unsigned/wdk-no-signed-overflow/gsm6103.c   & \OK      & 900.26s                     & \TOUT                           & 4.76s                       & \NOK                       & 900.25s                       & \TOUT                           & 900.37s & \TOUT  & 6.62s  & \OK    & 8.73s   & \OK    \\
    unsigned/wdk-no-signed-overflow/miniport.c  & \OK      & 900.26s                     & \TOUT                           & 5.90s                       & \UNK                       & 900.25s                       & \TOUT                           & 0.25s   & \OK    & 7.12s  & \NOK   & 8.30s   & \NOK   \\
    unsigned/wdk-no-signed-overflow/namesup2.c  & \OK      & 900.26s                     & \TOUT                           & 3.76s                       & \NOK                       & 0.03s                         & \FAIL                           & 0.59s   & \OK    & 7.67s  & \NOK   & 4.94s   & \NOK   \\
    unsigned/wdk-no-signed-overflow/eeprom.c    & \OK      & 6.71s                       & \UNK                            & 9.65s                       & \OK                        & 900.25s                       & \TOUT                           & 0.13s   & \OK    & 16.79s & \UNK   & 16.54s  & \UNK   \\
    unsigned/wdk-no-signed-overflow/intrface.c  & \OK      & 900.25s                     & \TOUT                           & 3.77s                       & \NOK                       & 900.15s                       & \TOUT                           & 0.17s   & \OK    & 6.61s  & \UNK   & 5.23s   & \NOK   \\
    unsigned/wdk-no-signed-overflow/init.c      & \OK      & 5.56s                       & \UNK                            & 3.69s                       & \NOK                       & 900.14s                       & \TOUT                           & 0.10s   & \OK    & 5.25s  & \UNK   & 13.11s  & \NOK   \\
    unsigned/wdk-no-signed-overflow/namesup.c   & \OK      & 2.39s                       & \UNK                            & 2.01s                       & \FAIL                      & 0.05s                         & \OK                             & 185.56s & \OOM   & 28.44s & \UNK   & 27.67s  & \UNK   \\
    unsigned/wdk-no-signed-overflow/image.c     & \OK      & 900.26s                     & \TOUT                           & 5.06s                       & \NOK                       & 796.47s                       & \TOUT                           & 1.24s   & \OK    & 14.35s & \UNK   & 18.52s  & \UNK   \\
    unsigned/wdk-no-signed-overflow/mp\_util.c  & \OK      & 3.82s                       & \UNK                            & 3.80s                       & \UNK                       & 900.27s                       & \TOUT                           & 0.95s   & \OK    & 30.24s & \UNK   & 26.87s  & \UNK   \\
    unsigned/wdk-no-signed-overflow/allocsup1.c & \OK      & 900.26s                     & \TOUT                           & 3.78s                       & \NOK                       & 0.07s                         & \FAIL                           & 0.17s   & \OK    & 6.73s  & \UNK   & 5.62s   & \UNK   \\
    unsigned/juggernaut/loop6.c                 & \OK      & 2.33s                       & \UNK                            & 3.91s                       & \NOK                       & 900.25s                       & \TOUT                           & 0.15s   & \OK    & 5.76s  & \UNK   & 7.55s   & \NOK   \\
    \bottomrule
\end{tabular}
}
    \end{table}
    
\end{center}

\section{Bug in GCC}
\label{apx:gccbug}

\begin{example}[A reported bug in {\tt gcc}~\cite{gcc-bug3:online}]
\begin{lstlisting}[language=C]
static uint64_t div(uint64_t ui1, uint64_t ui2){
  return (ui2 == 0) ? ui1 : (ui1 / ui2); }
static int8_t mod(int8_t si1, int8_t si2){
  return (si2==0) $\mid\mid$ ((si1==-128) && (si2==-1)) ?  si1 : (si1 % si2); }
static int32_t g_5=0, g_11=0;
int main (){
  uint64_t l_7 = 0x509CB0BEFCDF11BBLL;
  g_11 ^= l_7 && ((div((mod(g_5, 0)), -1L)) != 1L);
  if (!g_11) return __VERIFIER_error();
  return 0;
}
\end{lstlisting}
%  //g_11 has value 0 when we run the binary,
%  //but the expected value is 1
\end{example}
From the source semantics of this example,
variable {\tt g\_11}
would have value 1 at Line 9 and {\tt \_\_VERIFIER\_error} is 
un-reachable. However, a defect~\cite{gcc-bug2:online} 
in {\tt gcc} folds {\tt ((div((mod(g\_5, 0)), -1L)) != 1L)} to 0. This makes 
{\tt g\_11} become 0 at Line 9, introducing a new error behavior.
%\dd{Why do we have the hex constant in L7? Why not write 5808711953733849531 instead?}

\section{\Tool\ translations through an example lifted binary}

\label{apx:lifting-challenges}

\begin{figure}
\includegraphics[width=\columnwidth]{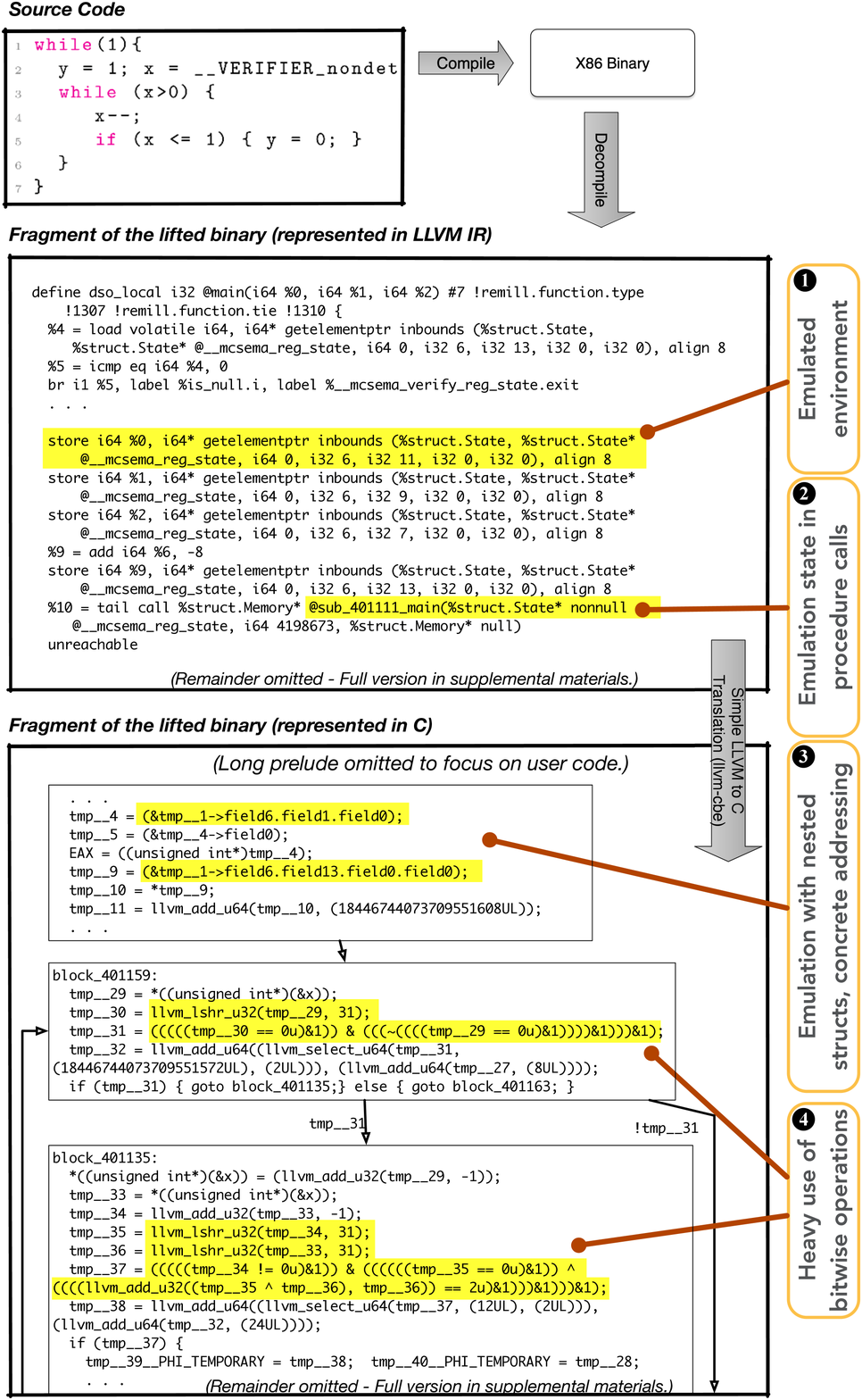}
\caption{\label{fig:rawcfg} Challenges involved in reasoning about the lifted binary of \code{PotentialMinimizeSEVPABug.c}.}
\end{figure}

We here explain via an example the need for the translations
perform by \Tool{}, discussed in Sec.~\ref{sec:casestudy}.
Figure~\ref{fig:rawcfg} shows an (condensed) example of the result of lifting a GCC-compiled binary version of \code{PotentialMinimizeSEVPABug.c} (using \mcsema). 
In the following, we describe how decompilation tools (\eg~\mcsema) target \emph{re-compilation} and the challenges this poses for  existing verification techniques and tools. We then describe translations performed by \Tool\ to make lifted
binaries more amenable to verification. As discussed in Sec.~\ref{sec:casestudy}, the translations below were implemented as passes on the lifted LLVM IR.

\noindent
{\bf Run-time environment.} Binary lifting de-compiles into a program that 
mimics the  binary behavior.
To ensure that  a new \emph{re}-compiled binary would run correctly, lifting yields code that switches the contexts between the run-time environments and the simulated code, 
somewhat akin to how a loader first moves environment variables onto the stack. 
%\dd{what is "the loader"?}
This context-switch code wraps around the simulated program and is indispensable for execution of re-compiled code.  
Context-switch code can be fairly complex. 
For example, it frequently uses assembly code to move values between the physical environment (\eg\ environment variables like {\tt \$PATH}) and the simulated constructs (\eg\ registers and the stack). 
The initial state of the registers is also loaded from the runtime, 
as seen by the instructions in callout \cirC{1} in Fig.~\ref{fig:rawcfg}.
Tracing back the origins of these values, which stem from the runtime code,
poses a nearly impossible, yet unnecessary task for verification tools.
For most verification tasks, it does not matter where the initial values of registers or environment  variables such as {\tt \$PATH} come from; we can just treat them as nondeterministic input values on the stack.

\emph{Translation: Removing verification-unrelated code and data.} 
We implemented a pass to analyze lifted output and decouple
context-switch code from the code that simulates the 
original program. We first locate the original 
{\tt main} function in the simulated code and then follow the 
control flow to statically extract and trim code that can reach {\tt main}. 
Further, the context-switch code also includes program-dependent functions 
that are registered to be executed before the {\tt main} or after 
the {\tt exit}. To avoid missing such functions in verification, 
we allocate calls to them at the begin or the end of the {\tt main} 
function, following the order these functions are 
registered to run in the original binary. 

\noindent
{\bf Passing emulation state through procedures.} Binary lifting for recompilation  generates programs in which function calls are used to pass emulation state.
%pass new emulationthree arguments, the first pointing to a global context \texttt{g\_state}.
%and the other two are reserved for future use. 
%\red{Jun: this can be seen on line X of Fig 1}
This can be seen, for example, in callout \cirC{2} in Fig.~\ref{fig:rawcfg},
where \code{struct.State} is passed as the first argument to \code{sub\_401111\_main}.
These arguments are \emph{not} part of the original program. Rather, lifting introduces these additional arguments to simulate the possibility of context-switches within function calls. 
Further, when the lifted code is recompiled, code simulating the callee can access the contexts through these \code{struct.State} arguments, which is typically more efficient than directly accessing the global data structure.
% \dd{I am confused by this: are these three arguments additional to the arguments of original function calls? When only one is used (the global context), why are we talking about "these" arguments? What is "the global data structure"? Because of the global context above, I tend to think that this is "the global data structure".}
%
Interprocedural reasoning is known to make verification more challenging and requires more sophisticated algorithms such as procedure summaries~\cite{summaries,dillig2011precise} and nested interpolants~\cite{nestedInterpolants}, especially when context sensitivity is required. When machine emulation involves expanding use of arguments, we found this complicates analysis and hampers verification.
%\dd{What would be the alternative? Why do we \emph{need} these functions? Just efficiency from above? Perhaps its just enough to remove the "unfortunately", and just state that this introduces one difficulty. Although, when you think about it, basically all programs have procedures, so its not really a different game for verifiers here.}

\emph{Translation: Simplifying function arguments.} 
%The challenge is therefore how to make the procedures in 
Adding arguments to user procedures (as done by \mcsema) complicates verification.
Fortunately, a translation is possible: the arguments in a \mcsema{}-generated function point to 
the same global data structure.
% and the other two arguments are not used. 
As such, we eliminate these arguments from every function call. We then create a pointer pointing to the global data struct
and replace all \emph{uses} of the first argument in the function body with 
uses of our new pointer.

\noindent
{\bf Nested structures for emulation.}
Lifted binaries encode complicated structures that simulate hardware features such as registers, arithmetic flags, FPU status flags, the stack, and instruction pointers. 
These are represented as nested structures, \eg
% {\tt state->}{\tt general\_registers.}{\tt register13.}{\tt union.}{\tt uint64cell}.
\code{state->general\_registers.register13.union.uint64cell}.
This can be seen, for example, in callout \cirC{3} in Fig.~\ref{fig:rawcfg}, where 
the field \code{tmp\_\_1->field6.field13.field0.field0} is accessed.
%\red{Jun: this can be seen on line X of Fig 1}
The use of nesting in these structures provides efficiency: constructs that are commonly used together (\eg\ general purpose registers) can be artificially grouped to the same cache line,  avoiding cache evictions.
%These nested structures mirror the computer organization and also simplify the management of constructs during lifting. 
However, reasoning about these nested data-structures is difficult because verification tools cannot make any assumptions about where these data-structures come from, how they are used and, most importantly, how they may be aliased. 
Consequently, verification tools have to carefully track heap references to infer non-aliasing, even though the lifting process ensures that they will not.

\emph{Translation: Flattening the emulation state.} 
Many of the data structures for the emulated state are functionally independent and hence the complex nesting is not necessary to maintain the original semantics. We implemented a pass to flatten the data structures. We create individual variables for all the innermost and separable fields. We then translate accesses to these nested structures, with use-define reasoning to identify all the accesses to a flattened field. For the aforementioned {\tt state->}{\tt general\_registers.}{\tt register13.}{\tt union.}{\tt uint64cell}, we allocate a new global variable {\tt register13} with the same type of {\tt uint64cell} and re-locate all the original accesses to {\tt register13}.

\iffalse

\noindent
{\bf Other challenges.} Finally, lifted binaries pose other quirks that needed to be addressed before they can be verified. 
%We encountered many of them and needed to perform additional analyses, translations, slicing, etc, that we discuss in Section~\ref{sec:impl}.
%For example, some of the lifted code is irrelevant to the property of interest. Slicing, which is already useful in verifying source code, is even more essential for lifted binaries.
% or trace abstraction~\cite{hutchison_software_2013} is often employed in source-level programs but 
For example, lifted binaries often involve type-casting short-cuts to enable efficient re-compilation, such as the following:

\begin{center}
  \begin{tabular}{cc}
%    {\bf McSema Output} & \Tool\ {\bf transformation}\\
    \begin{minipage}{0.43\textwidth}
    \begin{lstlisting}[language=C,basicstyle=\small\ttfamily]
struct OC_a{uint64_t f0;}
int func(){
  struct OC_a tmp; 
  uint64_t* t_ptr; 
  t_ptr=(uint64_t*)&tmp;
}
    \end{lstlisting}
    \end{minipage}
    &
    \qquad
    \begin{minipage}{0.48\textwidth}
    \begin{lstlisting}[language=C,basicstyle=\small\ttfamily]
struct OC_a{uint64_t f0;}
int func() {
 struct OC_a tmp; 
 uint64_t* t_ptr; 
 t_ptr=(uint64_t*)(&tmp.f0);
}
    \end{lstlisting}
    \end{minipage}
  \end{tabular}
%  \label{tab:fixtype}
%  \end{table}
\end{center}
%  \caption{Example showing our type-casting fixing.
Here, {\tt t\_ptr} is supposed to point to the first field in {\tt tmp}. 
Lifting can take a short-cut to make {\tt t\_ptr} directly point to {\tt tmp} because the first field in {\tt tmp} and {\tt tmp} itself have identical memory addresses.
%Our fixing eliminates such short-cut to ensure safety of type-casting.
%These types of short-cut can typically ensure the correctness but reduce certain operations.
%\whyBadForVerification \jun{I do not know why the above type-casting would be a problem for verification}
We found that verification tools often cannot treat the code on the left and lose information.
%\howOurNewLiftingSolvesIt 

%As we demonstrated in Section~\ref{sec:challenges}, \mcsema{} takes short-cuts in type-casting and verification tools often cannot treat such short-cuts. 
We implemented a pass to revoke the type-casting short cuts described
in Apx.~\ref{apx:lifting-challenges}, making the dereference more explicit.
% as  shown on the right in Example~\ref{exp:type-casting}.
\mcsema{} also often brings redundant type-casting. For instance, 
it can create operations like {\tt *((int*)(\&p))} even if 
{\tt p} has the type of {\tt int}. \Tool\ strips redundant
type-casting, in a way such as changing {\tt *((int*)(\&p))} 
to {\tt *(\&p)} and then {\tt p}.

\fi

\def\por{\enspace|\enspace}
\def\pdef{\Coloneqq}

\def\id{\mathit{Id}}

\def\idbit{\mathit{Id'}}
\def\lit{\mathit{Lit}}
\def\const{\mathit{Const}}

\def\pdec{\mathit{DecIR}}

\def\stmt{\mathit{Stmt}}
\def\astmt{\mathit{AStmt}}
\def\lhs{\mathit{Lhs}}
\def\func{\mathit{Func}}

%% no define for if please

% \def\kwif{\texttt{if}}
\def\labe{\mathit{Label}}

\def\nexpr{\mathit{NondetExpr}}
\def\expr{\mathit{Expr}}
\def\binexpr{\mathit{BinExpr}}
\def\bitexpr{\mathit{BitExpr}}
\def\lhs{\mathit{Lhs}}

\def\ariop{\mathit{ArithOp}}
\def\logop{\mathit{LogicOp}}
\def\bitop{\mathit{BitOp}}
\def\ubitop{\mathit{UbitOp}}
\def\bitconst{\mathit{BitConst}}

\def\plus#1{#1(\lstinline[style=lvir]$,$~#1)^*}

%We believe that further benefits can be achieved by modifying \mcsema\ directly.

%
%\subsection{Lifting to \IRde}
%\label{subsec:simplify}
%%Re-engineering for Verification.

% For example, as shown below, we replace
%\texttt{tmp1->field6} with \texttt{g\_ptr->field6}.
%
%\bigskip
%%\begin{figure}[!h]
%\begin{minipage}[t]{0.45\textwidth}
%	\centering
%	\textbf{Before Simplifying}
%    \begin{lstlisting}[language=C,basicstyle=\scriptsize]
%struct OC_State g_state;
%int main(){
%  struct OC_State *tmp = &g_state
%  foo(tmp, 0, 0); //arg passing
%  return 0;
%}
%void foo(OC_State* tmp1, uint64_t tmp2, void* tmp3){
%  struct OC_anon* tmp4;
%  tmp4 = (&tmp1->field6); 
%  error(tmp1,tmp2,tmp3);
%}
%\end{lstlisting}
%\end{minipage}
%\begin{minipage}[t]{0.49\textwidth}
%	\centering
%	\textbf{After Simplifying}
%    \begin{lstlisting}[language=C,basicstyle=\scriptsize]
%struct OC_State g_state;
%struct OC_State* g_ptr // ptr
%  = &g_state;
%int main(){
%  foo();  // args. removed
%  return 0;
%}
%void foo(){
%  struct OC_anon* tmp4;
%  tmp4 = (g_ptr->field6); //global
%  error();  // args. removed
%}
%\end{lstlisting}
%\end{minipage}
%  \caption{Example showing our argument removal. In the {\tt main} function before our simplification, {\tt tmp}, which points to a global data structure {\tt g\_state}, is passed to the {\tt foo} function and its alias {\tt tmp1} is further passed to {\tt error}. After our simplification, all the arguments are removed, and the accesses to {\tt tmp} and {\tt tmp1} are fixed.}
%  \label{fig:simplify}
%\end{figure}

\section{Detailed Results for \Tool{} Translation}
\label{apx:fabe}
%\label{apx:details:term:lift}

Table \ref{tab:lift-term-detail-mcsema} and Table \ref{tab:lift-term-detail-simplify} show the details of applying termination verifiers to \mcsema{} output and the output after \Tool's translations, respectively.

\begin{table}
    \centering
    \caption{\label{tab:lift-term-detail-mcsema} Details for termination verification of vanilla \mcsema\  binary lifting.}
    \resizebox{\textwidth}{!}{\begin{tabular}{lcrcrcrcrcrcrc}
    \toprule
                                        &          & \multicolumn{2}{c}{\aprove} & \multicolumn{2}{c}{\cpachecker} & \multicolumn{2}{c}{\kittel} & \multicolumn{2}{c}{\twols} & \multicolumn{2}{c}{\ultimate} & \multicolumn{2}{c}{\Tool}                                                        \\
    Benchmark                           & Expected & Time                        & Result                          & Time                        & Result                     & Time                          & Result                    & Time  & Result & Time    & Result & Time    & Result \\
    \cmidrule(r){1-1}\cmidrule(lr){2-2}\cmidrule(r){3-4}\cmidrule(lr){5-6}\cmidrule(lr){7-8}\cmidrule(lr){9-10}\cmidrule(lr){11-12}\cmidrule(l){13-14}
    Singapore-2\_gccO0.mcsema.cbe.c     &          & 8.28s                       & \UNK                            & 2.04s                       & \FAIL                      & 0.06s                         & \FAIL                     & 0.17s & \UNK   & 900.43s & \TOUT  & 900.36s & \TOUT  \\
    aaron2-2\_gccO0.mcsema.cbe.c        &          & 5.48s                       & \UNK                            & 2.13s                       & \FAIL                      & 0.06s                         & \FAIL                     & 0.17s & \UNK   & 900.37s & \TOUT  & 900.43s & \TOUT  \\
    Singapore\_plus\_gccO0.mcsema.cbe.c &          & 4.13s                       & \UNK                            & 2.58s                       & \FAIL                      & 0.06s                         & \FAIL                     & 0.17s & \UNK   & 900.45s & \TOUT  & 900.56s & \TOUT  \\
    Mysore-1\_gccO0.mcsema.cbe.c        &          & 3.77s                       & \UNK                            & 1.98s                       & \FAIL                      & 0.05s                         & \FAIL                     & 0.16s & \UNK   & 900.35s & \TOUT  & 900.39s & \TOUT  \\
    Parallel\_gccO0.mcsema.cbe.c        &          & 3.01s                       & \UNK                            & 2.12s                       & \FAIL                      & 0.06s                         & \FAIL                     & 0.15s & \UNK   & 900.38s & \TOUT  & 900.33s & \TOUT  \\
    Pure2Phase-1\_gccO0.mcsema.cbe.c    &          & 8.60s                       & \UNK                            & 2.50s                       & \FAIL                      & 0.05s                         & \FAIL                     & 0.16s & \UNK   & 900.38s & \TOUT  & 900.44s & \TOUT  \\
    Thun-1\_gccO0.mcsema.cbe.c          &          & 3.55s                       & \UNK                            & 2.41s                       & \FAIL                      & 0.05s                         & \FAIL                     & 0.15s & \UNK   & 900.44s & \TOUT  & 900.42s & \TOUT  \\
    easy2-2\_gccO0.mcsema.cbe.c         &          & 2.72s                       & \UNK                            & 2.22s                       & \FAIL                      & 0.05s                         & \FAIL                     & 0.16s & \UNK   & 589.64s & \FAIL  & 760.59s & \OOM   \\
    aaron3-2\_gccO0.mcsema.cbe.c        &          & 9.82s                       & \UNK                            & 2.44s                       & \FAIL                      & 0.06s                         & \FAIL                     & 0.20s & \UNK   & 900.41s & \TOUT  & 900.47s & \TOUT  \\
    Pure3Phase-2\_gccO0.mcsema.cbe.c    &          & 5.85s                       & \UNK                            & 2.54s                       & \FAIL                      & 0.06s                         & \FAIL                     & 0.22s & \UNK   & 900.40s & \TOUT  & 900.43s & \TOUT  \\
    easy\_debug\_gccO0.mcsema.cbe.c     &          & 3.69s                       & \UNK                            & 2.28s                       & \FAIL                      & 0.06s                         & \FAIL                     & 0.19s & \UNK   & 900.40s & \TOUT  & 902.27s & \TOUT  \\
    Mysore-2\_gccO0.mcsema.cbe.c        &          & 7.70s                       & \UNK                            & 2.03s                       & \FAIL                      & 0.06s                         & \FAIL                     & 0.17s & \UNK   & 900.43s & \TOUT  & 900.49s & \TOUT  \\
    easy1\_gccO0.mcsema.cbe.c           &          & 2.92s                       & \UNK                            & 2.24s                       & \FAIL                      & 0.05s                         & \FAIL                     & 0.16s & \UNK   & 785.05s & \FAIL  & 802.15s & \OOM   \\
    aaron2-1\_gccO0.mcsema.cbe.c        &          & 9.27s                       & \UNK                            & 2.15s                       & \FAIL                      & 0.05s                         & \FAIL                     & 0.16s & \UNK   & 900.31s & \TOUT  & 900.37s & \TOUT  \\
    easy2-1\_gccO0.mcsema.cbe.c         &          & 2.60s                       & \UNK                            & 1.83s                       & \FAIL                      & 0.02s                         & \FAIL                     & 0.11s & \UNK   & 566.99s & \FAIL  & 870.64s & \OOM   \\
    Thun-2\_gccO0.mcsema.cbe.c          &          & 5.91s                       & \UNK                            & 2.41s                       & \FAIL                      & 0.03s                         & \FAIL                     & 0.16s & \UNK   & 900.36s & \TOUT  & 900.45s & \TOUT  \\
    Pure2Phase-2\_gccO0.mcsema.cbe.c    &          & 4.57s                       & \UNK                            & 1.90s                       & \FAIL                      & 0.02s                         & \FAIL                     & 0.13s & \UNK   & 900.38s & \TOUT  & 900.35s & \TOUT  \\
    aaron3-1\_gccO0.mcsema.cbe.c        &          & 46.87s                      & \UNK                            & 1.91s                       & \FAIL                      & 0.03s                         & \FAIL                     & 0.13s & \UNK   & 900.42s & \TOUT  & 900.44s & \TOUT  \\
    \bottomrule
\end{tabular}
}
\end{table}

\begin{table}
    \centering
    \caption{\label{tab:lift-term-detail-simplify} Details for termination verification of \Tool\ translated lifted binaries.}
    \resizebox{\textwidth}{!}{\begin{tabular}{lcrcrcrcrcrcrc}
    \toprule
                                                  &          & \multicolumn{2}{c}{\aprove} & \multicolumn{2}{c}{\cpachecker} & \multicolumn{2}{c}{\kittel} & \multicolumn{2}{c}{\twols} & \multicolumn{2}{c}{\ultimate} & \multicolumn{2}{c}{\Tool}                                                     \\
    Benchmark                                     & Expected & Time                        & Result                          & Time                        & Result                     & Time                          & Result                    & Time  & Result & Time  & Result & Time   & Result \\
    \cmidrule(r){1-1}\cmidrule(lr){2-2}\cmidrule(r){3-4}\cmidrule(lr){5-6}\cmidrule(lr){7-8}\cmidrule(lr){9-10}\cmidrule(lr){11-12}\cmidrule(l){13-14}

    Parallel\_gccO0.simplify.cbe.c.instr.c        &          & 1.85s                       & \UNK                            & 900.51s                     & \TOUT                      & 0.01s                         & \FAIL                     & 0.12s & \UNK   & 8.14s & \OK    & 7.27s  & \OK    \\
    Thun-1\_gccO0.simplify.cbe.c.instr.c          &          & 1.63s                       & \UNK                            & 900.55s                     & \TOUT                      & 0.01s                         & \FAIL                     & 0.15s & \UNK   & 9.68s & \OK    & 9.95s  & \OK    \\
    aaron2-1\_gccO0.simplify.cbe.c.instr.c        &          & 1.43s                       & \UNK                            & 900.48s                     & \TOUT                      & 0.02s                         & \FAIL                     & 0.12s & \UNK   & 7.91s & \OK    & 10.32s & \OK    \\
    Pure3Phase-2\_gccO0.simplify.cbe.c.instr.c    &          & 1.86s                       & \UNK                            & 900.43s                     & \TOUT                      & 0.02s                         & \FAIL                     & 0.12s & \UNK   & 8.34s & \OK    & 9.44s  & \OK    \\
    Mysore-2\_gccO0.simplify.cbe.c.instr.c        &          & 1.40s                       & \UNK                            & 900.38s                     & \TOUT                      & 0.02s                         & \FAIL                     & 0.15s & \UNK   & 7.03s & \OK    & 10.25s & \OK    \\
    easy1\_gccO0.simplify.cbe.c.instr.c           &          & 1.92s                       & \UNK                            & 900.29s                     & \TOUT                      & 0.02s                         & \FAIL                     & 0.11s & \UNK   & 6.47s & \OK    & 7.79s  & \OK    \\
    Pure2Phase-1\_gccO0.simplify.cbe.c.instr.c    &          & 1.58s                       & \UNK                            & 900.35s                     & \TOUT                      & 0.02s                         & \FAIL                     & 0.12s & \UNK   & 7.19s & \OK    & 8.00s  & \OK    \\
    aaron3-2\_gccO0.simplify.cbe.c.instr.c        &          & 1.88s                       & \UNK                            & 900.40s                     & \TOUT                      & 0.02s                         & \FAIL                     & 0.12s & \UNK   & 7.22s & \OK    & 7.54s  & \OK    \\
    easy2-1\_gccO0.simplify.cbe.c.instr.c         &          & 1.63s                       & \UNK                            & 900.44s                     & \TOUT                      & 0.01s                         & \FAIL                     & 0.12s & \UNK   & 6.73s & \OK    & 6.81s  & \OK    \\
    aaron3-1\_gccO0.simplify.cbe.c.instr.c        &          & 1.46s                       & \UNK                            & 900.45s                     & \TOUT                      & 0.02s                         & \FAIL                     & 0.14s & \UNK   & 8.09s & \OK    & 12.30s & \OK    \\
    Pure2Phase-2\_gccO0.simplify.cbe.c.instr.c    &          & 1.87s                       & \UNK                            & 900.48s                     & \TOUT                      & 0.02s                         & \FAIL                     & 0.09s & \UNK   & 8.09s & \OK    & 6.96s  & \OK    \\
    easy2-2\_gccO0.simplify.cbe.c.instr.c         &          & 1.79s                       & \UNK                            & 900.47s                     & \TOUT                      & 0.02s                         & \FAIL                     & 0.09s & \UNK   & 8.36s & \OK    & 6.40s  & \OK    \\
    Mysore-1\_gccO0.simplify.cbe.c.instr.c        &          & 1.95s                       & \UNK                            & 900.41s                     & \TOUT                      & 0.01s                         & \FAIL                     & 0.10s & \UNK   & 6.55s & \OK    & 8.63s  & \OK    \\
    aaron2-2\_gccO0.simplify.cbe.c.instr.c        &          & 1.48s                       & \UNK                            & 900.55s                     & \TOUT                      & 0.01s                         & \FAIL                     & 0.11s & \UNK   & 8.19s & \OK    & 8.87s  & \OK    \\
    easy\_debug\_gccO0.simplify.cbe.c.instr.c     &          & 1.52s                       & \UNK                            & 900.54s                     & \TOUT                      & 0.02s                         & \FAIL                     & 0.10s & \UNK   & 8.16s & \OK    & 9.27s  & \OK    \\
    Thun-2\_gccO0.simplify.cbe.c.instr.c          &          & 1.88s                       & \UNK                            & 900.55s                     & \TOUT                      & 0.02s                         & \FAIL                     & 0.10s & \UNK   & 7.02s & \OK    & 7.71s  & \OK    \\
    Singapore\_plus\_gccO0.simplify.cbe.c.instr.c &          & 1.83s                       & \UNK                            & 900.56s                     & \TOUT                      & 0.01s                         & \FAIL                     & 0.15s & \UNK   & 6.96s & \OK    & 6.95s  & \OK    \\
    Singapore-2\_gccO0.simplify.cbe.c.instr.c     &          & 1.45s                       & \UNK                            & 900.51s                     & \TOUT                      & 0.02s                         & \FAIL                     & 0.10s & \UNK   & 7.05s & \OK    & 7.68s  & \OK    \\
    \bottomrule
\end{tabular}}
\end{table}

\section{Detailed Results for LTL of Lifted Binaries}
\label{apx:details:ltl:lift}

\begin{table}[t]
  \caption{\label{tab:lift-ltl-details} Details for LTL lifted binary benchmarks, using vanilla \mcsema{} versus \Tool's translated IR and vanilla \ultimate{} versus \Tool{}'s bitwise-branching (Section~\ref{sec:bitwise}). Gray cells are unsound, green cells use slightly different settings (enabled SBE).}
  \centering
  \resizebox{\textwidth}{!}{\begin{tabular}{llcc@{\hspace{1em}}rcrcc@{\hspace{1em}}rcrc}
    \toprule

                            &                                              &      &  & \multicolumn{4}{c}{Vanilla \mcsema{} IR}    &                           & \multicolumn{4}{c}{\Tool's translated IR}                                                                                                     \\
    \cmidrule(r){5-8}\cmidrule(l){10-13}
                            &                                              &      &  & \multicolumn{2}{c}{\ultimate} & \multicolumn{2}{c}{\Tool} &                              & \multicolumn{2}{c}{\ultimate} & \multicolumn{2}{c}{\Tool}                                         \\

    Benchmark               & Property                                     & Exp. &                               & Time                      & Result                       & Time                          & Result &      & Time       & Result   & Time       & Result\\
    \cmidrule(r){1-1}\cmidrule(lr){2-2}\cmidrule(l){3-3} \cmidrule(r){5-6}\cmidrule(l){7-8} \cmidrule(r){10-11}\cmidrule(l){12-13}
    01-exsec2.s.c           & $\lozenge (\square x=1)$                     & \OK  &  & 4.45s                         & \FAIL                     & 4.56s                        & \FAIL                         &                           & 11.01s & \OK  & 11.23s     & \OK      \\
    01-exsec2.s.f.c.c       & $\lozenge (\square x \neq 1)$                & \NOK &  & 6.31s                         & \FAIL                     & 5.79s                        & \FAIL                         &                           & 9.21s  & \UNK & 10.36s     & \NOK     \\
    SEVPA\_gccO0.s.c        & $\square  ( x > 0 \Rightarrow \lozenge y=0)$ & \OK  &  & 6.31s                         & \FAIL                     & 5.93s                        & \FAIL                         &                           & 11.17s & \UNK & \hlg22.92s & \hlg\OK  \\
    SEVPA\_gccO0.s.f.c      & $\square  ( x > 0 \Rightarrow \lozenge y=2)$ & \NOK &  & 5.16s                         & \UNK                      & 5.25s                        & \UNK                          &                           & 35.46s & \UNK & \hlg14.92s & \hlg\NOK \\
    acqrel.simplify.s.c     & $\square (x=0 \Rightarrow \lozenge y=0)$     & \OK  &  & 5.17s                         & \FAIL                     & 5.38s                        & \FAIL                         &                           & 10.66s & \OK  & 9.00s      & \OK      \\
    acqrel.simplify.s.f.c.c & $\square (x=0 \Rightarrow \lozenge y=1)$     & \NOK &  & 6.06s                         & \FAIL                     & 5.48s                        & \FAIL                         &                           & 21.12s & \UNK & 17.60s     & \NOK     \\
    example1\_fea.s.c       & $\square error = 0$                     & \NOK &  & \hl 13.06s                    & \hl\OK                    & \hl13.44s                    & \hl\OK                        &                           & 6.96s  & \NOK & 6.25s      & \NOK     \\
    example1\_sea.s.c       & $\square error = 0$            & \OK  &  & 11.22s                        & \OK                       & 11.11s                       & \OK                           &                           & 12.02s & \OK  & 8.51s      & \OK      \\
    example2\_fea.s.c       & $\square error = 0$            & \NOK &  & \hl 12.26s                    & \hl \OK                   & \hl13.54s                    & \hl\OK                        &                           & 10.99s & \UNK & 12.66s     & \NOK     \\
    example2\_sea.s.c       & $\square error = 0$            & \OK  &  & 14.24s                        & \OK                       & 14.15s                       & \OK                           &                           & 11.75s & \UNK & 8.94s      & \OK      \\
    example3\_fea.s.c       & $\square error = 0$            & \NOK &  & \hl 10.47s                    & \hl \OK                   & \hl12.02s                    & \hl \OK                       &                           & 8.51s  & \UNK & 8.34s      & \NOK     \\
    example3\_sea.s.c       & $\square error = 0$            & \OK  &  & 11.11s                        & \OK                       & 11.36s                       & \OK                           &                           & 8.70s  & \UNK & 6.67s      & \OK      \\
    exsec2.simplify.s.c     & $\square \lozenge x=1$                       & \OK  &  & 4.92s                         & \FAIL                     & 4.96s                        & \FAIL                         &                           & 6.38s  & \OK  & 5.60s      & \OK      \\
    exsec2.simplify.s.f.c.c & $\square \lozenge x\neq 1$                   & \NOK &  & 4.55s                         & \FAIL                     & 5.22s                        & \FAIL                         &                           & 7.85s  & \NOK & 6.28s      & \NOK     \\
    nondet\_gccO0.s.c       & $\square x > 0$                              & \NOK &  & 4.57s                         & \FAIL                     & 4.92s                        & \FAIL                         &                           & 10.99s & \UNK & 10.59s     & \NOK     \\
    simple3\_gccO0.s.c      & $\lozenge p=1$                               & \OK  &  & 4.97s                         & \FAIL                     & 5.82s                        & \FAIL                         &                           & 91.22s & \UNK & 15.87s     & \OK      \\
    simple3\_gccO0.s.f.c    & $\lozenge p=2$                               & \NOK &  & 4.96s                         & \FAIL                     & 4.87s                        & \FAIL                         &                           & 80.77s & \UNK & \hlg20.85s & \hlg\NOK \\
    \bottomrule
\end{tabular}
}
\end{table}

Table~\ref{tab:lift-ltl-details} is a more detailed version of the table from Sec.~\ref{sec:casestudy}, this time comparing the performance of
\ultimate\ versus \Tool, when applied first to vanilla \mcsema{} IR, and then applied to \Tool's translated IR.
The experimental result shows that our translations significantly eliminates all crashes and possibly unsound results occurring in the verification of the original lifted code. The unsoundness came from the fact that {\ultimate} detected possible memory errors in the code snippet setting run-time environment up in those programs, thus considering the programs to be infeasible with respect to the LTL properties and assuming that they always hold. {\Tool} inherits such behavior from {\ultimate}. Moreover, the results on the simplified lifted code 
These results highlights the effectiveness of our bitwise branching technique which helps {\Tool} to prove the LTL properties of all 17 benchmark correctly while {\ultimate} can only prove 6 of them. 
The possible memory errors were eliminated when our translation flattened the emulation state.

\end{document}